\newcommand{\red}[1]{\textcolor{red}{#1}}
\newcommand{\green}[1]{\textcolor{green}{#1}}
\definecolor{beige}{rgb}{0.95, 0.85, 0.6} 
\definecolor{purple}{HTML}{7600bc}
\definecolor{orange}{HTML}{FF8C00}
\definecolor{darkred}{RGB}{139,0,0}
\definecolor{darkgreen}{rgb}{0.0, 0.5, 0.0}
\newcommand{\Area}[0]{\text{Area}}
\newcommand{\Int}[0]{\operatorname{Int}}
\theoremstyle{definition} 
 \newtheorem{assumption}{Assumption}[section] 
\newtheorem{lemma}{Lemma}[section] 
\newtheorem{corollary}{Corollary}[section] 
\newtheorem{theorem}{Theorem}[section] 
\newtheorem{definition}{Definition}[section]
\crefname{assumption}{Assumption}{Assumptions}
\crefname{proposition}{Proposition}{Propositions} 
\newtheorem{remark}{Remark}[section]
\lstdefinelanguage{Mathematica}{
  morekeywords={
    ClearAll,And,Thread,Abs,Resolve,ForAll,Implies,Reals,Reduce,Not
  },
  sensitive=true,
  morecomment=[s]{(*}{*)},   % Mathematica comments
  morestring=[b]"            % strings
}
\tiny\color{gray},
\title{\boldmath Zoo of Correlation Inequalities in Holography and Beyond}
\author[a, b]{Kyan Louisia}
\author[b,c]{Takato Mori}
\author[b]{Herbie Warner}
\affiliation[a]{King's College London, Strand, London, WC2R 2LS, United Kingdom}
\affiliation[b]{Perimeter Institute for Theoretical Physics, Waterloo, Ontario N2L 2Y5, Canada}
\affiliation[c]{Department of Physics, Rikkyo University, 3-34-1 Nishi-Ikebukuro, Toshima-ku, Tokyo 171-8501, Japan}
\emailAdd{kyan.louisia@gmail.com}
\emailAdd{takato.mori@yukawa.kyoto-u.ac.jp}
\emailAdd{herbie.warner1@outlook.com}
\abstract{
Information-theoretic inequalities often impose nontrivial constraints on holographic states. In this work, we study measurement-based classical and quantum correlations in holography, focusing on the proposed duals of classical correlation $J_W$, quantum discord $D_W$, and one-shot distillable entanglement $E_D$, defined in terms of the entanglement wedge cross section (EWCS). 
We develop a homological framework tailored to inequalities involving multiple EWCSs and Ryu–Takayanagi surfaces, and use it to prove a family of inequalities, including monotonicity and monogamy/polygamy-type relations, as well as one-way strong superadditivity. For strong superadditivity, we additionally confirm its two-way version using Haar random states. We also examine holography-inspired boundary duals in terms of the reflected entropy and provide proofs and counterexamples for their information-theoretic inequalities. 
Taken together, our results provide further evidence for the duality between the EWCS and its proposed boundary counterparts---measurement-based correlations and one-shot distillable entanglement---while also furnishing a unified, rigorous method for proving multi-EWCS inequalities.

}
\begin{document} 
\begin{flushright}
RUP-25-24
\end{flushright}
\maketitle
\flushbottom

\newpage
\section{Introduction}
Quantifying correlations between subsystems is a central objective in both quantum information theory (QI) and quantum gravity (QG). Over the past two decades, this intersection has flourished, particularly through the AdS/CFT correspondence and the black hole information paradox~\cite{Ryu:2006bv,Maldacena:2013xja,MALDACENA_2000,Pastawski:2015qua,Hayden:2016cfa,Cotler:2017erl,Hayden:2007cs}. A landmark result was the Ryu-Takayanagi (RT) formula~\cite{Ryu:2006bv}, which equates the entanglement entropy of a boundary region to the area of a minimal surface in the bulk
called the holographic entanglement entropy (HEE). This relationship showed that entanglement entropy is a key aspect of holographic theories for the emergence of bulk geometry. 

While entanglement entropy fully characterizes correlations in pure states, its extension to mixed states is more subtle. 
In holography, a natural extension of the RT formula to mixed states is the entanglement wedge cross section (EWCS)~\cite{Takayanagi:2017knl}. Defined as the minimal-area surface that splits the entanglement wedge of a bipartite state $\rho_{AB}$ into regions homologous to $A$ and $B$, the EWCS is expected to capture some mixed-state correlation structure in a geometric way, extending the logic of the RT formula.

The EWCS has several proposed boundary duals; many works have connected it to mixed-state entanglement measures involving optimization, such as the entanglement of purification~\cite{Takayanagi:2017knl} and the entanglement of formation~\cite{Mori:2024gwe}. Later, quantities without optimization such as reflected entropy $S_R$~\cite{Dutta:2019geu}, odd entropy~\cite{Tamaoka:2018neu}, and balanced partial entanglement~\cite{Wen:2021qgx} have also been identified as possible boundary duals. However, information theoretic aspects of these holography-inspired quantities remain unclear (although there has been some analysis in monotonicity of reflected entropy~\cite{Hayden:2023yij} and some indicative results with tripartite entanglement~\cite{Hayden:2021gnu,Zou:2020bly}).

In this work, we extend the program of the duality between entanglement and geometry to more general correlations, following~\cite{Mori:2025gqe}. In particular, we focus on classical and quantum correlations such as quantum discord \cite{PhysRevLett.88.017901} and investigate their behavior in both holographic and non-holographic systems.\\

Although entanglement is often identified with quantum correlations, it is only one aspect of ``quantumness''. 
In fact, quantum correlations can exist even in separable states: for instance, the two-qubit state
\begin{equation}
	\rho_{AB} = \frac{\dyad{00}+\dyad{1+}}{2},\quad \ket{+} = \frac{\ket{0}+\ket{1}}{\sqrt{2}} \label{eq:discordant}
\end{equation}
is separable but exhibits correlations beyond classical ones. By contrast, classically correlated states are separable taking the form
\begin{equation}
	\rho^{\mathrm{CC}}_{AB} = \sum_\alpha p_\alpha \dyad{\varphi_\alpha}_A \otimes \dyad{\chi_\alpha}_B,
\end{equation}
where $\{\ket{\varphi_\alpha}\}, \{\ket{\chi_\alpha}\}$ form orthonormal bases of the respective Hilbert spaces. These correlations are entirely classical and encoded in a single probability distribution $p_\alpha$ and associated orthogonal bases on $A$ and $B$.

This type of quantum correlation beyond entanglement is captured by quantum discord \cite{PhysRevLett.88.017901} as we will review later in this paper. While the primary focus of this paper is information theoretic properties of this quantity in AdS/CFT and its boundary dual, it is worth noting that there is a physical motivation to study discord aside from its entanglement properties. For example in scenarios involving decoherence, including thermalization or the infall of measurement devices into black holes or inflating cosmologies, discord reveals aspects of quantum correlations that entanglement fails to track~\cite{Nambu:2011ae,Matsumura:2020uyg,Mori:2025gqe}.\\

Recently, one of the authors proposed bulk duals for classical and quantum correlations~\cite{Mori:2024gwe}, where classical correlation is defined by locally accessible information and quantum correlation is defined by discord. Both are conjectured to admit bulk duals in terms of the EWCS and RT formula. Moreover the study indicates that there exists both classical and quantum correlation in holography, with the latter always exceeding entanglement. The difference between holographic discord and entanglement turns out to be related to distillable entanglement and tripartite entanglement.
In addition, while the reflected entropy itself is not a correlation measure~\cite{Hayden:2023yij}, it was shown that it composes the classical and quantum correlation-like quantities when it is combined with entropies.
This motivates us to ask: is there any characteristic behavior of holographic correlation quantities? Do these obey some basic information theoretic inequalities like monotonicity and monogamy?

Monotonicity implies that a correlation measure should not increase under local operations such as partial trace. This property is a defining feature for a resource theory of correlations. 
It enforces the second law-like structure and governs the convertibility and irreversibility. 
In holography, monotonicity is also understood geometrically: for example, monotonicity of mutual information and EWCS follows from entanglement wedge nesting~\cite{Headrick:2013zda,Takayanagi:2017knl,Akers:2017ttv}, stemming from bulk locality and reconstructability~\cite{Czech:2012bh,Wall:2012uf,Almheiri:2014lwa,Jafferis:2015del,Dong:2016eik}.

Monogamy, by contrast, limits how quantum correlations can be shared. If $A$ is maximally entangled with $B$, it cannot also be entangled with $C$. This principle, tied to the no-cloning theorem, highly constrains quantum theory, e.g. it underpins security in quantum cryptography~\cite{Tomamichel_2013,Khanian:2025twm} and is also fundamental to the black hole information paradox~\cite{Hayden:2007cs}.

Holographic states display distinctive correlation structures. A well-known example is the monogamy of mutual information~\cite{Hayden:2011ag}. Another characteristic feature is that they carry substantial multipartite entanglement while supporting very little classical correlation, which motivates our investigation of further inequalities constraining classical and quantum correlations.

These constraints have geometric roots. Our holographic arguments rely on structures like the homology condition, extremality of HEE, and entanglement wedge nesting, which make inequalities such as subadditivity and strong subadditivity manifest geometrically~\cite{Headrick:2007km}. Beyond these, monogamy of mutual information has been formalized in the holographic entropy cone program~\cite{Bao:2015bfa}, which attempts a systematic classification of all entropy inequalities consistent with the bulk geometry.

However, inequalities involving multiple EWCSs remain less understood. While the EWCS has been linked to quantities such as the entanglement of formation/purification, the reflected entropy, the Markov gap, and the distillable entanglement~\cite{Rains:1998gp}, only a few inequalities have been confirmed. Some, like the non-negativity of the Markov gap~\cite{Hayden:2021gnu} and the vanishing of distillable entanglement~\cite{Mori:2024gwe}, reinforce the idea that holographic states favor multipartite quantum correlations. But generalizations involving multiple EWCS terms are more involved~\cite{Bao:2021vyq, Jain:2022csf}; it is challenging to prove them rigorously due to lack of systematic mathematical tools. This suggests a more delicate structure than for entropy-based measures.

In this work, we investigate this structure by analyzing a family of inequalities involving classical and quantum correlations in holography and their boundary versions. In Section \ref{Section 2}, the definitions of several correlation measures are given both on the boundary and in the bulk. We also give definitions of correlation inequalities here. In Section \ref{sec:assump-tech}, we list our assumptions, introduce some of the technical tools required for proofs, and discuss some topological properties of the EWCS. In Section \ref{Section 3}, we study monotonicity for both correlation types, and in Section \ref{Section 4}, we establish their monogamy and polygamy properties. In Section \ref{Section 5}, we further examine the strong superadditivity of holographic classical correlation, which is conjectured to be dual to distillable entanglement, using geometric arguments and Haar random analysis. 
In Section \ref{Section 6}, motivated from a holographic duality, we map the bulk quantities to boundary ones again by replacing the EWCS with a half of the reflected entropy, defining so-called reflected measures --- analytic and computationally efficient proxies for classical and quantum correlations. We examine their properties like monotonicity and monogamy with various few-qubit counterexamples. See Tables~\ref{tab:cc_ineq} and \ref{tab:qd_ineq} for a summary of the results. Due to the lengths of proofs, we only provide sketches of each in the main body and relegate the complete proofs to the Appendices~\ref{app:code}-\ref{SSA formal proof}. See Appendix~\ref{s: summary of notation} for a summary of notations used throughout.

\section{Classical and Quantum Correlations}
\label{Section 2}

In this section, we review the correlation measures of interest in this work. These include the classical correlation $J(A|B)$ and quantum discord $D(A|B)$, as well as their conjectured gravity duals $J_W(A|B)$ and $D_W(A|B)$ which are defined in terms of the entanglement wedge cross section (EWCS). We also introduce the boundary duals of these geometric quantities based on the reflected entropy, which provide computable, optimisation-free entanglement measures. We then summarize the correlation inequalities of interest such as monotonicity, monogamy and strong superadditivity. Our ultimate goal is to study whether these properties are universal across quantities on both sides of the bulk/boundary duality. 

\subsection{Classical and Quantum Correlation Measures}

First we introduce the three key correlation measures relevant to our analysis: total correlation or mutual information \( I(A:B) \), classical correlation \( J(A|B) \), and quantum discord \( D(A|B) \). 

\paragraph{Mutual Information.}
The (quantum) mutual information \( I(A:B) \) quantifies the total correlation between two subsystems \( A \) and \( B \) of a bipartite state \( \rho_{AB} \), defined as
\begin{equation}
    I(A:B) \vcentcolon= S_A + S_B - S_{AB}, \label{mutual_info_2}
\end{equation}
where \( S_X \) is the von Neumann entropy of the reduced density matrix on subsystem \( X \). In terms of the quantum conditional entropy \( S(A|B) = S_{AB} - S_B \), the mutual information becomes
\begin{equation}
    I(A:B) = S_A - S(A|B).
\end{equation}
Intuitively, this represents a residual entropy ($\sim$uncertainty) of $A$ after quantum conditioning on $B$.

\paragraph{Classical Correlation.}
A different form of mutual correlation can be defined by measuring a subsystem \( B \). Given a positive operator-valued measure (POVM) \( \Pi = \{E_x\}_x \) on \( B \), a conditional state of \( A \) after observing outcome \( x \) is \( \rho_A^x = \frac{1}{p_x} \Tr_B(E_x \rho) \), with probability \( p_x = \Tr(E_x \rho) \). 
The (classical) mutual information $J_\Pi(A|B)$ can be obtained by replacing the quantum conditional entropy in~\eqref{mutual_info_2} with the (classical) conditional entropy, defined as the average residual entropy of $A$ after measuring $B$ with $\Pi$:
\begin{equation}
    J_\Pi(A|B) = S_A - \sum_x p_x S(\rho_A^x).
    \label{eq:clas-cond-ent}
\end{equation}
Intuitively, this quantifies the reduction in uncertainty about \( A \) due to a measurement on \( B \). In this sense, it is occasionally called as the Groenewold-Ozawa information gain or the locally accessible information \cite{Ozawa1986InformationGain}. The classical correlation is defined by maximizing $J_\Pi$ over all possible POVM measurements:
\begin{equation}
    J(A|B) = \max_\Pi J_\Pi(A|B).
    \label{eq:clas-corr}
\end{equation}
It coincides with \( I(A:B) \) only when \( \rho_{AB} \) is quantum-classical (i.e., classical on \( B \)). On the one hand, \( J(A|B) \) can be nonzero even in the absence of entanglement and on the other, classical correlation in this definition may include contributions from entanglement. For example, for a maximally entangled Bell pair, one finds \( J(A|B) = \log 2 \). For a classically correlated state, whose information can be reduced to a single probability distribution like $\rho_{AB}= \sum_i p_i \dyad{i}_A\otimes\dyad{\varphi_i}_B$, one finds $J(A|B)$ equals the Shannon entropy, representing the entropy of classically correlated bits.

The Koashi-Winter relation~\cite{Koashi:2003pgf} provides an alternative expression for the classical correlation in terms of the entanglement of formation $E_F$ between \( A \) and a purification partner \( C \) of \( \rho_{AB} \). Namely,
\begin{equation}
    J(A|B) = S_A - E_F(A:C),
\end{equation}
where \( \rho_{AB} = \Tr_C \dyad{\Psi}_{ABC} \) for arbitrary purification \( \ket{\Psi}_{ABC} \) and the Entanglement of formation is defined as
\begin{equation}
    E_F(A:C) = \inf \sum_x p_x S_A(\dyad{\psi_x}),
    \label{eq:EoF}
\end{equation}
where the infimum is taken over all purifications of \( \rho_{AC} = \sum_x p_x \dyad{\psi_x}_{AC} \). 

\paragraph{Quantum Discord.}
We define the quantum correlation between \( A \) and \( B \) as the difference between the total and classical correlation:
\begin{equation}
    D(A|B) = I(A:B) - J(A|B),
    \label{eq:discord}
\end{equation}
which is known as the Quantum Discord~\cite{PhysRevLett.88.017901}. It captures nonclassical correlations that survive even in separable states, and like the classical correlation it is in general asymmetric with respect to which party is being measured.

We note that, by construction, both \( J(A|B) \) and \( D(A|B) \) involve an optimization over all POVM measurements. While the optimal POVM is known to be rank-one~\cite{datta2008studies}, it is not necessarily orthogonal, and there are examples where projective measurements are suboptimal~\cite{Chen_2011,Galve_2011}. Computing discord is NP-complete in general~\cite{Huang_2014}, making it infeasible to evaluate for high-dimensional systems. 

In this paper, we address this challenge by exploring bulk and boundary duals of these measures. These duals allow for efficient computation of classical and quantum correlation quantities, even in a high-dimensional Hilbert space.

\subsection{Holographic duals}
In this subsection, we introduce the holographic counterparts of the classical correlation \( J(A|B) \) and quantum discord \( D(A|B) \) at leading order in $G_N$ and in static (time-independent) spacetimes.

\paragraph{Holographic Entanglement Entropy.}
In the static (or more generally time reflection symmetric) cases, the entanglement entropy of a subregion $A$ in a boundary theory is given by the Ryu-Takayanagi formula~\cite{Ryu:2006bv}. The bulk dual is called the holographic entanglement entropy (HEE), which is proportional to the minimal area of a codimension-two\footnote{We will always use codimension with respect to the full spacetime (not just the Cauchy slice).} surface homologous to $A$:
\begin{equation}
    S_A = \min_{\gamma_A} \frac{\mathrm{Area}(\gamma_A)}{4G_N}.
\end{equation}

\paragraph{Entanglement Wedge Cross Section.}
The EWCS between $A$ and $C$ is proportional to the minimal area of a codimension-two surface separating the entanglement wedge of \( AC \) (denoted by \( \mathcal{E}({AC}) \)) into two bulk subregions homologous to \( A \) and \( C \). Namely,
\begin{equation}
	E_W(A:C) = \min_{\Gamma_{A:C}} \frac{\mathrm{Area}(\Gamma_{A:C})}{4G_N},
    \label{eq:EWCS}
\end{equation}
where the minimization is over all codimension-two surfaces \( \Gamma_{A:C} \subset \mathcal{E}_{AC} \) that bisect the entanglement wedge.

The EWCS is conjectured to be dual to several different boundary quantities related to mixed-state entanglement, including the entanglement of formation $E_F$, entanglement of purification $E_P$, and a half of reflected entropy $S_R$~\cite{Dutta:2019geu, Bao:2017nhh, Takayanagi:2017knl}.

\paragraph{Bulk duals for classical and quantum correlations.}
A recent proposal by one of the authors~\cite{Mori:2025gqe} conjectures that the classical and quantum correlations admit bulk duals in holography. Given a purification \( \ket{\Psi}_{ABC} \) of a bipartite mixed state \( \rho_{AB} = \Tr_C \dyad{\Psi} \), the bulk duals of classical correlation $J(A|B)$ and quantum discord $D(A|B)$ are proposed as
\begin{equation}
	J_W(A|B) \equiv S_A - E_W(A:C), \quad D_W(A|B) \equiv S_B - S_C + E_W(A:C). \label{Holographic definitions}
\end{equation}
This is supported either by holographic measurements or the $E_F=E_W$ conjecture, utilizing a disentangled basis along the minimal surface for $B$. We refer the readers to~\cite{Mori:2024gwe} for more details.

\paragraph{Reflected Entropy.}
The reflected entropy is defined as
\begin{equation}
    S_R(A:B) = S_{AA^\ast}(\ket{\rho^{1/2}}),
    \label{eq:reflected entropy}
\end{equation}
where \( \ket{\rho^{1/2}}_{AA^\ast BB^\ast} \) is the \textit{canonical purification} of \( \rho_{AB} \), which is found using the following procedure. Let $\{\lambda_n,\ket{n}\}_n$ be the set of eigenvalues and eigenvectors of the density matrix $\rho_{AB}$. Here we include zero eigenvalues as well.
Then, the canonical purification is given by
\begin{equation}
    \ket{\rho^{1/2}}_{AA^*BB^*} = \sum_n \sqrt{\lambda_n} \ket{n}_{AB} \otimes \ket{n}_{A^*B^*},
    \label{eq:cano-purif}
\end{equation}
where $\mathcal{H}_{AB}\rightarrow\mathcal{H}_{A^*B^*}$ is an endomorphism.

In this work, we also aim to propose an alternative measure to classical and quantum correlations without optimizations. Employing a particular boundary dual of the EWCS, we aim to deduce a boundary quantity out of the holographic measures. Since our aim is to circumvent a difficult optimization, we employ the duality between the EWCS and the reflected entropy,
\begin{equation}\label{eq: ew = 2sr}
    S_R(A:B) = 2E_W(A:B),
\end{equation}
which holds at $\mathcal{O}(1/G_N)$ \cite{Dutta:2019geu}. The subleading correction could be expected depending on the quantum definition of the EWCS.

\paragraph{One-shot distillable entanglement.}
Holographic classical correlation $J_W(A|B)$ (as well as its analog of Haar random states) is proposed to admit another boundary interpretation --- one-shot, one-way distillable entanglement $E_D^{[\text{1WAY}]}(A|B)$~\cite{Mori:2024gwe}.
It is defined as the number of EPR pairs that can be extracted from a bipartite state via one-way local operations and classical communication (LOCC) up to a sufficiently small error (See~\cite{Mori:2024gwe} for a precise definition). Symmetrizing this definition leads to the one-shot distillable entanglement under one-way LOCC in both directions:
\begin{equation}
    E_D^{[\text{1WAY}]}(A:B) = \max(J_W(A|B), J_W(B|A))=:J_W(A:B),
    \label{eq:1shot-ED}
\end{equation}
An important question is whether one-way LOCC outperform two-way LOCC or not. 
In general, two-way LOCC is a larger set of operations than one-way LOCC in both directions as the latter does not feedforward the measurement outcome to the subsequent LOCC in the other direction. While we do not provide a direct answer to this question, we give  supporting evidence by considering the strong superadditivity for both one-way and two-way distillable entanglement.

\subsection{Correlation Inequalities}\label{sec:mono}
We now define the functional properties we examine in this work. 
Let $\sigma(A|B)$ be any (potentially asymmetric) bipartite correlation measure between arbitrary subsystems $A$ and $B$. We denote other subsystems by $C,D$. Note that $\rho_{ABCD}$ is not necessarily pure.

\begin{definition}[Monotonicity]
A measure $\sigma(A|B)$ is \emph{monotone} under partial trace with respect to the measured party if
\begin{align}
    \sigma(A|BC) \ge \sigma(A|B)
\end{align}
and monotone under partial trace with respect to the unmeasured party if
\begin{align}
    \sigma(AC|B) \ge \sigma(A|B).
\end{align}
\end{definition}

\begin{definition}[Monogamy]
A measure $\sigma(A|B)$ is \emph{monogamous} with respect to the measured party if
\begin{equation}
    \sigma(A|B) + \sigma(A|C) \le \sigma(A|BC),
\end{equation}
and monogamous with respect to the unmeasured party if
\begin{equation}
    \sigma(A|B) + \sigma(C|B) \le \sigma(AC|B).
\end{equation}
We call $\sigma$ \emph{polygamous} if the opposite inequality is true. 
\end{definition}

\begin{definition}[Strong superadditivity]
A measure $\sigma(A|B)$ is \emph{strongly superadditive} if
\begin{equation}
    \sigma(AC|BD) \ge \sigma(A|B)+\sigma(C|D).
    \label{eq:strong-sa}
\end{equation}
The strong superadditivity follows from the inclusion relation of operations for optimized quantities. It implies monogamy by taking $A=C$ or $B=D$. 
\end{definition}

While entanglement measures are often monogamous and monotone, the same is not true for classical correlation or quantum discord. We illustrate this with an explicit counterexample. Consider a tripartite state
\begin{equation}
    \rho_{ABC} = \frac{1}{2} \dyad{000} + \frac{1}{2} \dyad{+11},
\end{equation}
where $\ket{+} = (\ket{0} + \ket{1})/\sqrt{2}$. Its marginals are
\begin{equation}
    \rho_{BA} = \rho_{CA} = \frac{1}{2} \dyad{00} + \frac{1}{2} \dyad{1+}, \qquad \rho_{BC} = \frac{1}{2} \dyad{00} + \frac{1}{2} \dyad{11}.
\end{equation}
One finds that $D(C|AB) = 0$ but $D(C|A) > 0$, violating monotonicity with respect to the measured (second) party. Furthermore, $D(C|A)=D(B|A)=D(BC|A)>0$ while $D(C|B) = D(C|AB) = 0$, showing discord is not generally monogamous for either party.

Although quantum discord is known to be monotone with respect to the unmeasured (first) party and classical correlation is monotone in both arguments~\cite{Vedral_2003,Streltsov_2011}, monogamy is not guaranteed~\cite{PhysRevA.85.040102}. This means strong superadditivity is neither guaranteed.
Holographic duals, however, may obey stronger constraints.\footnote{For example, mutual information is known to be monogamous for holographic states while it is not necessarily monogamous for non-holographic states~\cite{Hayden:2011ag}.} (Note the above counterexample is a classically correlated state; such a state is unlikely holographic~\cite{Susskind:2014yaa, Nezami:2016zni, Dong:2021clv,Mori:2024gwe,Mori:2025gqe}.) Testing these inequalities provides not only a consistency check for the proposed dualities but also additional characteristic features of holographic states.

\section{Assumptions and Technical Tools}\label{sec:assump-tech}
In this section, we summarize the assumptions and definitions we use in our proofs. In particular, we define various notations of AdS/CFT topologically in Section~\ref{sec:conv-def} and demonstrate a proof of a holographic inequality in Section~\ref{sss: structure of proofs} with our notations. In Section~\ref{sec:topo-ewcs} we provide a topological characterization of the EWCS by providing some key definitions and properties. See Appendix \ref{s: summary of notation} for a summary of notation and Appendix \ref{s: complete proofs} for additional notation used in the complete proofs. 

\subsection{Assumptions for holographic proofs}
Before turning to the holographic proofs of the aforementioned functional properties, we briefly summarize the key assumptions in this work.

\begin{assumption} [Leading-order analysis]
    In this paper, we always work with the semiclassical limit where Newton's constant $G_N\rightarrow 0$. In this limit, entropic quantities like $I$, $J_W$, and $D_W$ are decomposed into the area term proportional to $G_N^{-1}$ and the bulk matter term which is usually $O(1)$. We do not consider any backreaction from the bulk matter or nonperturbative corrections in this paper.
\end{assumption}

\begin{assumption}[Time-reflection symmetry]\label{assump:sym}
    Generally speaking, all the entropic quantities can be defined in a time-dependent setting using the Hubeny-Rangamani-Takayanagi (HRT) formula~\cite{Hubeny:2007xt} or the maximin prescription~\cite{Wall:2012uf}. However, in this paper, we restrict to the time-reflection symmetric setup, where we can rely on the Ryu-Takayanagi (RT) formula. This is because when one considers a linear combination of multiple entropic quantities (HEE and EWCS), they do not necessarily lie on the same HRT/maximin surface and a mere comparison on a single surface may not be justified.
\end{assumption}

\begin{assumption}[Minimality and Homology Condition]
    In the calculation of HEE and EWCS,
all relevant surfaces are homologous to the boundary subregions in the entanglement wedge of interest and are chosen to minimize the area functional. 
\end{assumption}

\begin{assumption}[Entanglement Wedge Nesting (EWN)]
     For nested boundary regions, their corresponding entanglement wedges also form a nested structure in the bulk. In particular, the monotonicity of the EWCS 
\begin{equation}
    E_W(A:BC)\ge E_W(A:B)
\end{equation}
 follows from EWN~\cite{Czech:2012bh,Akers:2016ugt,Wall:2012uf}. See Figure \ref{Monotonicity Measured party} for an example.
\end{assumption}

\begin{assumption}
[No multiple intersections of geodesics on an AdS Cauchy slice]\label{thm: No multiple intersections of geodesics on an AdS Cauchy slice} This is a standard property of holographic spacetimes under our other assumptions where any two distinct length minimizing geodesics on the slice intersect in the \emph{interior} of the slice at most once. If they intersect more than once, say at points $x$ and $y$, then they coincide between $x$ and $y$. 

This is of course necessary information theoretically. If the extremal surfaces associated to different boundary subregions intersect with each other, the entanglement wedge reconstruction can transmit bulk quantum information inside the intersection of the entanglement wedges to either boundary subregion simultaneously. However, this violates the no-cloning theorem in QI which prohibits cloning unknown quantum information.
\end{assumption}

\begin{assumption}[Bulk boundaries]
    In the definitions and proofs below, we assume that the asymptotic boundaries are the only boundaries of the bulk dual to a pure state. The arguments, however, extend straightforwardly to bulks with additional non-asymptotic boundaries, such as cutoff surfaces or end-of-the-world branes.
In the former case, the boundary Hilbert space is defined on both the asymptotic boundaries and any non-asymptotic boundaries. In the latter case, the brane is dynamically constrained and does not introduce independent degrees of freedom beyond those of the asymptotic boundary~\cite{Takayanagi:2011zk}; accordingly, surfaces that lie along the brane contribute no area term even though they geometrically form part of the bulk boundary.
In addition, when a non-asymptotic boundary admits a naive RT surface that would lie outside the physical bulk, the entanglement wedge must be replaced by the generalized entanglement wedge~\cite{Bousso:2022hlz}.
\end{assumption}

\begin{assumption}[Disjoint boundary subsystems]
    When proving holographic correlation inequalities for some boundary state $ABC\dots$ we will always assume each subsystem is non-overlapping unless otherwise stated. That is $A\cap B$ is empty or codimension two and so on, with this intersect only along the subsystem boundaries on the cut-off surface. 
\end{assumption}

\begin{remark}
Although we use the Poincar\'e disk picture of AdS$_3$/CFT$_2$ to illustrate the proofs, we emphasize that our methodology is general and is valid regardless of dimensionality.
\end{remark}

\subsection{Conventions and Definitions}\label{sec:conv-def}
Since our proofs will be based on topological arguments, here we define some necessary terminology.

\begin{figure}[ht]
    \centering
\scalebox{1}{
\input{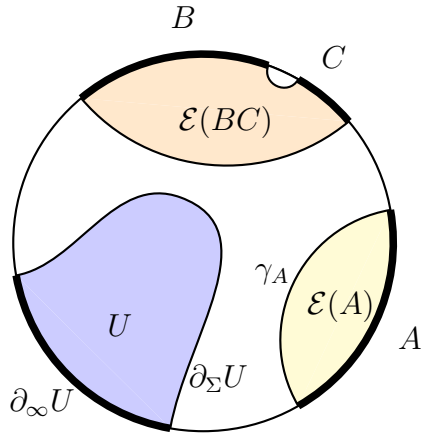}
}
    \caption{Example of definitions given below on the Poincar\'e disk.}
    \label{fig: example of definitions}
\end{figure}

\begin{definition}[Bulk]\label{def:bulk}
The \emph{bulk slice} is a fixed time-reflection symmetric Cauchy slice, denoted by \(\Sigma\).
All geometric objects (subregions, hypersurfaces) considered in this work are taken to lie in \(\Sigma\).
\end{definition}

\begin{definition}[Boundary notation]\label{def:boundary}
For a codimension-one surface $U\subseteq\Sigma$, we write \(\partial_\Sigma U\) for its boundary \emph{within the bulk slice} \(\Sigma\) (i.e.\ the bulk Cauchy slice excluding the bulk asymptotic boundary). We write \(\partial_\infty U\) for the boundary lying on the asymptotic boundary of $\Sigma$. Therefore, all the boundary subregions are a subset of $\partial_\infty\Sigma$. By default, we use a shorthand notation, \(\partial\equiv\partial_\Sigma\), unless otherwise stated.
\end{definition}

\begin{definition}[Subsystem complement]
    Given a subsystem $A$, we write $\bar{A}$ to be its complement such that $A\cup\bar{A}$ is a pure state.
\end{definition}

\begin{definition}[Entanglement wedge]\label{def:ewedge}
For a boundary subsystem \(A\), the \emph{entanglement wedge} \(\mathcal E(A)\subseteq\Sigma\) is the bulk region whose asymptotic and bulk boundaries satisfy
\[
\partial_\infty \mathcal E(A) = A, \qquad
\partial_\Sigma \mathcal E(A) = \gamma_A,
\]
where \(\gamma_A\) is the RT surface of \(A\). For two disjoint boundary subsystems \(A\) and \(B\), the \emph{joint entanglement wedge} \(\mathcal E(AB)\subseteq\Sigma\) is defined analogously using the RT surfaces \(\gamma_{AB}\) again with equality iff the state on $AB$ is pure.
\end{definition}

\begin{definition}[Bridged wedges]\label{def:bridged-wedges}
We can categorize the various types of joint entanglement wedges as follows. For two disjoint boundary subsystems \(A\) and \(B\), with
\[
A = \bigsqcup_i A_i, \qquad B = \bigsqcup_j B_j, \qquad A\cap B = \varnothing, \qquad A_i \cap A_j = A_i \delta_{ij},  \qquad B_i \cap B_j = B_i \delta_{ij}, 
\]
we can decompose \(\mathcal E(AB)\) into its connected (in the standard topological sense) components
\[
\mathcal E(AB) = \bigsqcup_\alpha W_\alpha.
\]
For each component \(W_\alpha\) consider its asymptotic boundary
\(\partial_\infty W_\alpha \subseteq A\cup B\).
We classify components as follows:
\begin{itemize}
\item \(W_\alpha\) is an \emph{\(A\)-only component} iff
      \(\partial_\infty W_\alpha \subseteq A\).
\item \(W_\alpha\) is a \emph{\(B\)-only component} iff
      \(\partial_\infty W_\alpha \subseteq B\).
\item \(W_\alpha\) is an \emph{\(A\)–\(B\) bridged component} iff its asymptotic
      boundary meets both sides,
      \[
      \partial_\infty W_\alpha \cap A \neq \varnothing,
      \qquad
      \partial_\infty W_\alpha \cap B \neq \varnothing.
      \]
\end{itemize}
The union of all \(A\)–\(B\) bridged components,
\[
\mathcal E_{\mathrm{brid}}(A{:}B)
\;\vcentcolon=\;
\bigsqcup_{\alpha:\, W_\alpha\ \text{bridged}} W_\alpha,
\]
is called the \emph{(A{:}B)-bridged entanglement wedge}.
We say that \(A\) and \(B\) have a \emph{bridged entanglement wedge} if
\(\mathcal E_{\mathrm{brid}}(A{:}B)\neq\varnothing\).\footnote{The relation to a connected (entanglement) wedge used in literature is as follows. If the entanglement wedge of two boundary subsystems (composed of multiple boundary subregions) is connected (i.e. non-factorizing), then there is at least one bridged component among the boundary subregions. Thus, a connected entanglement wedge contains a bridged entanglement wedge. However, they are not equivalent as we specifically refer the components connecting a portion of $A$ and a portion of $B$ as \emph{bridged} among various components of the connected entanglement wedge.}

Moreover we say for some $A_i\in A$ that $A_i$ is an \emph{$A$-only component of $AB$} iff there exists an $A$-only component $W_\alpha$ such that $A_i\subseteq\partial_\infty W_\alpha$. We define $B_i$ as a \emph{$B$-only component} similarly and say $A_i$ is a \emph{bridged component of $AB$} iff there exists an $A-B$ bridged component $W_\alpha$ such that $A_i\subseteq\partial_\infty W_\alpha$. See Figure \ref{fig: bridged wedges} for an example.
\end{definition}

\begin{figure}[ht]
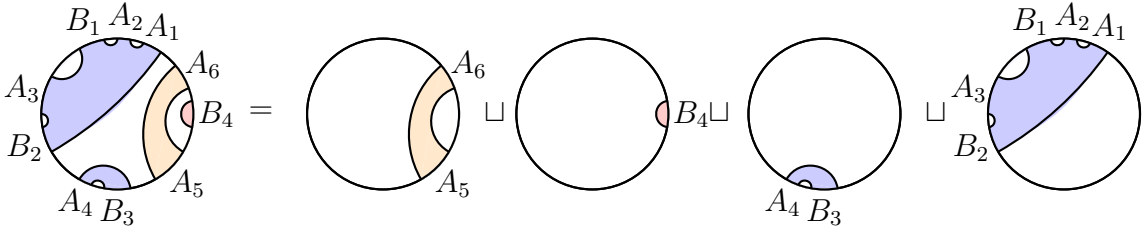

    \centering
\scalebox{1}{
\hspace{-0.5cm}
\input{Sketch_Figures/Bridged_wedges/fig1}
\hspace{-0.5cm}
\begin{tikzpicture}
   \node at (0,-0.5) {$=$};
  \node at (0,-2) { };
\end{tikzpicture}
\hspace{-0.6cm}
\input{Sketch_Figures/Bridged_wedges/fig2}
\hspace{-0.9cm}
\begin{tikzpicture}
   \node at (0,-0.5) {$\sqcup$};
  \node at (0,-2) { };
\end{tikzpicture}
\hspace{-0.9cm}
\input{Sketch_Figures/Bridged_wedges/fig3}
\hspace{-0.7cm}
\begin{tikzpicture}
   \node at (0,-0.5) {$\sqcup$};
  \node at (0,-2) { };
\end{tikzpicture}
\hspace{-0.8cm}
\input{Sketch_Figures/Bridged_wedges/fig6}
\hspace{-0.9cm}
\begin{tikzpicture}
   \node at (0,-0.5) {$\sqcup$};
  \node at (0,-2) { };
\end{tikzpicture}
\hspace{-0.5cm}
\input{Sketch_Figures/Bridged_wedges/fig5}
}
    \caption{Decomposition of $\mathcal{E}(AB)$ into its $A$-only component in orange, $B$-only component in red and a disjoint bridged entanglement wedge $\mathcal{E}_{\text{brid}}(A:B)$ in blue (disjoint as the last two panels are both bridged but disjoint from each other). We have $\{A_5,A_6\}$ are $A$-only components of $AB$, $B_4$ a $B$-only component of $AB$, and $\{A_1,A_2,A_3,A_4,B_1,B_2,B_3\}$ bridged components of $AB$.}
    \label{fig: bridged wedges}
\end{figure}

\begin{remark}[Bridged wedges and correlation measures]\label{remark: bridged wedges}
    Given some (mixed) state on  $AB$, we generally have $\mathcal E(AB) =\sqcup_\alpha W_\alpha,$ with each $W_\alpha$ some connected codimension-one surface. Identify all $A$-only components within this decomposition and label the subset $\mathcal{A}\subseteq A$ such that $\mathcal{A}$ is the full set of $A$-only components in $AB$. Define $\mathcal{B}$ similarly. We can then find the $AB$ bridged components $\mathcal{F}_A = A\setminus \mathcal{A}$ and $\mathcal{F}_B = B\setminus \mathcal{B}$ such that 
 $\mathcal{F}_A$ is the $AB$ bridged components of $A$, and $\mathcal{F}_B$ is the $AB$ bridged components of $B$. By defining $\mathcal{F}=\mathcal{F}_A\cup\mathcal{F}_B$, it then follows that
    \begin{equation}
        \mathcal{E}(AB) = \mathcal{E}(\mathcal{A}) \sqcup \mathcal{E}(\mathcal{B}) \sqcup \mathcal{E}(\mathcal{F}),
    \end{equation}
    and so entropies split additively to
    \begin{equation}
        S_{AB} = S_{\mathcal{A}} + S_\mathcal{B} + S_\mathcal{F}, \quad \text{and} \quad E_W(A:B) = E_W(\mathcal{F}_A:\mathcal{F}_B).
    \end{equation}
    For example in Figure \ref{fig: bridged wedges} we have $S_{AB} = S_{A_5A_6}+S_{B_4}+(S_{A_1A_2A_3B_1B_2}+S_{A_4B_3})$. We have two terms from the bridged component as $\mathcal{E}_{\text{brid}}(AB)$ is composed of the corresponding two disjoint components:
    \begin{equation}
    \mathcal{E}_{\text{brid}}(AB)=\mathcal{E}(\mathcal{F})=\mathcal{E}(A_1A_2A_3B_1B_2)\sqcup\mathcal{E}(A_4B_3).
    \end{equation}

    Due to EWN, $\mathcal{E}(A)\subset\mathcal{E}(AB)$ when $B$ is non-empty, thus $A$-only components in $AB$ disconnect to $AB$-bridged components. 
   
    This leads to
    \begin{equation}
        S_A = S_\mathcal{A} + S_{\mathcal{F}_A}.
    \end{equation}
    That is, when only considering $\mathcal{E}(A)$, $\mathcal{A}$ and $\mathcal{F}_A$ remain unbridged in $A$. The same argument holds for $B$.
\end{remark}

\begin{definition}[Homologous]\label{def:homology}
Let \(A\) be a boundary subsystem and let \(U\subseteq\Sigma\) be a codimension-one bulk \emph{subregion}. 
\begin{enumerate}[label=(\alph*)]
\item We write \(U\sim A\) to mean that $\partial_\infty U=A$ and $\partial_\Sigma U$ is homologous to $A$.
That is, \(U\) is a codimension-one bulk region whose only asymptotic boundary component is \(A\), and it is otherwise bounded by a closed codimension two bulk surface in \(\Sigma\). 

\item We write \(U\sim_V A\), with $V\subseteq\Sigma$ a codimension-one bulk subregion, to mean the above definition  but the replacement of $\Sigma$ to $V$. 

\item For a state on \(AB\), we write \(U\sim_{AB} A\) to mean \(U\sim_{\mathcal E(AB)} A\), i.e.\ \(\partial_\Sigma U\) bounds \(A\) inside the entanglement wedge \(\mathcal E(AB)\).
\end{enumerate}
\label{Homology}
\end{definition}

\begin{definition}[Interior]
    We denote $\operatorname{Int}\mathcal{E}(A)$ as the open bulk interior of the wedge, i.e.\ $\mathcal{E}(A)\setminus\partial\mathcal{E}(A)$.
    \label{Interior}
\end{definition}

\subsection{Example of a holographic proof}\label{sss: structure of proofs}

Let us consider a simple inequality to demonstrate our strategy for proving a holographic inequality using the notations introduced above. We prove the following inequality for a mixed state on $AB$:
\begin{equation}\label{eq: simple inequality to prove}
    S_{AB} + E_W(A:B) \geq \text{max}(S_A,S_B),
\end{equation}
with an example configuration in Figure \ref{Fig: structure of proofs}. 

\begin{figure}[ht]
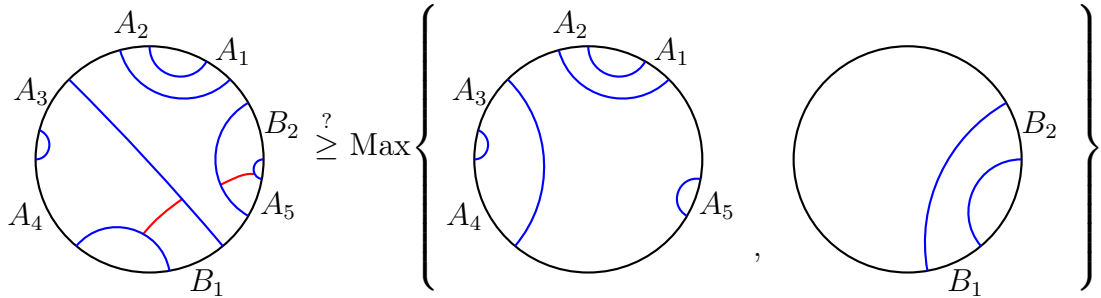

    \centering
\scalebox{1}{
\input{Sketch_Figures/Structure_of_Proofs/LHS}
\hspace{-0.5cm}
\begin{tikzpicture}
   \node at (0,-0.5) {$\stackrel{?}{\geq}$ Max$\left\{\rule{0pt}{5em}\right.$};
   \node at (0,-2.5) { };
\end{tikzpicture}
\hspace{-0.5cm}
\input{Sketch_Figures/Structure_of_Proofs/RHS}
\hspace{-0.5cm}
\begin{tikzpicture}
   \node at (0,-2) {,};
   \node at (0,-2.5) { };
\end{tikzpicture}
\hspace{-0.5cm}
\input{Sketch_Figures/Structure_of_Proofs/RHS2}
\hspace{-0.5cm}
\begin{tikzpicture}
   \node at (0,-0.5) {$\left.\rule{0pt}{5em}\right\}$};
   \node at (0,-2.5) { };
\end{tikzpicture}
}
    \caption{Holographic proof of \eqref{eq: simple inequality to prove}. On the LHS, blue curves are $\gamma_{AB}$ and red curves are $\Gamma^W_{A:B}$. On the RHS, they are $\gamma_A$ and $\gamma_B$ respectively. 
    }
    \label{Fig: structure of proofs}
\end{figure}
\begin{proof}
The proof is in two steps. First, we decompose the LHS and RHS to cancel the disjoint components between two sides. Then, we simply cut and glue various surfaces from the LHS together to serve as upper bounds for the terms on the RHS.

\paragraph{Step 1: Isolating $A$-only components.} We write $A = \cup A_i$ and $B = \cup B_i$, and denote $C$ as the purifier of $AB$. Without loss of generality we can assume $S_A \geq S_B$ as the equation is symmetric under $A\leftrightarrow B$. Generally $A$ can consist of multiple disjoint boundary subsystems and so we decompose $A=\mathcal{A}\cup\mathcal{F}$ with $\mathcal{F}_A$ the $AB$ bridged components of $A$, and $\mathcal{A}$ the $A$-only components in $AB$. For example in Figure \ref{Fig: structure of proofs} we have $\mathcal{F}_A=A_3\cup A_4 \cup A_5$ and $\mathcal{A}=A_1\cup A_2$. Then, entropies split additively to 
\begin{equation}\label{eq: entropy splitting}
S_{AB} = S_{\mathcal{A}} + S_{\mathcal{F}_A\cup B}, \quad S_{A} = S_{\mathcal{F}_A} + S_\mathcal{A},
\end{equation}
with the second equality following from the first by EWN (see Remark \ref{remark: bridged wedges}). For example, using the shorthand notation $S_{ij\dots} = S_{A_iA_j\dots}$ and similarly for the entanglement wedge $\mathcal{E}_{ij\dots}$, in Figure \ref{Fig: structure of proofs}, $S_A = S_{12} + S_{34}+S_5$, as these have disjoint wedges.

Returning to the general case, we can substitute \eqref{eq: entropy splitting} into our target inequality, and using that $E_W(A:B) = E_W(\mathcal{F}_A:B)$ as $\mathcal{A}=A\setminus \mathcal{F}_A$ is an $A$-only component of $AB$, we obtain the inequality
\begin{equation}
S_{\mathcal{F}_A\cup B} + E_W(\mathcal{F}_A:B) \stackrel{?}{\geq} S_{\mathcal{F}_A}.
\label{eq:ex-statement2}
\end{equation}
This is the same as \eqref{eq: simple inequality to prove} but with the constraint that $\mathcal{F}_AB$ has no $\mathcal{F}_A$-only components. Note it may consist of multiple disjoint \emph{bridged} wedges, as in the case of Figure \ref{Fig: structure of proofs} where $\mathcal{E}_{\mathcal{F}_AB}=\mathcal{E}_{{34} B_1}\sqcup\mathcal{E}_{{5}B_2}$. Thus the contribution from $\mathcal{A}$ is exactly canceled between the LHS and RHS. Hence we rewrite $\mathcal{F}_A\to A$ absorbing the original $\mathcal{A}$ into our purifier $C$, and can now assume there are no $A-$only components in $AB$.

\paragraph{Step 2: Proving \eqref{eq:ex-statement2}.}
Let us define the \emph{LHS surface} $\mathbf{L}$ so that
\[
    \frac{\Area(\mathbf{L})}{4G_N} = \text{LHS}
\]
of a given inequality. In the current case, it is given by
\begin{equation}\label{eq: first mathbfl definition}
\mathbf{L} = \gamma_{AB} \cup \Gamma^W_{A:B},
\end{equation}

where $\gamma_{AB}$ is the RT surface of $AB$ and $\Gamma^W_{A:B}$ defines the EWCS of $(A:B)$ as $\gamma_{AB}$ and $\Gamma^W_{A:B}$ have no intersection (except measure zero sets).

Now, consider $\mathcal{E}(AB)\setminus\Gamma^W_{A:B}$. By the definition of EWCS this splits $\mathcal{E}(AB)$ into an $A$-sided region $U_A$ and a $B$-sided region $U_B$. In our notation this is the statement that $\partial_\infty U_A = A$ and $\partial_\infty U_B=B$. As it stands $U_A$ is a codimension-one bulk subregion with a closed boundary along the original arc of $\gamma_{AB}$ from which it was formed but an open arc along $\Gamma^W_{A:B}$ where it was split from $U_B$. Thus we can close $U_A$ in $\Sigma$ by taking
\begin{equation}\label{eq: closure of ua}
    {U}_A^{\mathrm{cl}} = U_A \cup \Gamma^W_{A:B},
\end{equation}
which is just adjoining the boundary $\Gamma^W_{A:B}$ to $U_A$ along its open edge. 

This is illustrated in Figure~\ref{fig: structure of proof 2} based on the example (Figure~\ref{Fig: structure of proofs}), where $U_A=U_{34}\sqcup U_5$.

\begin{figure}[ht]
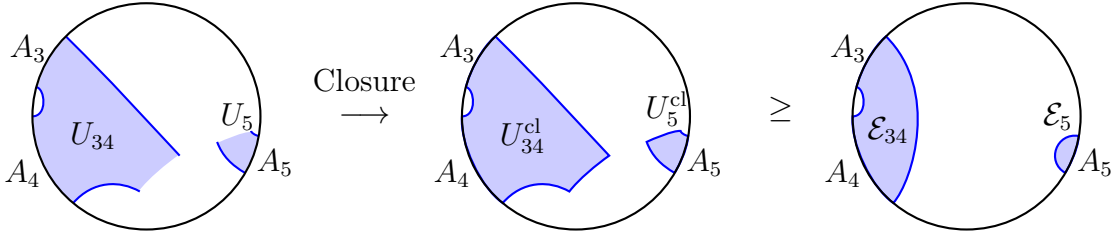

    \centering
\scalebox{1}{
\input{Sketch_Figures/Structure_of_Proofs/LHS_2}
\begin{tikzpicture}
   \node at (-0.5,-0.7) {$\longrightarrow$};
  \node[overlay, anchor=south] at (-0.5,-0.5) {Closure};
  \node at (0,-2.5) { };
\end{tikzpicture}
\input{Sketch_Figures/Structure_of_Proofs/RHS_2}
\begin{tikzpicture}
   \node at (0,-0.6) {$\geq$};
  \node at (0,-2.5) { };
\end{tikzpicture}
\input{Sketch_Figures/Structure_of_Proofs/RHS_2b}
}
    \caption{Decomposition of Figure \ref{Fig: structure of proofs} into $A$-homologous regions. We compute the closure via \eqref{eq: closure of ua} which just glues $\Gamma^W_{A:B}$ to $U_A$.
    }
    \label{fig: structure of proof 2}
\end{figure}

Then $\partial U^\mathrm{cl}_A\subset\mathbf{L}$ as its boundary is just a subset of $\gamma_{AB}$ glued to $\Gamma^W_{A:B}$. Moreover as $\gamma_{AB}$ separates $U_A$ from $C$, and $\Gamma^W_{A:B}$ separates $A$ from $B$ within $\mathcal{E}(AB)$ it must be that $U^\mathrm{cl}_A\sim A$. Thus $\partial U^\mathrm{cl}_{A}$ is homologous to $A$ and so by RT minimality and that $\partial U^\mathrm{cl}_A\subset\mathbf{L}$ we have
\[
S_{AB} + E_W(A:B) = \frac{\Area(\mathbf{L})}{4G_N}\geq\frac{\Area(\partial U^\mathrm{cl}_A)}{4G_N} \geq S_A,
\]
proving the claim.
\end{proof}

\subsection{Properties of the Entanglement Wedge Cross Section}\label{sec:topo-ewcs}

To permit some of our later proofs we must first establish, among the aforementioned qualities (e.g.\ EWN), some  properties of the EWCS. In particular, as evidenced in the proof of equation \eqref{eq: simple inequality to prove}, we need to know when a surgically constructed surface, can be used to upper bound a particular correlation measure; that is whether it satisfies the topological requirements of said measure, but is not necessarily a minimal choice. Hence we define an \emph{RT admissible class} and \emph{EWCS admissible class}.

\begin{definition}[RT admissible class]\label{def: RT admissable}
Let $A$ be boundary subsystem and let $\mathfrak S_{A}$ be the set of properly embedded, piecewise smooth\footnote{We require smoothness only piecewise as we will construct admissible surfaces by gluing various smooth surfaces in a non-smooth way.}, codimension-two hypersurfaces 
$\gamma\subset\Sigma$ such that
\begin{enumerate}[label=\emph{(\roman*)}]
\item \(\Sigma\setminus\gamma = U_A \sqcup U_{\bar{A}}\), with \(\partial_\infty U_A = A\) and $\partial_\infty U_{\bar{A}} = \bar{A}$,
\end{enumerate}
where $\bar{A}$ purifies $A$. This is just the statement that $\gamma$ is homologous to $A$. We say a surface is RT admissible for $A$ iff it satisfies the above condition. Moreover there exists a member $\gamma_A \in \mathfrak{S}_A$ which is the RT surface of $A$ such that $S_A = \Area (\gamma_A)/4G_N$.
\label{RT admissible class}
\end{definition}

\begin{definition}[EWCS admissible class]\label{def: EWCS admissable}
Let $A,B$ be disjoint boundary subsystems.
Let $\mathfrak S_{A:B}$ be the set of properly embedded, piecewise smooth, codimension-two hypersurfaces
$\Gamma\subset\mathcal E(AB)$ such that
\begin{enumerate}[label=\emph{(\roman*)}]
\item \(\partial_\Sigma\Gamma \subset \gamma_{AB}\) (bulk anchoring of $\Gamma$ on $\gamma_{AB}=\partial\mathcal{E}(AB)$), 
\item \(\mathcal{E}(AB)\setminus \Gamma = U_A \sqcup U_B\) with
      \(\partial_\infty U_A= A\) and \(\partial_\infty U_B= B\).
\end{enumerate}
We say a surface is EWCS admissible for $(A:B)$ iff it satisfies the above conditions. Moreover there exists a member $\Gamma_{A:B}^W\in\mathfrak S_{A:B}$ which is is the
\emph{entanglement wedge cross section} (EWCS), with
$E_W(A\!:\!B)=\Area(\Gamma_{A:B}^{W})/(4G_N)$. We will always refer to $\Gamma_{A:B}^{W}$ as the member of the class that minimizes the area function.
\label{EWCS admissible class}
\end{definition}

Proving the admissibility of some constructed codimension-two hypersurface can be lengthy. In our proofs we generally achieve this by constructing paths beginning and ending on certain points at the asymptotic boundary and showing in their evolution they intersect the required surface. As we only include proof sketches in the main text this formalism is outlined in Appendix \ref{s: complete proofs}. We now introduce a fundamental property of the EWCS which will be required for several of our proofs.

\begin{restatable}{theorem}{barriertheorem}\label{thm:barrier}
The EWCS surface, $\Gamma^W_{A:B}$, with $\Area(\Gamma^W_{A:B})/4G_N=E_W(A:B),$ never enters the interior of the individual wedges of its two arguments:
\begin{equation}
\Gamma_{A:B}^{W}\cap \operatorname{Int}\mathcal E(A)=\varnothing,
\qquad
\Gamma_{A:B}^{W}\cap \operatorname{Int}\mathcal E(B)=\varnothing.
\end{equation}
\end{restatable}

From an information theoretic perspective this is of course a necessary condition. In the holographic code picture, $\operatorname{Int}\mathcal E(A)$ consists of bulk degrees of freedom that are reconstructible from $A$ alone, i.e.\ they are ``local'' to one boundary party. Since $E_W(A\!:\!B)$ is intended to quantify correlations that are genuinely \emph{shared} between $A$ and $B$—those accessible only from the joint region $AB$—its dual surface $\Gamma^W_{A:B}$ must not ``count'' any part of the bulk already attributable to a single side. Hence $\Gamma^W_{A:B}$ lives in the joint-only region $\mathcal E(AB)\setminus(\mathcal E(A)\cup\mathcal E(B))$. Equivalently, in the bit-thread formulation, $E_W(A\!:\!B)$ is the maximum flux of threads connecting $A$ to $B$ within $\mathcal E(AB)$, and the RT surfaces $\gamma_A,\gamma_B$ act as capacity bottlenecks for outward flux; any $A\!\leftrightarrow\!B$ thread bundle that detours into $\mathcal E(A)$ or $\mathcal E(B)$ wastes scarce capacity without improving connectivity, so the optimal max flow (and hence the minimal cut $\Gamma^W_{A:B}$) can be pushed to avoid the interiors of the individual wedges.

\begin{figure}[ht]
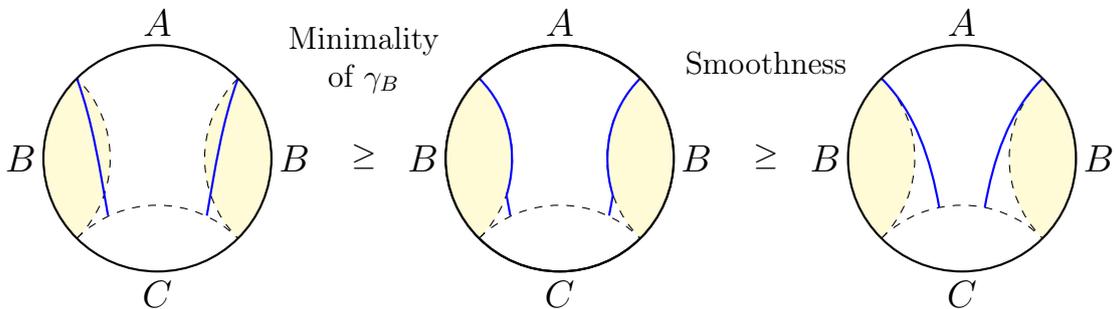

\centering
\scalebox{1}{
\input{Sketch_Figures/Crossing_of_EWCS_into_EB_LHS}
\begin{tikzpicture}
  \node at (0,-0.5) {$\geq$};
  \node[overlay, anchor=south] at (0,0.7) {Minimality};
  \node[overlay, anchor=south] at (0,0.2) {of $\gamma_B$};
  \node at (0,-2.5) { };
\end{tikzpicture}
\input{Sketch_Figures/Crossing_of_EWCS_into_EB_Middle}
\begin{tikzpicture}
  \node at (0,-0.5) {$\geq$};
  \node[overlay, anchor=south] at (0,0.45) {Smoothness};
  \node at (0,-2.5) { };
\end{tikzpicture}
\input{Sketch_Figures/Crossing_of_EWCS_into_EB_RHS}
}
\vspace{-0.5cm}
\caption{$\mathcal{E}(B)$ shaded in yellow. Crossing of $\Gamma^W_{A:B}$ into $\mathcal E(B)$ would require two intersections with $\gamma_B$ (left). By uniqueness, the left segment must coincide with $\gamma_B$ between those points, then non-smoothly join the curve anchoring on $\gamma_C$. By smoothness of minimizers this is forbidden.}
\label{Fig: crossing of EB by EWCS}
\end{figure}

Holographically it is true based on minimality arguments along with EWN. For instance consider in Figure \ref{Fig: crossing of EB by EWCS} where we initially assume $\Gamma^W_{A:B}$ enters $\operatorname{Int}\mathcal{E}(AB)$. However under Assumption \ref{thm: No multiple intersections of geodesics on an AdS Cauchy slice} this is forbidden as we can always push $\Gamma^W_{A:B}$ out of $\mathcal{E}(B)$ to the assumed minimizing geodesic $\gamma_B$. Then by smoothness our original choice of $\Gamma^W_{A:B}$ was incorrect. Of course there is another case with $\mathcal{E}(B)$ a connected wedge as in Figure \ref{fig: example i is not j}. However even here by considering the addition of $S_A$ we find such a configuration erroneously implies $E_W(A:B)>S_A$, and hence the barrier theorem holds. We give a detailed proof of the theorem in Appendix \ref{Barrier proof} in two ways: considering the relation between reflected entropy and the EWCS (equation \eqref{eq: ew = 2sr}) and imposing EWN, and considering a cycle type argument akin to Figure \ref{fig: example i is not j}.

\begin{figure}[ht]
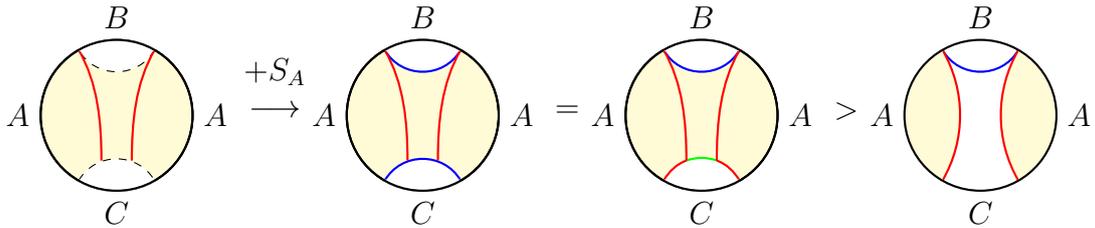

    \centering
\scalebox{1}{
\input{Sketch_Figures/Bridged_wedges/fig7}
\hspace{-0.4cm}
\begin{tikzpicture}
   \node at (0,-0.5) {$\longrightarrow$};
  \node[overlay, anchor=south] at (0,-0.3) {$+S_A$};
  \node at (0,-2) { };
\end{tikzpicture}
\hspace{-0.4cm}
\input{Sketch_Figures/Bridged_wedges/fig8}
\hspace{-0.4cm}
\begin{tikzpicture}
   \node at (0,-0.5) {$=$};
  \node at (0,-2) { };
\end{tikzpicture}
\hspace{-0.4cm}
\input{Sketch_Figures/Bridged_wedges/fig9}
\hspace{-0.4cm}
\begin{tikzpicture}
   \node at (0,-0.5) {$>$};
  \node at (0,-2) { };
\end{tikzpicture}
\hspace{-0.4cm}
\input{Sketch_Figures/Bridged_wedges/fig10}
}
    \caption{Other case in the barrier theorem. We have the dashed curves are $\gamma_A$, the yellow region $\mathcal{E}(A)$ and the red curve $\Gamma^W_{A:B}$ intersects $\operatorname{Int}\mathcal{E}(A)$. This is not a violation however as this configuration implies that $E_W(A:B)>S_B$ which is axiomatically false.}
    \label{fig: example i is not j}
\end{figure}

\section{Holographic Monotonicity}
\label{Section 3}

Having presented the necessary dictionary, as well as various proof tools, we can now proceed to study the monotonicity of the holographic correlation measures $J_W$ and $D_W$. As a summary of our findings, we prove the following monotonicity relations for a mixed state on $ABC$:
\begin{empheq}[box=\fbox]{align}
    D_W(A|BC) &\not\ge D_W(A|B),\\
    J_W(AB|C) &\ge J_W(A|B),\\
    D_W(AC|B) &\ge D_W(A|B)\label{eq: mono unmeasured D},\\
    J_W(AC|B) &\ge J_W(A|B) \label{eq: mono unmeasured J},
\end{empheq}
where $\not\ge$ does not imply the opposite inequality, only that neither is generally true.

\subsection{Measured Party}
In this subsection, we will show that $J_W$ is monotone with respect to the measured party and that $D_W$ is not. Namely,
\begin{empheq}[box=\fbox]{align}
    D_W(A|BC) &\not\ge D_W(A|B),\\
    J_W(A|BC) &\ge J_W(A|B),
\end{empheq}
which are consistent with the known properties for $D$ and $J$~\cite{Henderson:2001wrr}.

The property for $J_W$ is straightforward: expanding out the inequality reduces to
\begin{equation}
    E_W(A:CD) \stackrel{?}{\geq} E_W(A:D),
\end{equation}
where $D$ purifies $ABC$. This is true by EWN and is visualized in Figure~\ref{Monotonicity Measured party}.

\begin{figure}[ht]
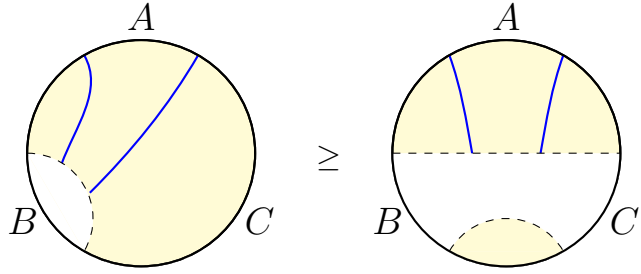

    \centering
\scalebox{1}{
\input{Sketch_Figures/Monotonicity_Measured_party_LHS}
\begin{tikzpicture}
   \node at (0,-0.5) {$\geq$};
   \node at (0,-2.3) { };
\end{tikzpicture}
\input{Sketch_Figures/Monotonicity_Measured_party_RHS}
}
    \caption{Example for $E_W(A:CD)\geq E_W(A:D)$. The blue curves denote the EWCS. Coloured region on the LHS shows $\mathcal{E}(ACD)$ and on the RHS $\mathcal{E}(AD)$.
    }
    \label{Monotonicity Measured party}
\end{figure}

For $D_W$, there are counterexamples to the inequality in both directions.
%one might a priori think it would also be monotone for the measured party. However, considering the case of 
Consider a pure $ABC$, such that the monotonicity of $D_W$ reduces to
\begin{equation}\label{eq: counter mono DW}
    I(A:C) \stackrel{?}{\geq} E_W(A:C),
\end{equation}
which is not true in general. For example, when $A$ and $C$ are barely connected, $I(A:C)\approx 0$ while $E_W(A:C)=O(1/G_N)$. Thus, holographic discord is not monotone with respect to the measured party. 

The opposite inequality can also be violated, that is, monotonicity sometimes holds. For example, if one takes $A$ and $C$ adjacent to each other, counting the UV divergence satisfies equation~\eqref{eq: counter mono DW}.

\subsection{Unmeasured Party}

Here we show that both $D_W$ and $J_W$ are monotone with respect to the unmeasured party. This is consistent with the known monotonicity relations of the original boundary correlation measures $D(AC|B)\ge D(A|B)$ and $J(AC|B)\ge J(A|B)$~\cite{Henderson:2001wrr,Modi_2012}.
\begin{lemma}[Monotonicity in the unmeasured party of either $J_W$ or $D_W$ implies the other]\label{lemma: monotonicity implies the other} For a mixed boundary state $ABC$ we have
\begin{equation}
D_W(AC|B) \geq D_W(A|B) \iff J_W(AC|B) \geq J_W(A|B).
\end{equation}
\end{lemma}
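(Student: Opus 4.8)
The plan is to fix a global purification $\ket{\Psi}_{ABCD}$ of the mixed state $\rho_{ABC}$ and unfold the four quantities through the definitions in \eqref{Holographic definitions}. Reading off the relevant purifiers, $AB$ is purified by $CD$ and $ACB$ by $D$, so that $J_W(A|B)=S_A-E_W(A{:}CD)$, $D_W(A|B)=S_B-S_{CD}+E_W(A{:}CD)$, $J_W(AC|B)=S_{AC}-E_W(AC{:}D)$ and $D_W(AC|B)=S_B-S_D+E_W(AC{:}D)$. The key structural observation I would stress first is that the only two cross sections appearing, $E_W(A{:}CD)$ and $E_W(AC{:}D)$, both live inside the \emph{same} entanglement wedge $\mathcal{E}(ACD)$ (the complement of $\mathcal{E}(B)$), being two distinct admissible bipartitions of it in the sense of Definition~\ref{def: EWCS admissable}. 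This shared arena is what makes the two inequalities directly comparable.

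Next I would extract the algebraic backbone. The definitions give $J_W(X|B)+D_W(X|B)=I(X{:}B)$ for $X\in\{A,AC\}$ — the cross-section terms cancel and the purifier entropy equals $S_{XB}$ — so subtracting the two cases yields
\begin{equation}
\big[J_W(AC|B)-J_W(A|B)\big]+\big[D_W(AC|B)-D_W(A|B)\big]=I(AC{:}B)-I(A{:}B)\ge 0,
\end{equation}
the non-negativity being monotonicity of holographic mutual information (strong subadditivity, which the homology/EWN setup of Section~\ref{sec:assump-tech} makes manifest). Writing $\delta:=E_W(AC{:}D)-E_W(A{:}CD)$, the two differences are $J_W(AC|B)-J_W(A|B)=(S_{AC}-S_A)-\delta$ and $D_W(AC|B)-D_W(A|B)=(S_{CD}-S_D)+\delta$, so the two statements reduce to the one-sided bounds $\delta\le S_{AC}-S_A$ and $\delta\ge S_D-S_{CD}$ on the single number $\delta$, and the identity above says exactly that $S_D-S_{CD}\le S_{AC}-S_A$.

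Thus the equivalence collapses to the two-sided geometric bound $S_D-S_{CD}\le \delta\le S_{AC}-S_A$: if $\delta$ lives in this (nonempty) interval then both monotonicities hold simultaneously, while if it fell below or above one endpoint exactly one of them would fail, breaking the biconditional. The route I would take to pin $\delta$ into this interval is to compare the minimisers $\Gamma^{W}_{AC{:}D}$ and $\Gamma^{W}_{A{:}CD}$ inside $\mathcal{E}(ACD)$ directly. Using the barrier property (Theorem~\ref{thm:barrier}) to keep each minimiser out of the interiors of the wedges of its arguments, I would build trial surfaces for one cross section by gluing the other to suitable arcs of $\gamma_{AC}$, $\gamma_A$, $\gamma_D$ and $\gamma_{CD}$, verify admissibility through the path-separation criterion of Remark~\ref{rem:path-separation}, and let RT minimality together with the multiplicity bookkeeping of Definition~\ref{def:multisurface} turn these gluings into the required upper and lower bounds on $\delta$.

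I expect the main obstacle to be exactly this geometric comparison of the two cross sections, since the sum identity by itself is logically insufficient: it constrains $J_W(AC|B)-J_W(A|B)$ and $D_W(AC|B)-D_W(A|B)$ only through their sum, not their individual signs, so genuine geometric input is needed to exclude $\delta$ from landing strictly between $S_D-S_{CD}$ and $S_{AC}-S_A$. The delicate points will be handling disconnected and bridged components of $\mathcal{E}(ACD)$, so that the entropy splittings of Remark~\ref{remark: bridged wedges} apply term by term, and guaranteeing that each glued trial surface is genuinely EWCS-admissible rather than merely homologous to the correct boundary region — precisely the place where the barrier theorem and the path-separation formulation carry the argument.
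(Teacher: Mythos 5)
Your expansion of the four quantities and the bookkeeping around $\delta:=E_W(AC{:}D)-E_W(A{:}CD)$ are correct, and your sum identity rightly reproduces monotonicity of mutual information. The gap is in the final step: you have misread the quantifier structure of the lemma and consequently miss the one idea that actually proves it. Written out with purifier $D$, the two monotonicity statements are
\begin{align*}
D_W(AC|B)\ge D_W(A|B) &\iff S_{CD}+E_W(AC{:}D)\ \ge\ S_D+E_W(A{:}CD),\\
J_W(AC|B)\ge J_W(A|B) &\iff S_{AC}+E_W(A{:}CD)\ \ge\ S_A+E_W(AC{:}D),
\end{align*}
and the second is carried into the first by the relabeling $A\leftrightarrow D$ together with the symmetry $E_W(X{:}Y)=E_W(Y{:}X)$. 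The lemma asserts the equivalence of the two monotonicity \emph{properties}, i.e.\ of the universally quantified statements over all mixed boundary states on $ABC$ (equivalently all pure states on $ABCD$, a class closed under permuting subsystem labels); under that reading the two families of inequalities are literally identical, and the proof is a two-line symmetry argument with no geometric input. This is exactly how the paper argues, and it is why the lemma is useful there: one only needs to prove a single measure monotone afterwards.

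Under your pointwise, state-by-state reading, the biconditional would force $\delta\in[\,S_D-S_{CD},\,S_{AC}-S_A\,]$ for every state, i.e.\ it would be logically equivalent to proving \emph{both} monotonicity theorems outright; a lemma established that way could not serve its purpose of halving the remaining work, and indeed the two-sided bound on $\delta$ is precisely the content of the paper's subsequent monotonicity theorem (proved via the $\tilde{\Gamma}$ construction), not of this lemma. Your proposed remedy --- building trial surfaces by gluing $\Gamma^W_{AC:D}$ and $\Gamma^W_{A:CD}$ to arcs of RT surfaces, then invoking the barrier theorem and path separation --- is only a sketch: none of the admissibility checks, bridged-component analyses, or area accounting is carried out, and executing them is essentially the hard proof of the main theorem rather than of this reduction lemma. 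So as a proof of the lemma the proposal is incomplete, and the missing ingredient is the relabeling symmetry. (Incidentally, the paper's displayed proof contains a typo --- the first line should read $S_{CD}+E_W(AC{:}D)\ge S_D+E_W(A{:}CD)$ --- which may have obscured the intended relabeling from you.)
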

\begin{proof}
    Expanding out equations \eqref{eq: mono unmeasured D} and \eqref{eq: mono unmeasured J} we see
    \begin{align*}
        D_W(AC|B) \ge D_W(A|B)&\Longrightarrow S_{CD} + E_W(AC:D) \geq S_D + E_W(A:CD)\\
        J_W(AC|B) \ge J_W(A|B)&\Longrightarrow S_{AC} + E_W(A:CD) \geq S_A + E_W(AC:D)
    \end{align*}
    where $D$ purifies $ABC$. Taking $A\leftrightarrow D$ in the second and using the symmetry of EWCS shows they are identical inequalities.
\end{proof}
We have shown one implies the other and so we only need to prove a single measure is monotone.
\begin{restatable}{theorem}{dwmonotone}\label{thm:dwmonotone}
$D_W$ is monotone in the unmeasured party. That is $D_W$ satisfies
\begin{equation}
    D_W(AC|B) \;\ge\; D_W(A|B),
    \label{eq:dw-monotone}
\end{equation}
for any state on $ABC$ (not necessarily pure).
\end{restatable}

We give a detailed proof of this theorem in Appendix~\ref{Holographic unmeasured monotonicity} and here we just illustrate the proof with a particular configuration. 
After relabeling the subsystems, \eqref{eq:dw-monotone} is equivalent to
\[
S_{BC} + E_W(AB:C)\geq
 S_C + E_W(A:BC),
\]
again with $ABC$ not necessarily pure. Figure~\ref{fig: example decomp monotinicity} shows one particular configuration of systems in the leftmost diagram. The middle diagram corresponds to the same curves, but grouped differently such that the blue curve is EWCS admissible for $(A:BC)$ and so upper bounds $E_W(A:BC)$ and the red curve is RT admissible for $C$ and so upper bounds $S_C$. This implies the inequality showing the theorem is true in this configuration. In the figure, the green piece is a leftover and not needed to upper bound the LHS.

\begin{figure}[ht]
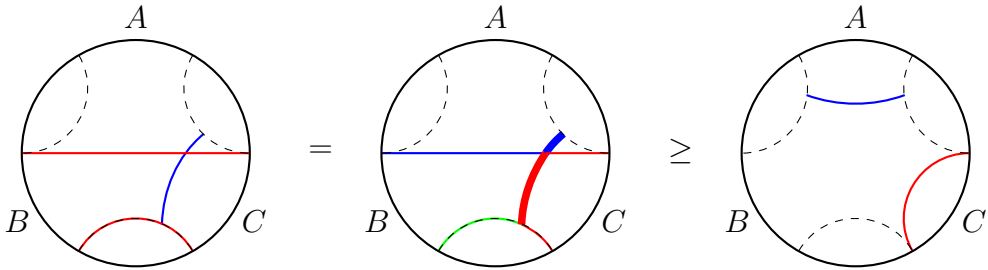

    \centering
\scalebox{1}{
\input{Sketch_Figures/Monotonicity/fig1}
\begin{tikzpicture}
   \node at (0,-0.5) {$=$};
  \node at (0,-1.9) { };
\end{tikzpicture}
\input{Sketch_Figures/Monotonicity/fig2}
\begin{tikzpicture}
   \node at (0,-0.5) {$\geq$};
  \node at (0,-1.9) { };
\end{tikzpicture}
\input{Sketch_Figures/Monotonicity/fig3}
}
    \caption{Example configuration showing $D_W$ in monotone in the unmeasured party.
    On the left, red denotes $\gamma_{BC}$ and blue denotes $\Gamma^W_{AB:C}$; the dashed curve is $\gamma_{ABC}$.
    In the middle, we reassign pieces of the combined red and blue surfaces into a new decomposition serving as an RT candidate for $C$ in red, and EWCS candidate for $(A:BC)$ in blue. The right-hand panel shows $\gamma_C$ and $\Gamma^W_{A:BC}$; by smoothness/minimality, our candidate surfaces in the middle panel have strictly larger area than these extremizers on the right.}
    \label{fig: example decomp monotinicity}
\end{figure}

\section{Holographic Monogamy/Polygamy}
\label{Section 4}

In this section, we study the monogamy/polygamy properties of our holographic correlation measures, $D_W$ and $J_W$. As an initial summary, we find that
\begin{empheq}[box=\fbox]{align}
    D_W(A|B)+D_W(A|C) &\not\le D_W(A|BC), \label{monogamy_discord_measured}\\
    D_W(A|B)+D_W(C|B) &\not\le D_W(AC|B), \label{monogamy_discord_unmeasured}\\
    J_W(A|B)+J_W(C|B) &\le J_W(AC|B), \label{monogamy_classical_unmeasured}\\
    J_W(A|B)+J_W(A|C) &\le J_W(A|BC),\label{monogamy_classical_measured}
\end{empheq}
where as before $\not\leq$ does not imply the opposite just that neither way is true. However for pure ABC we find discord to be polygamous with respect to the unmeasured party
\begin{equation}
     D_W(A|B)+D_W(C|B) \ge D_W(AC|B), \quad \text{pure $ABC$}.
\end{equation}

\subsection{Polygamy of discord}
We first begin by showing that holographic discord is polygamous for the unmeasured party when $ABC$ are pure:
\begin{equation}
     D_W(A|B)+D_W(C|B) \ge D_W(AC|B).
\end{equation}
For pure $ABC$, this is equivalent to
\begin{equation}
    E_W(A:C) \ge \frac{I(A:C)}{2},
\end{equation}
which is a known lower bound for the EWCS~\cite{Freedman:2016zud,Takayanagi:2017knl}. Moreover, when dealing with non-holographic states, we can replace $E_W$ with $E_P$ or $S_R/2$, and the inequality remains true. It is not true however for the original quantum discord, in which $E_W \rightarrow E_F$. 
For example, the polygamy of the original quantum discord is violated for a maximally mixed state, whose purification is the GHZ state on $ABC$. Given the conjectured duality between $D$ and $D_W$, such a polygamy-violating state should be prohibited in holography --- indicating no GHZ entanglement in holography.

For mixed states however polygamy does not hold. For example consider three contiguous intervals $ACB$, namely, $C$ is placed between $A$ and $B$. The rest of the boundary is denoted by $D$. If $C$ is small enough, we have $S_{AB}=S_{CD}=S_C+S_{ABC}$ and $E_W(A:CD)= E_W(AC:D)+E_W(C:AD)<S_A$, leading to
\begin{align}
    & \phantom{=} D_W(A|B)+D_W(C|B)-D_W(AC|B) \nonumber \\
    &= S_B -S_{AB} -S_{BC} +S_{ABC} +E_W(A:CD) + E_W(C:AD) - E_W(AC:D) \nonumber \\
    &= S_B - (S_C+S_{ABC}) - S_{BC} + S_{ABC} +E_W(AC:D)+E_W(C:AD)  \nonumber \\
    &\phantom{=}+ E_W(C:AD) - E_W(AC:D) \nonumber\\
    &= S_B +2E_W(C:AD) -S_C -S_{BC}.
    \label{eq:poly-example}
\end{align}
This is illustrated in Fig.~\ref{fig: poincare patch example for polygamy of discord}. Due to the minimality condition, \eqref{eq:poly-example} is non-negative, indicating polygamy of discord in this case.

\begin{figure}[ht]
\begin{center}
    \input{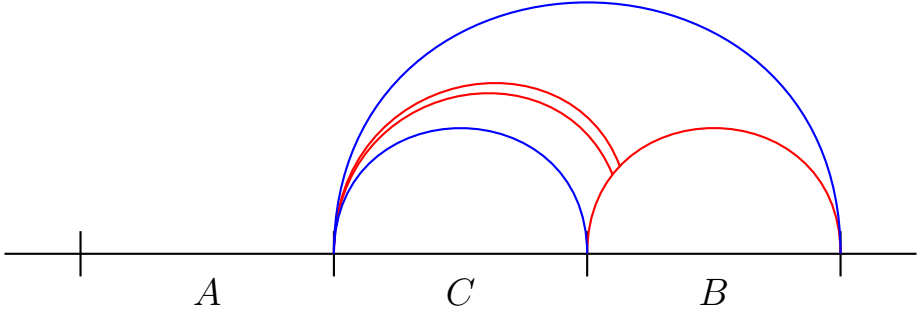}
    \caption{Polygamy of discord for mixed state $ABC$ on the Poincar\'e patch: $S_B +2E_W(C:AD) -S_C -S_{BC}\ge 0$. Positive curves colored in red and negative in blue.}
    \label{fig: poincare patch example for polygamy of discord}
\end{center}
\end{figure}

Moreover, holographic mixed states allow a range of monogamous examples as well. For example, consider three disjoint intervals on the boundary circle:
\begin{equation}
    A = \left(0,\frac{\pi}{3}+\theta\right), \quad B = \left(\frac{2\pi}{3},\pi+\theta\right), \quad C = \left(\frac{4\pi}{3},\frac{5\pi}{3}+\theta\right).
\end{equation}
When $\theta = 0$, they are equally spaced and when $\theta = \pi/3$, the state on $ABC$ is pure and the subsystems are mutually adjacent. Then, the remaining purifying subsystem $D$ consists of three segments
\begin{equation}
    D_j = \left(\frac{(2j-1)\pi}{3}+\theta,\frac{2j\pi}{3}\right),
\end{equation}
where $j$ runs from $1-3$. When $\theta>0$ is sufficiently small, all regions are disconnected. Moreover, the EWCS are just given by the entropies of the smallest argument, and so the inequality becomes
\begin{equation}
    \sin\left(\frac{\frac{\pi}{3}-\theta}{2}\right) \stackrel{?}{\geq}  \sin\left(\frac{\frac{\pi}{3}+\theta}{2}\right).
\end{equation}
which is violated for any sufficiently small $\theta>0$. 

Thus, this concludes that holographic discord is neither monogamous or polygamous for mixed states. 

~

Is there a monogamy relation for holographic discord with respect to the measured party? Namely, we ask whether
\begin{equation}
    D_W(A|B)+D_W(A|C) \stackrel{?}{\le} D_W(A|BC).
\end{equation}
Expanding the inequality for pure $ABCD$ yields
\begin{equation}
    S_D + I(B:C) + E_W(A:CD)+E_W(A:BD)  \stackrel{?}{\leq} S_{CD} + S_{BD} + E_W(A:D).
\end{equation}
An easy counter example is taking $D = \emptyset$ in which this reduces to
\begin{equation}
    E_W(A:C) + E_W(A:B) \stackrel{?}{\leq} S_A,
\end{equation}
which has obvious violating configurations. Similarly we can violate the polygamy by taking $C = \emptyset$ in which the inequality reduces to
\begin{equation}
    E_W(A:D) \stackrel{?}{\geq} S_{A},
\end{equation}
which is axiomatically false. Thus $D_W$ is neither polygamous nor monogamous for the measured party.

Lastly, we briefly remark on strong superadditivity. Since holographic discord $D_W$ is not monogamous, it does not satisfy strong superadditivity \eqref{eq:strong-sa} as well: strong superadditivity implies monogamy so taking its contraposition denies strong superadditivity of $D_W$ for either party.

\subsection{Monogamy of classical correlations} \label{sec:holo-tradeoff}
We will now prove that $J_W$ is monogamous in both arguments beginning with the measured party.

\subsubsection{Measured party}

\begin{restatable}{theorem}{jwmonogamousmeasured}\label{thm:mono-measured}
$J_W$ is monogamous in the measured party. That is $J_W$ satisfies
\begin{equation}
    J_W(A|B)+J_W(A|C) \;\le\; J_W(A|BC). \label{eq:jw-monogamy}
\end{equation}
for any state $ABC$ (not necessarily pure).
\end{restatable}

\begin{figure}[ht]
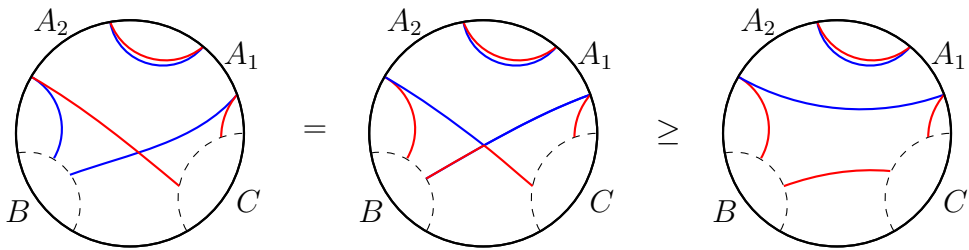

    \centering
\scalebox{1}{
\input{Sketch_Figures/monogamy_measured/fig1}
\begin{tikzpicture}
   \node at (0,0) {$=$};
   \node at (0,-1.4) { };
\end{tikzpicture}
\input{Sketch_Figures/monogamy_measured/fig2}
\begin{tikzpicture}
   \node at (0,0) {$\geq$};
   \node at (0,-1.4) { };
\end{tikzpicture}
\input{Sketch_Figures/monogamy_measured/fig3}
}
    \caption{Example for monogamy of $J_W$ for the measured party. On the LHS we have $\Gamma^W_{A:CD}$ in blue, and $\Gamma^W_{A:BD}$ in red. The dashed lines are the RT surfaces of $\gamma_B$ and $\gamma_C$ and we slightly displace to top red and blue curves to show they are both present. We decompose this in the middle figure into a blue curve $\partial r_A$, an RT candidate for $A$, and a red curve $\Gamma$, which is an EWCS candidate for $(A:D)$. This upper bounds the RHS figure via smoothness.}
    \label{Monogamy classical unmeasured party}
\end{figure}

For pure $ABC$, it immediately follows from the polygamy of $D_W$. By combining that 1.) quantum mutual information is monogamous in holography~\cite{Hayden:2011ag} and 2.) $I(A:B) = D_W(A|B) + J_W(A|B)$ by definition, $J_W$ must be monogamous to compensate for the polygamy of $D_W$ for pure states on $ABC$. To summarize,
\begin{equation}
\left\{
\begin{aligned}
    & D_W(A|B)+D_W(C|B) \ge D_W(AC|B) \\
    & I_W(A:B)+I_W(C:B) \le I_W(AC:B) \\
    & I_W(A:B) = D_W(A|B) + J_W(A|B)
\end{aligned}
\right\}
 \Rightarrow 
J_W(A|B)+J_W(C|B) \le J_W(AC|B).
\label{eq:trade-off}
\end{equation}

Beyond pure states, the full proof is presented in Appendix~\ref{monogamy measured party}. For an example case consider Figure \ref{Monogamy classical unmeasured party} which plots the expansion of Theorem \ref{thm:mono-measured}:
\begin{equation}
E_W(A\!:\!BD)\;+\;E_W(A\!:\!CD)\ \ge\ S_A\;+\;E_W(A\!:\!D).
\end{equation}
The proof in this configuration then proceeds similarly to that of Theorem \ref{thm:dwmonotone}. We simply consider the LHS surfaces and relabel particular components to serve as admissible surfaces for the quantities on the RHS. One has to be careful about formalizing such a decomposition however due to potential degeneracy of surfaces as seen in Figure \ref{Monogamy classical unmeasured party} where we have two identical curves. This is just due to representing two surfaces on the same copy of the bulk. Hence we need to keep track of the multiplicities involved (see Definition \ref{def:multisurface_sheeted_simple}).

\subsubsection{Unmeasured party}

\begin{restatable}{theorem}{jwmonogamousunmeasured}\label{thm:mono-unmeasured}
$J_W$ is monogamous in the unmeasured party. That is $J_W$ satisfies
\[
     J_W(A|B)+J_W(C|B) \le J_W(AC|B).
\]
for any states on $ABC$ (not necessarily pure).
\end{restatable}

\begin{figure}[ht]
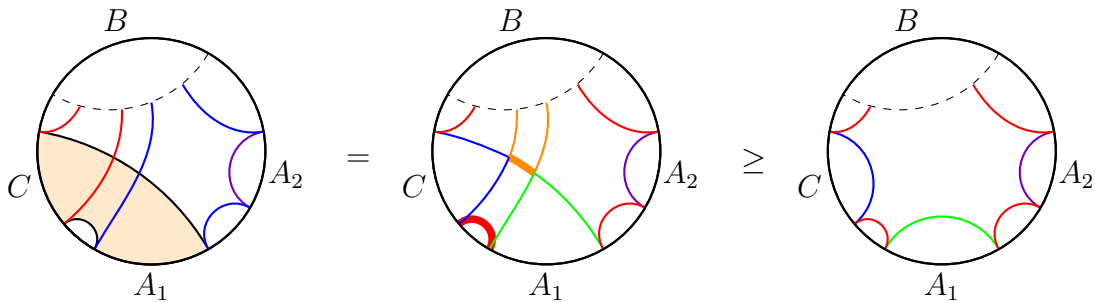

    \centering
\scalebox{1}{
\input{Sketch_Figures/monogamy_unmeasured/fig1}
\begin{tikzpicture}
   \node at (0,0) {$=$};
   \node at (0,-1.85) { };
\end{tikzpicture}
\input{Sketch_Figures/monogamy_unmeasured/fig2}
\begin{tikzpicture}
   \node at (0,0) {$\geq$};
   \node at (0,-1.85) { };
\end{tikzpicture}
\input{Sketch_Figures/monogamy_unmeasured/fig3}
}
    \caption{Example for monogamy of $J_W$ for the unmeasured party, for $\mathcal{F}=A_1\cup C$. On the LHS we have $\gamma_{\mathcal{F}}$ in black, $\mathcal{E}(\mathcal{F})$ shaded in orange, $\Gamma^W_{C:AD}$ in red, $\Gamma^W_{A:CD}$ in blue, and $\gamma_{A_2}$ in purple. We decompose this in the middle figure into a blue curve $\partial U^\mathrm{cl}_{\mathcal{F}_C}$, a green curve $\partial  U^\mathrm{cl}_{\mathcal{F}_A}$, a red curve $\Gamma$, and an extra orange curve. This upper bounds the RHS figure via minimality. $S_{A_2}$ is identical in each panel and hence cancels identically.}
    \label{Monogamy classical measured party}
\end{figure}

Here we will illustrate this with an explicit example as shown in Figure~\ref{Monogamy classical measured party}.
The full proof is given in Appendix~\ref{monogamy unmeasured party proof}. Expanding out \eqref{eq:jw-monogamy} in terms of HEE and EWCS is equivalent to
\begin{equation}
S_{AC}\;+\;E_W(A:CD)\;+\;E_W(C:AD)\ \ge\ S_A\;+\;S_C\;+\;E_W(AC:D),
\end{equation}
for a pure state $ABCD$. In our example $A$ is disjoint with two intervals $A_1$ and $A_2$. First, notice that the unbridged components of $A$ and $C$ when computed in $AC$ will have their entropic contributions to $S_{AC}$ cancel with the corresponding unbridged contributions to $S_A$ and $S_C$ by Remark \ref{remark: bridged wedges}. This is seen in Figure \ref{Monogamy classical measured party} where the purple curve, defining $S_{A_2}$, is present in all panels. We then proceed similarly to the proof of Theorem~\ref{thm:dwmonotone}. Going from the the leftmost to the middle diagram of Figure~\ref{Monogamy classical measured party}, we have only relabeled surface so that each color upper bounds the corresponding minimal surface on the right-hand side, except the orange curve which is an extra piece. Formalizing this process for a general setting proves Theorem~\ref{thm:mono-unmeasured}.

\section{Holographic Strong Superadditivity}
\label{Section 5}

In this section, we turn to a different type of inequality from those discussed earlier---namely, the strong superadditivity of $J_W$. This is motivated from the fact that distillable entanglement satisfies strong superadditivity and a previous observation that $J_W$ equals the distillable entanglement $E_D$, which will be reviewed briefly here.

\subsection{Distillable entanglement and holographic conjecture}\label{sec:ED-review}
The distillable entanglement between $A$ and $B$, denoted by $E_D(A:B)$, quantifies the \emph{maximum} number of EPR pairs that can be extracted via local operations and classical communication (LOCC) between $A$ and $B$.\footnote{More precisely, it is defined in either the asymptotic iid limit or one-shot scenario with some error. For more details, see~\cite{PhysRevA.60.179,Mori:2024gwe,Li:2025nxv} for instance.} In most cases, this procedure requires one party to perform a measurement and send the outcome to the other, thereby necessitating classical communication.

Since $E_D$ is operationally defined, a restricted set of LOCC will lead to smaller $E_D$. For instance let us consider four parties $A,B,C,D$ and focus on the distillable entanglement between $AB$ and $CD$. The set of LOCC operations allowed between the joint systems $AB$ and $CD$ is evidently much larger than the set of operations achievable when performing LOCC separately between $A-C$ and $B-D$. This leads to the following inequality called strong superadditivity:
\begin{equation}
    E_D(AB:CD) \ge E_D(A:C) + E_D(B:D).
\end{equation}

In~\cite{Mori:2024gwe}, it is proposed that a one-way, one-shot distillable entanglement admits a gravity dual as
\begin{equation}
    E_D(A\leftarrow B) = J_W(A|B)
    \label{eq:ED=JW}
\end{equation}
at leading order in $G_N$ for holographic states and in the large-dimension limit for random states. $A\leftarrow B$ means the LOCC considered here is one-way from $B$ to $A$. Specifically equation \eqref{eq:ED=JW} follows from the following three inequalities:
\begin{align}
    E_D(A\leftarrow B) &\lesssim J(A|B) \\
    E_D(A\leftarrow B) &\ge J_W(A|B) \\
    J(A|B) &= J_W(A|B).
\end{align}

The first inequality is a general relation that holds whenever high-fidelity EPR pairs are distilled; it is not limited to holographic or random states. It follows from the fact that, once an EPR pair has been successfully distilled, measuring one of its subsystems necessarily reduces the entropy by a corresponding amount.

The second inequality arises from the quantum error–correcting property of holographic or random states. One can explicitly construct a one-way LOCC distillation protocol that makes use of the expansion of the entanglement wedge of $A$ after a partial measurement on $B$. The resulting distillable entanglement coincides with $J_W(A|B)$. Although this protocol may not be optimal, it constitutes a valid distillation procedure and thus provides a lower bound on $E_D$.

The third inequality can be shown independently in two settings: holographic or Haar random states. In the former case, the derivation relies on the assumption of holographic measurements and geometric optimization in fixed-area states of holography. In the latter case, the inequality is rigorously argued by the measure concentration of Haar random states.

Given this background, \cite{Mori:2024gwe} conjectures the classical correlation $J_W(A|B)$ is dual to one-way distillable entanglement $E_D(A\leftarrow B)$ and so we expect a one-way version of strong superadditivity:
\begin{equation}
    J_W(AB|CD) \stackrel{?}{\ge} J_W(A|C) + J_W(B|D).
    \label{eq:Jw-SSA-1way}
\end{equation}
It is worth noting that this one-way strong superadditivity implies the monogamy relation for both unmeasured and measured parties if one takes $C=D$ or $A=B$. However, for holographic states, since our topological proof assumes non-overlapping subsystems, monogamy and strong superadditivity must be proven independently.

Restricting to holographic/random basis measurements (without feedback), one-way LOCC does not outperform two-way LOCC. If it is generically true for holographic/random states, the two-way, one-shot distillable entanglement should be given by
\begin{equation}
    E_D(A:B)\stackrel{?}{=}\max(E_D(A\leftarrow B), E_D(B\leftarrow A)).
    \label{eq:2wayED=1way?}
\end{equation}
This motivates us to think of a symmetrized classical correlation~\cite{Mori:2024gwe,Mori:2025gqe}
\begin{equation}\label{eq: jw 2 way def}
    J_W(A:B) \equiv \max(J_W(A|B),J_W(B|A)).
\end{equation}

While answering question~\eqref{eq:2wayED=1way?} is hard, here we ask if its consequences hold or not. If the two-way distillable entanglement is dual to the symmetrized correlation, that is, $E_D(A:B)=J_W(A:B)$, then there must be a two-way strong superadditivity
\begin{equation}
    J_W(AB:CD) \stackrel{?}{\ge} J_W(A:C) + J_W(B:D).
    \label{eq:Jw-SSA-2way}
\end{equation}
This is strictly stronger than~\eqref{eq:Jw-SSA-1way} so it does not follow from the one-way strong superadditivity. Proving this holographically and/or for random states provides partial supporting evidence that one-way LOCC is sufficient in holography and/or random states.

\subsection{Holographic proof of one-way strong superadditivity}

In this section we show that $J_W(A|B)$ respects strong superadditivity in the one-way direction.

\begin{restatable}{theorem}{jwssa}\label{thm:jw-ssa}
$J_W$ is one-way strongly superadditive. That is $J_W$ satisfies
\begin{equation}
     J_W(AB|CD) \ge J_W(A|C) + J_W(B|D).\label{eq: one way classical ssa}
\end{equation}
for any states on $ABCD$ (not necessarily pure).
\end{restatable}

The full proof is given in Appendix~\ref{SSA formal proof}. The one-way strong superadditivity~\eqref{eq: one way classical ssa} is equivalently written as
\begin{equation}\label{eq: one way SSA}
S_{AB} + E_W(A:BDE) + E_W(B:ACE) \stackrel{?}{\geq} S_A + S_B + E_W(AB:E).
\end{equation}
The proof proceeds with the same strategy as all our other holographic proofs: consider the LHS surfaces, relabel particular components, show these components are admissible, and hence upper bound, quantities on the RHS. This is seen in Figure~\ref{Strong superadditivtg figure}. 

\begin{figure}[ht]
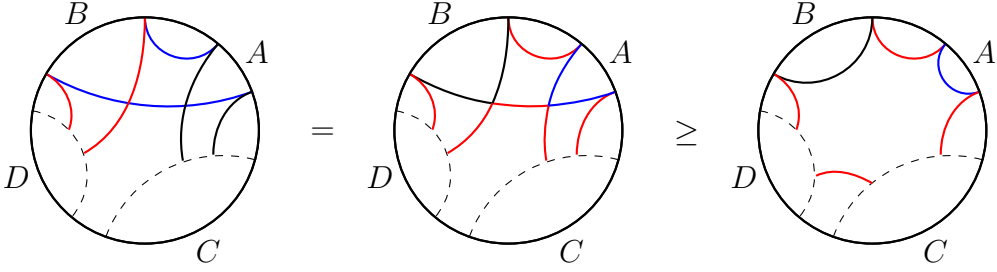

    \centering
\scalebox{1}{
\input{Sketch_Figures/ssa/fig1}
\begin{tikzpicture}
   \node at (0,0) {$=$};
   \node at (0,-1.7) { };
\end{tikzpicture}
\input{Sketch_Figures/ssa/fig2}
\begin{tikzpicture}
   \node at (0,0) {$\geq$};
   \node at (0,-1.7) { };
\end{tikzpicture}
\input{Sketch_Figures/ssa/fig3}
}
    \caption{Example for one way classical strong super additivity. On the LHS we have $\Gamma^W_{A:BDE}$ in black, $\Gamma^W_{B:ACE}$ in red, and $\gamma_{AB}$ in blue. The dashed lines are the RT surfaces of $\gamma_C$ and $\gamma_D$. We decompose this in the middle figure into a black curve $\partial r_B$, a blue curve $\partial r_A$, and a red curve $\Gamma$. These upper bound $S_B$, $S_A$ and $E_W(AB:E)$ respectively by minimality.}
    \label{Strong superadditivtg figure}
\end{figure}

\subsection{Toward holographic proof of two-way strong superadditivity}

Although we are not able to present a complete proof here it is worth analyzing why the method employed for all other proofs fails. For convenience, let us introduce the following shorthand notation:
\begin{align*}
    \alpha_1 &\equiv J_W(AB|CD) = S_{AB} - E_W(AB:E), \\
     \alpha_2 &\equiv J_W(CD|AB) = S_{CD} - E_W(CD:E), \\
      \beta_1 &\equiv J_W(A|C) = S_{A} - E_W(A:BDE), \\
       \beta_2 &\equiv J_W(C|A) = S_{C} - E_W(C:BDE), \\
        \gamma_1 &\equiv J_W(B|D) = S_{B} - E_W(B:ACE), \\
         \gamma_2 &\equiv J_W(D|B) = S_{D} - E_W(D:ACE).
\end{align*}
Then, the one-way strong superadditivity~\eqref{eq: one way classical ssa} is written as
\begin{equation}
	\alpha_1 \stackrel{?}{\ge} \beta_1 + \gamma_1,
	\label{onewaySSA}
\end{equation}
while the two-way version corresponds to
\begin{equation}
\max(\alpha_1,\alpha_2)  \stackrel{?}{\ge} \max(\beta_1,\beta_2) + \max(\gamma_1,\gamma_2).
\label{twowaySSA}
\end{equation}
As we show \eqref{onewaySSA} the only remaining case is
\begin{equation}\label{eq: ssa other way}
	\alpha_1  \stackrel{?}{\ge} \beta_1 + \gamma_2,
\end{equation}
subject to $\alpha_1 \geq \alpha_2$, $\beta_1 \geq \beta_2$ and $\gamma_2 \geq \gamma_1$. The various other cases are implied by equations \eqref{onewaySSA} and \eqref{eq: ssa other way} and the assumed max statements within.

\begin{figure}[ht]
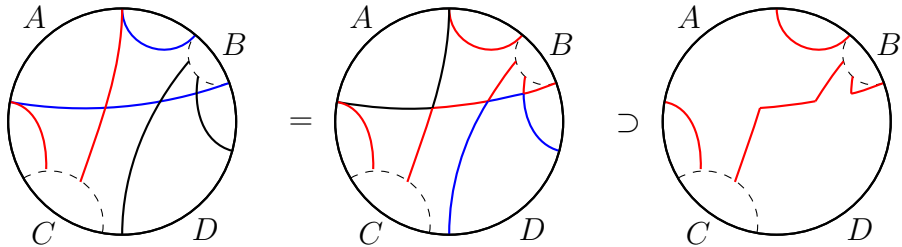

    \centering
\scalebox{1}{
\input{Sketch_Figures/ssa_failure/fig1}
\begin{tikzpicture}
   \node at (0,0) {$=$};
   \node at (0,-1.6) { };
\end{tikzpicture}
\input{Sketch_Figures/ssa_failure/fig2}
\begin{tikzpicture}
   \node at (0,0) {$\supset$};
   \node at (0,-1.6) { };
\end{tikzpicture}
\input{Sketch_Figures/ssa_failure/fig3}
}
    \caption{Failure of standard proof methodology for two-way strong superadditivity. On the LHS we have $\Gamma^W_{A:BDE}$ in red, $\Gamma^W_{D:ACE}$ in black, and $\gamma_{AB}$ in blue. The dashed lines are the RT surfaces of $\gamma_C$ and $\gamma_B$. We attempt the usual decomposition into a black curve upper bounding $S_A$, a blue curve upper bounding $S_D$ and a remaining red curve to upper bound $E_W(AB:E)$. Evidently the remaining curve is not EWCS admissible for $E_W(AB:E)$ and the standard proof method fails.}
    \label{Strong superadditivtg failure}
\end{figure}

For the one-way direction proof we did not need to invoke the additional information provided by the assumed max statements: the one-way inequality in equation \eqref{eq: one way SSA} is true for arbitrary $ABCD$. However for the remaining two-way condition,
\begin{equation}
	J_W(AB|CD)  \stackrel{?}{\ge} J_W(A|C) + J_W(D|B),
\end{equation}
we find the standard approach insufficient (see Figure \ref{Strong superadditivtg failure}). The proof thus requires a step beyond considering the basic topology of these surfaces and invoking more exotic properties along with the assumed inequalities between $\alpha_1$ and $\alpha_2$ and so on. Nevertheless, we find no counterexample both analytically or numerically. Since most holographic inequalities are proven on topological grounds, it is interesting if there exists some approach that captures more fine-grained details of holographic theories.

\subsection{Haar Random analysis}
In this subsection, we consider $n$-qudit Haar random pure states composed of five subsystems $A,B,C,D,E$. A subsystem $X$ contains $n_X$ qudits and the total number of qudits is denoted by $n$. 
In the large-$n$ limit with fixed $n_X/n$, the Haar random state is the simplest toy model of holography, in a sense that the EE is calculated by the minimal bond cut surface as the RT formula does. Since it is volume-law entangled, it is identified with the infinite-temperature pure black hole state or the coarse-grained random tensor network corresponding to a fixed-area state.

Since the one-way strong superadditivity is generically proven based on topological arguments, which also apply for the large-$n$ asymptotics of Haar random states, the only remaining case we should examine is equation \eqref{eq: ssa other way}. In the large-$n$ limit, it simplifies to (up to exponentially suppressed corrections)
\begin{equation}
    \begin{split}
    & \min(n_A+n_B,n_C+n_D+n_E) + \min(n_A,n_B+n_D+n_E) + \min(n_D,n_A+n_C+n_E) \\
    \stackrel{?}{\ge} & \min(n_A,n_B+n_C+n_D+n_E) + \min(n_D,n_A+n_B+n_C+n_E) + \min(n_A+n_B,n_E)
    \end{split}
    \label{eq:haar-ssa}
\end{equation}
under the assumptions $\alpha_1\ge \alpha_2$, $\beta_1\ge \beta_2$, and $\gamma_2 \ge \gamma_1$.
This simplification is because there is a measure concentration, which leads to
\begin{equation}
    S_X = \min(n_X,n_{\bar{X}}),\quad E_W(X:Y) = \min(n_X,n_Y)
\end{equation}
where $\bar{X}$ represents the complementary subsystem to $X$. We confirm strong superadditivity~\eqref{eq:haar-ssa} analytically by case analysis (see Appendix~\ref{app:code}).

\section{Boundary dual to holographic correlations from reflected entropy}
\label{Section 6}

In this section, inspired by holographic correlation measures, we map our correlation measures back to boundary quantities by making use of another duality between EWCS and a half of reflected entropy, defined in \ref{eq:reflected entropy}. We define reflected correlation measures by making the following replacement:
\begin{equation}
    E_W(A:B) \rightarrow \frac{S_R(A:B)}{2}.
\end{equation}
The reflected correlation measures not only match with the original correlation measures for holographic/random states, but also provide well-defined, computable quantities in any quantum systems away from the holographic regime. Notably, they are free of complex optimizations that the original measures have.

In Section \ref{sec:def-DR}, we define these reflected correlation measures and conjecture a bound on the original measures. We demonstrate this bound with two-qubit states. In Section \ref{sec:mono-DR} and \ref{sec:monog-DR}, we investigate monotonicity and monogamy/polygamy for the reflected measures. While reflected entropy is not a correlation measure~\cite{Hayden:2023yij}, it turns out $J_R$ and $D_R$ obey the same monotonicity and monogamy relations as the original measures except for monotonicity in the measured party.

\subsection{Definition and relation to original measures}\label{sec:def-DR}

Let us define the reflected classical correlation $J_R$ and reflected discord $D_R$ by
\begin{align}
    J_R(A|B) &\equiv S_A - \frac{S_R(A:C)}{2},\\
    D_R(A|B) &\equiv I(A:B) - J_R(A:B) = S_B -S_{AB} + \frac{S_R(A:C)}{2},
\end{align}
where $C$ purifies $AB$. Unlike the original measures, they no longer have a complex optimization over all possible subsystem POVMs, offering an alternative, computable measure for classical and quantum correlations. Furthermore, they do not depend on the choice of the purification partner $C$ as they are entirely entropic.

It is immediate to see that
\begin{equation}
    D_R\ge D \Leftrightarrow J_R \le J \Leftrightarrow S_R\ge 2 E_F,
    \label{eq:conj-DR}
\end{equation}
if $E_f$ is additive over a canonical purification. This is as additivity implies that
\begin{equation}
    S_R(A:B) = E_F(AA^\ast: BB^\ast) \ge E_F(A:B)+E_F(A^\ast:B^\ast) = 2E_F(A:B),
\end{equation}
showing equation \eqref{eq:conj-DR}. While it is known that the additivity of the entanglement of formation can be violated~\cite{Hastings:2009ybd}, it is not clear whether it is also violated for canonically purified states. Furthermore, whether there exists an extensive\footnote{By extensive, we mean that it scales as $O(\log d)$ as we increase the dimension $d$.} (large) violation remains an open problem~\cite{Hayden:2020vyo,Youn_2022,kalantar2025examples}.

It is also interesting to note that there was a conjecture about the relation between the reflected entropy and entanglement of purification $E_P(A:B)$: a quantity closely related to entanglement of formation. Namely, it is generically true that~\cite{Terhal:2002riz}
\begin{equation}
    E_P(A:B) \ge E_F(A:B).
\end{equation}
It was conjectured by~\cite{Akers:2023obn} that $2E_P(A:B)\stackrel{?}{\ge} S_R(A:B)$ but later fine-tuned counterexamples were found numerically~\cite{Couch:2023pav}. For these counterexamples, our conjecture \eqref{eq:conj-DR} is automatically true due to $S_R\ge 2E_P\ge 2E_F$. However, for most states, including two-qubits with no known counterexamples, \eqref{eq:conj-DR} remains nontrivial.

Let us give an explicit, neither holographic nor random example: a family of two-qubit states, called the Bell-diagonal states~\cite{Horodecki:2009zz}, where we can confirm the conjecture explicitly. These Bell-diagonal take the form
\begin{equation}
    \rho_{AB} = p \dyad{\Phi^+} + (1-p) \dyad{\Phi^-},
    \label{eq:bell-diag}
\end{equation}
where the Bell pairs are defined as $\ket{\Phi^\pm} = \frac{1}{\sqrt{2}}(\ket{00}\pm \ket{11})$.

Since it is a two-qubit state, one can calculate $E_f$ analytically using~\cite{Wootters:1997id}:
\begin{equation}
    E_F(A:B) = h\qty(\frac{1+\sqrt{1-C^2}}{2}),\quad h(q) = -q\log q -(1-q)\log(1-q),
\end{equation}
with the concurrence $C$ given by ~\cite{Lang:2010xtw}
\begin{equation}
    C=\max(0,2\max(p,1-p)-1)=\abs{2p-1}.
\end{equation}
With this, one finds that
\begin{equation}
    E_F(A:B) = h\qty(\frac{1+2\sqrt{p(1-p)}}{2}).
\end{equation}

Meanwhile, the reflected entropy can be calculated as follows. The canonical purification is
\begin{equation}
    \ket{\sqrt{\rho}}_{ABA'B'} = \sqrt{p}\ket{\Psi^+}_{AB}\otimes \ket{\Psi^+}_{A'B'} + \sqrt{1-p}\ket{\Psi^-}_{AB}\otimes \ket{\Psi^-}_{A'B'}.
\end{equation}
After partial trace over $BB'$, one finds that
\begin{equation}
    \rho_{AA'} = \frac{I_{AA'}}{4} + \frac{\sqrt{p(1-p)}}{2} Z_A\otimes Z_{A'}, \quad Z=\mqty(1&0\\0&-1),
\end{equation}
leading to the reflected entropy
\begin{equation}
    S_R(A:B)=S_{AA'} = 2(-\lambda_+\log\lambda_+ - \lambda_- \log\lambda_-), \quad \lambda_\pm=\frac{1}{4}\pm\frac{\sqrt{p(1-p)}}{2}.
\end{equation}

Thus, the question $S_R(A:B)\stackrel{?}{\ge}2E_F(A:B)$ is equivalent to
\begin{equation}
    -\lambda_+\log\lambda_+ - \lambda_- \log\lambda_- \stackrel{?}{\ge} h\qty(\frac{1+2\sqrt{p(1-p)}}{2}),
\end{equation}
which is satisfied for any valid $p$ ( $0\le p\le 1$).

\subsection{Monotonicity of reflected measures}\label{sec:mono-DR}
Let us turn to examine the monotonicity properties of reflected correlation measures. In summary, we find monotonicity in the measured party for both measures, and no monotonicity in general for the unmeasured party:
\begin{empheq}[box=\fbox]{align}
        D_R(AC|B)&\ge D_R(A|B),\label{mono_discord_reflect}\\
        J_R(AC|B)&\ge J_R(A|B),\label{mono_classical_reflect}\\
        D_R(A|BC)&\not\ge D_R(A|B), \label{eq:mono-dr-q}\\
        J_R(A|BC)&\not\ge J_R(A|B). \label{eq:mono-jr-q}
\end{empheq}

\subsubsection{Measured party}\label{sss: Monotonicity for the measured party}
Let us start with asking if monotonicity holds for the measured party in $J_R$~\eqref{eq:mono-jr-q}. The inequality in question is equivalent to asking 
\begin{equation}
    S_R(A:CD) \stackrel{?}{\ge} S_R(A:D),
\end{equation}
with $ABCD$ pure. However, numerous counterexamples have been found for the monotonicity of $S_R$ \cite{Hayden:2023yij} and so $J_R$ is not monotone in the measured party. 

For the reflected discord, the monotonicity~\eqref{eq:mono-dr-q} is equivalent to
\begin{equation}
    2S_{AD} - 2S_D + S_R(A:D) \stackrel{?}{\ge} 2S_B - 2S_{CD} + S_R(A:CD).
\end{equation}
Taking $D=\emptyset$, the inequality becomes
\begin{equation}
    I(A:C) \stackrel{?}{\ge} \frac{S_R(A:C)}{2}.
    \label{eq:mi-vs-sr}
\end{equation}
One such counter example is given by the Werner state \cite{Werner:1989zz}:
\begin{equation}
    \rho_{AC} = p \dyad{\Phi^+}_{AC} + (1-p)\frac{I_{AC}}{4},
\end{equation}
which violates \eqref{eq:mi-vs-sr} for $0< p \lesssim 0.42$ (see Appendix \ref{app: Counter example to monotonicity of reflected discord in the measured party}).

\subsubsection{Unmeasured party}
\label{sss: monotonicity for the unmeasured party}

\begin{theorem}[Monotonicity of classical and quantum correlations for the unmeasured party]\label{thm:mono-unmeasured-reflected}
For a mixed state $ABC$,
\begin{align}
    J_R(AC|B)&\ge J_R(A|B)\\
    D_R(AC|B)&\ge D_R(A|B)
\end{align}
\end{theorem}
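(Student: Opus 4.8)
The plan is to mirror Lemma~\ref{lemma: monotonicity implies the other}: first show the two claims are equivalent, then establish one of them by a purely entropic argument resting on strong subadditivity, so that no holographic input is needed.

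First I would purify $\rho_{ABC}$ by an auxiliary system $D$, so that $ABCD$ is pure, and expand the measures using the definitions of Section~\ref{sec:def-DR}, exploiting that $S_R(X:Y)$ depends only on $\rho_{XY}$ (so the choice of purifier is immaterial). Then $J_R(AC|B)\ge J_R(A|B)$ reads
\[
S_{AC}+\tfrac12 S_R(A:CD)\ \ge\ S_A+\tfrac12 S_R(AC:D),
\]
while $D_R(AC|B)\ge D_R(A|B)$, after using $S_{AB}=S_{CD}$ and $S_{ABC}=S_D$, reads
\[
S_{CD}+\tfrac12 S_R(AC:D)\ \ge\ S_D+\tfrac12 S_R(A:CD).
\]
Relabeling $A\leftrightarrow D$ in the first (keeping $B,C$ fixed) and invoking the symmetry $S_R(X:Y)=S_R(Y:X)$ of reflected entropy turns it into the second; hence the two are equivalent and it suffices to prove the $D_R$ form, i.e.\ $S_{CD}-S_D\ \ge\ \tfrac12\big[\,S_R(A:CD)-S_R(AC:D)\,\big]$.

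Next I would evaluate both reflected entropies inside one and the same state. The key point is that the canonical purification is a function of the density matrix $\rho_{ACD}$ alone, not of how it is bipartitioned: writing $\ket{\sqrt{\rho_{ACD}}}$ on $ACD\,A^\ast C^\ast D^\ast$, the two reflected entropies are read off by grouping each party with its mirror, $S_R(A:CD)=S_{AA^\ast}$ and $S_R(AC:D)=S_{ACA^\ast C^\ast}$. Purity gives $S_{AA^\ast}=S_{CC^\ast DD^\ast}$ and $S_{ACA^\ast C^\ast}=S_{DD^\ast}$, while the reflection symmetry of the canonical purification gives $S_{CD}=S_{C^\ast D^\ast}$ and $S_D=S_{D^\ast}$. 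Substituting, the target collapses to subadditivity of conditional entropy,
\[
S(CC^\ast\,|\,DD^\ast)\ \le\ S(C\,|\,D)+S(C^\ast\,|\,D^\ast),
\]
equivalently $S_{CC^\ast DD^\ast}+S_D+S_{D^\ast}\le S_{CD}+S_{C^\ast D^\ast}+S_{DD^\ast}$, which I would obtain from three applications of strong subadditivity: one separating $C$ from $C^\ast$ across $DD^\ast$, and two more reducing the resulting $S_{CDD^\ast}$ and $S_{C^\ast DD^\ast}$ terms; chaining the three bounds yields the claim.

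The genuinely delicate step is not the closing SSA manipulation, which is routine, but the canonical-purification bookkeeping: one must verify that $S_R(A:CD)$ and $S_R(AC:D)$ really are read from the same global state $\ket{\sqrt{\rho_{ACD}}}$—only the grouping of the mirrored factors $A^\ast,C^\ast,D^\ast$ differs—and apply the complement and reflection identities $S_X=S_{X^\ast}$ consistently. This is where I expect an error could most easily creep in. Notably, once the two reflected entropies sit in a common purification the whole inequality becomes purely entropic, so—unlike the holographic $J_W/D_W$ proofs—no geometric input (EWN, minimality, admissible surfaces) is required, and the result holds for arbitrary, non-holographic $\rho_{ABC}$, as befits measures defined with no reference to a bulk.
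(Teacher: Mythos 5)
Your proposal is correct and takes essentially the same route as the paper: you reduce the two claims to one another by the $A\leftrightarrow D$ relabeling (the reflected analogue of Lemma~\ref{lemma: monotonicity implies the other}), observe that both reflected entropies are evaluated in the one canonical purification $\ket{\sqrt{\rho_{ACD}}}$, and close with strong subadditivity. Your final inequality $S(CC^\ast|DD^\ast)\le S(C|D)+S(C^\ast|D^\ast)$ is, after the purity and mirror identities you invoke, precisely the paper's closing step $I(AB:A'B')\ge I(A:A')$ (monotonicity of mutual information, there obtained via $S_R(X:Y)=2S_X-I(X:X')$) written in complementary variables, so the two arguments coincide in content.
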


\begin{proof}

As per the holographic case (Lemma \ref{lemma: monotonicity implies the other}) the inequality for both $J_R$ and $D_R$ is identical after appropriate renaming. The monotonicity of these reflected measures for the unmeasured party is equivalent to asking whether
\begin{equation}
2S_{AB}+S_R(A:BC) \stackrel{?}{\ge} 2S_A + S_R(AB:C).
\label{eq:wedge-disc-sr}
\end{equation}

We first note that for any subsystems $X$ and $Y$, it follows that
\begin{equation}
    S_R(X:Y) \equiv S_{XX'}(\lvert \psi_{XYX'Y'} \rangle) = S_X + S_{X'}-I(X:X')=2S_X - I(X:X'),
\end{equation}
where in the last equality, we used that canonical purification implies $S_X=S_{X'}$. Applying this identity to the reflected entropies appearing in \eqref{eq:wedge-disc-sr}, we have 
\begin{equation}
    S_R(A:BC) = 2S_A - I(A:A'), \quad S_R(AB:C) = 2S_{AB} - I(AB:A'B').
\end{equation}
We note that in both $S_R(A:BC)$ and $S_R(AB:C)$, the canonically purified state is identical: $\ket{ \psi_{ABCA'B'C'}}$. 
Hence, we can rewrite the desired inequality \eqref{eq:wedge-disc-sr} as
\begin{equation}
I(AB:A'B') - I(A:A') \stackrel{?}{\ge} 0,
\end{equation}
which is indeed positive due to the monotonicity of the mutual information. Thus, $D_R$ and $J_R$ are monotone with respect to the unmeasured party.
\end{proof}

\subsection{Monogamy/Polygamy of reflected measures}\label{sec:monog-DR}
Here we investigate the monogamy/polygamy properties of $J_R$ and $D_R$. We find these quantities are neither monogamous nor polygamous for either party. Namely,
\begin{empheq}[box=\fbox]{align}
    J_R(A|B) + J_R(A|C) &\not\le J_R(A|BC),\\
    D_R(A|B) + D_R(A|C) &\not\le D_R(A|BC),\\
    J_R(A|B) + J_R(C|B) &\not\le J_R(AC|B),\\
    D_R(A|B) + D_R(C|B) &\not\le D_R(AC|B).
\end{empheq}

\subsubsection{Measured party}

Just as in \eqref{eq:trade-off}, where the holographic classical correlations were monogamous for pure states and there was a form of trade-off with the polygamy of holographic discord, here we see that states violating the monogamy of one of these reflected correlation measures often violate the polygamy of the other.  We present two instances where there is this complementary violation of monogamy/polygamy for $J_R$ and $D_R$.

\paragraph{W states: }
Let us consider a counterexample to the polygamy of $J_R$ and the monogamy of $D_R$ with respect to the measured party. The monogamy of $D_R$ for the measured party means, for pure $ABCD$,
\begin{align}
    &D_R(A|B)+D_R(A|C)\le D_R(A|BC), \\
    \Leftrightarrow\, &J_R(A|B) + J_R(A|C) - J_R(A|BC) \ge I(A:B:C), \\
    \Leftrightarrow\, & S_A -\frac{1}{2}(S_R(A:CD)+S_R(A:BD)-S_R(A:D))\ge I(A:B:C),
\end{align}
where $I(A:B:C)=S_A+S_B+S_C-S_{AB}-S_{BC}-S_{CA}+S_{ABC}$ is the tripartite information~\cite{Casini:2008wt}. Since this quantity vanishes for pure $ABC$, showing a violation to the measured-party monogamy of $D_R$ is equivalent to showing a violation to the measured-party polygamy of $J_R$.

Consider the three-qubit W state (with empty $D$)
\begin{equation}
    \ket{\Psi}_{ABC} = \frac{1}{\sqrt{3}}\left(\ket{001} + \ket{010} + \ket{100}\right).
\end{equation}
Since $S_R(A:B) = S_R(A:C)$ and  $S_R(A:C)=\log 6 - \frac{1}{\sqrt{3}}\log(2+\sqrt{3}) > S_A=\log 3 -\frac{2}{3}\log 2$, the W state is neither polygamous for $J_R$ nor monogamous for $D_R$. Thus, $J_R$ is not necessarily polygamous and $D_R$ is not necessarily monogamous for the measured party.

\paragraph{Four-qubit state violations}
Let us now consider a four-qubit state that violates the measured-party monogamy of $J_R$. 
Consider a four-qubit pure state
\begin{equation}
    \ket{\Psi}_{ABCD} = \frac{1}{\sqrt{2}}\left(\ket{0100} + \ket{1011}\right)_{ABCD}.
\end{equation}
The measured-party monogamy of $J_R$ is equivalent to
\begin{equation}
    2S_A + S_R(A:D) \stackrel{?}{\le} S_R(A:BD) + S_R(A:CD).
\end{equation}
By symmetry and straightforward computation,
\begin{equation}
    S_A = S_R(A:D) = S_R(A:BD) = S_R(A:CD) = \log 2,
\end{equation}
so that
the monogamy inequality is violated by $\log 2$.

To show that $D_R$ is neither polygamous, consider
\begin{equation}
    \ket{\Psi}_{ABCD} = \frac{1}{2}\left(\ket{0000} + \ket{0011} + \ket{0110} + \ket{1101}\right).
\end{equation}
Since
\begin{equation}
    S_B = S_C = S_D = \log 2, \qquad S_{BC} = S_{BD} = S_{CD} = \frac{3}{2}\log 2,
\end{equation}
\begin{align}
    S_R(A:CD) &= S_R(A:BD) =3\log 2 - \frac{5}{8}\log 5,\\
    S_R(A:D) &=\frac{3}{2}\log 2 + \frac{1}{2\sqrt{2}}\log\frac{2-\sqrt{2}}{2+\sqrt{2}}.
\end{align}
The discord polygamy inequality is
\begin{equation}
    D_R(A|B) + D_R(A|C) \stackrel{?}{\le} D_R(A|BC),
\end{equation}
but substituting the above values violates the inequality. Thus, this state is neither monogamous for reflected classical correlations nor polygamous for reflected discord; this is the same as the original quantities.

\subsubsection{Unmeasured Party}
Just as in the measured-party case, we find that monogamy and polygamy of boundary correlation measures \(J_R\) and \(D_R\) tend to fail in complementary pairs. In particular, we again observe that states violating the monogamy of one quantity often violate the polygamy of the other. We present two examples that exhibit this complementary pattern for the unmeasured party.

\paragraph{Tripartite pure state violations}

We first examine the discord monogamy inequality:
\begin{equation}
    S_R(C:AD) + S_R(A:CD) + 2S_B + 2S_D \stackrel{?}{\le} 2S_{AD} + 2S_{CD} + S_R(AC:D).
\end{equation}
Taking \(D = \emptyset\), this reduces to:
\begin{equation}
    S_R(A:C) \stackrel{?}{\le} I(A:C),
    \label{reflected violation unmeasured}
\end{equation}
which is the negation of the known upper bound to the reflected entropy \cite{Dutta:2019geu}, and therefore not generally satisfied. This provides a direct counterexample to monogamy of discord.

To show that classical correlations are not polygamous either, we again take \(D = \emptyset\), reducing the inequality to \eqref{reflected violation unmeasured} which is again the negation of the known upper bound and thus not true. Hence, classical correlation is neither monogamous nor polygamous with respect to the unmeasured party. However we note that for pure \(ABC\), classical correlation is monogamous.

\paragraph{Four-qubit state violations}

For classical correlations, the monogamy inequality is given by:
\begin{equation}\label{eq: mono discord?}
    2I(A:C) + S_R(AC:D) \stackrel{?}{\le} S_R(A:CD) + S_R(C:AD).
\end{equation}
Consider the four-qubit state:
\begin{equation}
    \ket{\Psi}_{ABCD} = \frac{1}{\sqrt{2}}\left(\ket{0001}+\ket{1110}\right),
\end{equation}
with the ordering of subsystems given by \(ABCD\). Then we have, in base-2 logarithms,
\begin{equation}
    I(A:C) = S_R(AC:D) = S_R(A:CD) = S_R(C:AD) = 1,
\end{equation}
thus violating the inequality. The values of the reflected entropy follow from tracing out \(B\), yielding:
\begin{equation}
    \rho_{ACD} = \frac{1}{2}\left(\dyad{001} + \dyad{110}\right),
\end{equation}
which is symmetric under all system permutations with appropriate qubit flipping. To show that discord is also not polygamous, consider the state:
\begin{equation}
    \ket{\Psi}_{ABCD} = \frac{1}{\sqrt{3}}\left(\ket{0101}+\ket{0110}+\ket{1011}\right).
\end{equation}
By symmetry and standard reflected entropy properties, we find:
\begin{align}
    S_C = S_A = S_R(A:CD) = S_R(AC:D) = S_R(C:AD) &= \log3 - \frac{2}{3}\log2\\
    S_{AC} &= \log3
\end{align}
which substituting these into \eqref{eq: mono discord?} shows the LHS exceeds the RHS, violating polygamy of discord as well.

\section{Summary and future outlook}
In this work, we performed a systematic analysis of the classical and quantum correlation measures on both sides of the holographic duality. In particular, we examined whether the bulk measures $J_W,D_W$ and the boundary measures $J_R,D_R$, originally proposed in~\cite{Mori:2025gqe}, obey various properties of correlation measures such as monotonicity, monogamy, and strong superadditivity. Our study tests the information theoretic consistency of the holographic proposal through these zoo of correlation inequalities.

\iffalse
\begin{table}[h]
\hspace*{-2cm}  % Adjust the value to shift the table to the left
\begin{tabular}{|c|c|c|c|c|c|}
\hline
 & \multicolumn{2}{|c|}{\textbf{Monotonicity}} & \multicolumn{2}{|c|}{\textbf{Monogamy}} & \multicolumn{1}{|c|}{\textbf{S. Superaddit.}} \\  % Bold column titles
\hline
 \textbf{Party} & \textbf{Unmeasured} & \textbf{Measured}  & \textbf{Unmeasured} & \textbf{Measured} & \\  % Bold sub-headers
\hline
%\textbf{Original} 
$\bm{J}$& Known & Known & No & No & No\\  % Bold row headers
\hline
%\textbf{Holographic} 
$\bm{J_W}$ & Proved & Proved & Proved & Proved & 1WAY proved/no counterex. \\
\hline
%\textbf{Reflected} 
$\bm{J_R}$ & Proved & Counterex.~\cite{Hayden:2023yij} & Counterex. & Counterex. & Proved for Haar random \\
\hline
\end{tabular}
\caption{Summary of inequalities for classical correlation measures: $J$ (original), $J_W$ (holographic), and $J_R$ (reflected). S. Superaddit. stands for strong superadditivity.}
\label{tab:cc_ineq}
\end{table}
\fi

\begin{table}[htbp]
\centering
\small
\renewcommand{\arraystretch}{1.25}
\setlength{\tabcolsep}{6pt}

\newcolumntype{Y}{>{\centering\arraybackslash}X}

\begin{tabularx}{\linewidth}{|c|Y|Y|Y|Y|Y|}
\hline
 & \multicolumn{2}{c|}{\textbf{Monotonicity}}
 & \multicolumn{2}{c|}{\textbf{Monogamy/Polygamy}}
 & \makecell{\textbf{S.}\\\textbf{Superaddit.}} \\
\hline
\textbf{Party} & \textbf{Unmeasured} & \textbf{Measured}
              & \textbf{Unmeasured} & \textbf{Measured} &  \\
\hline
$\bm{J}$   & Known  & Known  & No      & No      & No \\
\hline
$\bm{J_W}$ & Proved & Proved
           & Proved & Proved & \makecell{1WAY\\ No Counter} \\
\hline
$\bm{J_R}$ & Proved & Counterex.~\cite{Hayden:2023yij}
           & Counterex. & Counterex. & \makecell{Proved for\\ Haar random} \\
\hline
\end{tabularx}

\caption{Summary of inequalities for classical correlation measures: $J$ (original), $J_W$ (holographic), and $J_R$ (reflected). S. Superaddit. stands for strong superadditivity.}
\label{tab:cc_ineq}
\end{table}

\iffalse
\begin{table}[h]
\hspace*{-0.5cm} 
\begin{tabular}{|c|c|c|c|c|c|}
\hline
 & \multicolumn{2}{|c|}{\textbf{Monotonicity}} & \multicolumn{2}{|c|}{\textbf{Monogamy/Polygamy}} & \multicolumn{1}{|c|}{\textbf{S. Superaddit.}} \\
\hline
 \textbf{Party} & \textbf{Unmeasured} & \textbf{Measured} & \textbf{Unmeasured} & \textbf{Measured} & \\
\hline
%\textbf{Original} 
$\bm{D}$ & Known & Counterex. & Neither & %Monogamy for TMS (numerical)
Neither & No\\
\hline
%\textbf{Holographic} 
$\bm{D_W}$ & Proved & Counterex. & Polygamous (pure) & Counterex. & No \\
\hline
%\textbf{Reflected} 
$\bm{D_R}$ & Proved & Counterex. & Polygamous (pure) & Counterex. & No \\
\hline
\end{tabular}
\caption{Summary of inequalities for quantum discord measures: $D$ (original), $D_W$ (holographic), and $D_R$ (reflected).} 
\label{tab:qd_ineq}
\end{table}
\fi

\begin{table}[htbp]
\centering
\small
\renewcommand{\arraystretch}{1.25}
\setlength{\tabcolsep}{6pt}

% centered, equal-width "X" columns
\newcolumntype{Y}{>{\centering\arraybackslash}X}

\begin{tabularx}{\linewidth}{|c|Y|Y|Y|Y|Y|}
\hline
 & \multicolumn{2}{c|}{\textbf{Monotonicity}}
 & \multicolumn{2}{c|}{\textbf{Monogamy/Polygamy}}
 & \makecell{\textbf{S.}\\\textbf{Superaddit.}} \\
\hline
\textbf{Party} & \textbf{Unmeasured} & \textbf{Measured}
              & \textbf{Unmeasured} & \textbf{Measured} &  \\
\hline
$\bm{D}$   & Known  & Counterex. & Neither & Neither & No \\
\hline
$\bm{D_W}$ & Proved & Counterex. & \makecell{Polygamous\\(pure)} & Counterex. & No \\
\hline
$\bm{D_R}$ & Proved & Counterex. & \makecell{Polygamous\\(pure)} & Counterex. & No \\
\hline
\end{tabularx}

\caption{Summary of inequalities for quantum discord measures: $D$ (original), $D_W$ (holographic), and $D_R$ (reflected).}
\label{tab:qd_ineq}
\end{table}

Based on topological arguments, our results (summarized in Tables~\ref{tab:cc_ineq} and \ref{tab:qd_ineq}) support the holographic proposal $J=J_W$ and $D=D_W$ by proving that $J_W$ is monotone for both parties and $D_W$ is monotone for the unmeasured party in arbitrary dimensions. We also found that $D_W$ is neither monotonic nor monogamous with respect to the measured party. All of these properties are satisfied by the original correlation measures $J$ and $D$, showing consistency with the proposed duality. 

We also found some special features of holographic correlations. We proved the barrier theorem (Theorem \eqref{thm:barrier}) which has interesting information theoretic consequences. Moreover, we proved that $J_W$ is monogamous for both unmeasured and measured parties and that $D_W$ is polygamous for the unmeasured party when written over a pure state. These features are special to holographic states; the original correlation measures do not satisfy such properties. Similar to monogamy of holographic mutual information, they provide additional necessary conditions for holographic states. For example, the polygamous nature of pure-state holographic discord indicates the absence of GHZ entanglement in these states. It also implies a trade-off relation among the monogamy of mutual (total) correlation, the monogamy of classical correlation, and the polygamy of quantum discord in holography.

Another perspective for the monogamy of $J_W$ stems from the one-way strong superadditivity of one-way distillable entanglement. In~\cite{Mori:2024gwe}, it has been conjectured that $J_W$ is further dual to the one-way distillable entanglement $E_D^{\rm[1WAY]}$, which quantifies the number of EPR pairs that can be distilled via one-way LOCC. From operational reasons, if the proposed duality holds, $J_W$ should obey the one-way strong superadditivity, which further implies monogamy. Indeed, we proved the one-way strong superadditivity of $J_W$, providing a piece of evidence for the conjectured duality $J_W=E_D^{\rm[1WAY]}$. 

On the conjecture between $J_W$ and distillable entanglement, we further ask if $J_W$ could potentially be dual to $E_D\equiv E_D^{\rm[2WAY]}$, the distillable entanglement under two-way LOCC. While the earlier study indicates a sequence of one-way LOCC with randomized measurements would not improve distillability, whether it can be improved under generic two-way LOCC remains open. One supporting evidence comes from the \emph{two-way} strong superadditivity, which $E_D^{\rm[1WAY]}$ does \emph{not} necessarily have but $E_D^{\rm[2WAY]}$ does. 
We found that \emph{two-way} strong superadditivity holds for Haar random states, which are the simplest toy model of holographic states. Also, in pure AdS$_3$, numerically we find no counterexample. Thus, it is tempting to prove the two-way strong superadditivity geometrically in holography. Nevertheless, we also find that while topological proofs succeed in some cases, one remaining case requires a proof beyond a mere topological argument. We leave this conjecture of (two-way) strong superadditivity of $J_W$ as a future problem.

On the technical side, the topological method we employed in this paper works in any dimension and is powerful enough to deal with holographic inequalities involving multiple EWCSs. While there have been numerous such geometric approaches in the past, to our knowledge, only a few papers have developed proofs with such a rigour, especially when multiple EWCSs associated to different subregions are included. 
Compared to entropies as in the holographic entropic cone program, inequalities involving multiple EWCS are less explored, and the homological tools developed and formalized in this paper would be useful in future systematic exploration.

~

Going back to the boundary side, we also propose the computable `reflected' measures $J_R$ and $D_R$ based on the duality between the reflected entropy and EWCS. While the duality itself only holds for holographic states, these reflected measures themselves now become computable in any quantum systems without optimizations. We find the reflected measures are similar to the original measures. Namely, both $J_R$ and $J$ are not monogamous with respect to either party; both $D_R$ and $D$ are monotone for the unmeasured party while not for the measured party and not measured-party monogamous. However, regarding the unmeasured-party polygamy for pure states, $D_R$ is similar to the holographic one. Finally, $J_R$ shows a distinct feature regarding the measured-party monotonicity. It exhibits a counterexample, contrary to the proven monotonicity for $J$ and $J_W$.

~

As a future outlook, a natural direction is the systematic exploration of the ``EWCS cone''. While the Holographic Entropy Cone program characterizes the constraints on von Neumann entropies in holographic states, our results suggest there may be even more, unexplored inequalities involving both multiple EWCSs and HEEs. 
A systematic search for such inequalities, perhaps using the techniques developed here, could reveal the broader space of mixed-state correlations in holography. This would be particularly relevant for defining and constraining multipartite generalizations of discord, where the interplay of multiple EWCSs becomes important.

Furthermore, extending our results to time-dependent spacetimes remains a critical task. Our current proofs are based on the RT minimization on a common time slice. Investigating whether properties like monotonicity, monogamy, and strong superadditivity hold in covariant settings using the HRT formula is essential, particularly in the context of black hole evaporation or quenches.

\acknowledgments
Research at Perimeter Institute is supported in part by the Government of Canada through the Department of Innovation, Science and Economic Development and by the Province of Ontario through the Ministry of Colleges, Universities, Research Excellence and Security.
KL's and HW's work was also supported by Michael Serbinis and Laura Adams through the PSI Start research internship.
This work was supported by JSPS KAKENHI Grant Number 23KJ1154, 24K17047.

\appendix

\section{Summary of notation}\label{s: summary of notation}

We collect the entropic quantities and topological notations used throughout.
Unless stated otherwise, $S_X$ denotes the von-Neumann entropy of $\rho_X$:
$S_X \equiv S(\rho_X):=-\Tr(\rho_X\log\rho_X)$, where $\rho_X=\Tr_{\bar X}\rho$.

\paragraph{Information-theoretic quantities}
\begin{itemize}
\setlength\itemsep{2pt}
     \item Classical conditional entropy~\eqref{eq:clas-cond-ent}: $J_\Pi(A|B) = S_A - \sum_x p_x S(\rho_A^x)$
    \item Classical correlation~\eqref{eq:clas-corr}: $J(A|B)=\max_\Pi J_\Pi (A|B)$ 
    \item Mutual information: $I(A:B)=S_A+S_B-S_{AB}=S_A-S(A|B)$
      \item Entanglement of formation~\eqref{eq:EoF}: $E_F(A:C) = \inf \sum_x p_x S_A(\dyad{\psi_x})$
    \item Quantum discord~\eqref{eq:discord}: $D(A|B)=I(A:B)-J(A|B)$
    \item Entanglement wedge cross section (EWCS)~\eqref{eq:EWCS}: $E_W(A:C) = \min_{\Gamma_{A:C}} \frac{\mathrm{Area}(\Gamma_{A:C})}{4G_N}$
    \item Holographic classical correlation~\eqref{Holographic definitions}: $J_W(A|B) \equiv S_A - E_W(A:C)$
    \item Holographic discord~\eqref{Holographic definitions}: $D_W(A|B) \equiv S_B - S_C + E_W(A:C)$
    \item Reflected entropy~\eqref{eq:reflected entropy}: $S_R(A:C) = S_{AA^\ast}(\ket{\rho^{1/2}})$
    \item Canonical purification of $\rho$~\eqref{eq:cano-purif}: $\ket{\rho^{1/2}}_{AA^\ast CC^\ast}$
    \item Reflected classical correlation~\eqref{Holographic definitions}: $J_R
    (A|B) \equiv S_A - \frac12S_R(A:C)$
    \item Reflected discord~\eqref{Holographic definitions}: $D_R(A|B) \equiv S_B - S_C + \frac12S_R(A:C)$
    \item One-shot distillable entanglement: (See Section~\ref{sec:ED-review} and~\cite{Mori:2024gwe} for the definition)
    
    $E_D^{\rm [1WAY]}(A:B)=\max(E_D^{\rm [1WAY]}(A\leftarrow B),E_D^{\rm [1WAY]}(B\leftarrow A))$
    \item Symmetrized classical correlation~\eqref{eq:1shot-ED}: $J_W(A:B)\equiv\max(J_W(A|B), J_W(B|A))$
    \item Holographic one-way distillable entanglement conjecture~\eqref{eq:ED=JW}: $E_D^{\rm [1WAY]}(A\leftarrow B)=J(A|B)$
\end{itemize}

\paragraph{Topological quantities.} We follow the convention that codimension is counted in the full spacetime;
since all objects lie on the Cauchy slice $\Sigma$, a spacetime codimension-two surface
appears as a codimension-one hypersurface in $\Sigma$.
\begin{itemize}
\setlength\itemsep{2pt}
    \item $\Sigma$ (Definition~\ref{def:bulk}): Cauchy slice (codimension-one spacelike slice) 
     \item For a codimension-one bulk region $U\subseteq\Sigma$:
    \begin{itemize}
        \setlength\itemsep{2pt}
        \item $\partial_\Sigma U\equiv \partial U$ (Definition~\ref{def:boundary}): boundary of $U$ within $\Sigma$, excluding asymptotic boundary segments.
        \item $\partial_\infty U$ (Definition~\ref{def:boundary}): boundary of $U$ on the asymptotic boundary of $\Sigma$.
    \end{itemize}
   
    \item $\gamma_A$: the Ryu-Takayanagi (RT) surface for a boundary subsystem $A$.
    \item $\mathcal{E}(A)$ (Definition~\ref{def:ewedge}): entanglement wedge of $A$ (a codimension-one region in $\Sigma$), with
    $\partial_\infty \mathcal{E}(A)=A$ and $\partial_\Sigma \mathcal{E}(A)=\gamma_A$.
    \item $\cup,\cap$: union or intersection of subsets. 
    \item $\sqcup$: disjoint union of subsets. This means each subset is individually put in the set so that the multiplicity is taken into account. We only use disjoint union topologically when the arguments have zero intersect when considered in $\Sigma$.
        \item Wedge decomposition (Definition~\ref{def:bridged-wedges}, see also Fig.~\ref{fig: bridged wedges}). Decompose as 
         \[
            \mathcal{E}(AB)=\bigsqcup_\alpha W_\alpha,
        \]
        where each $W_\alpha$ is a connected component.
    \begin{itemize}
        \setlength\itemsep{2pt}
        \item Component types:
        $W_\alpha$ is \emph{$A$-only} (resp.\ \emph{$B$-only}) if its asymptotic boundary lies entirely in $A$ (resp.\ $B$);
        it is \emph{$A$--$B$ bridged} if its asymptotic boundary meets both $A$ and $B$. 
        \item Boundary sets induced by the decomposition:
        \begin{itemize}
            \setlength\itemsep{2pt}
            \item $\mathcal{A}\subseteq A$ (resp.\ $\mathcal{B}\subseteq B$): the union of boundary components that are $A$-only (resp.\ $B$-only) components in $\mathcal{E}(AB)$.
            \item $\mathcal{F}_A := A\setminus \mathcal{A}$ and $\mathcal{F}_B := B\setminus \mathcal{B}$: the bridged boundary portions on each side; $\mathcal{F}:=\mathcal{F}_A\cup \mathcal{F}_B$.
        \end{itemize}
        \item $\mathcal{E}_{\rm brid}(A:B)$: $(A:B)$-bridged entanglement wedge, defined as the union of all $A$--$B$ bridged components.
    \end{itemize}
    \item Homology notation:
    $U\sim A$ means $U$ is homologous to $A$ (see \ref{Homology});
    $U\sim_V A$ means homologous to $A$ within a bulk subregion $V\subseteq \Sigma$.
    \item $\operatorname{Int}\mathcal{E}(A)$: open bulk interior of $\mathcal{E}(A)$ (see \ref{Interior}).
    \item $\ell$ is a simple path (Definition \ref{simple path}).
     \item RT admissible class (Definition~\ref{def: RT admissable}):
    $\mathfrak{S}_{A}$ is the class of codimension-2 surfaces homologous to $A$; its minimizer is $\gamma_A$, whose area computes $S_A$.
    \item EWCS admissible class (Definition~\ref{EWCS admissible class}):
    $\mathfrak{S}_{A:B}$ is the class of codimension-2 surfaces in $\mathcal{E}(AB)$ separating $A$ and $B$; its minimizer is $\Gamma^W_{A:B}$, whose area computes $E_W(A:B)$.
   
\end{itemize}

\section{Mathematica code for strong superadditivity of Haar random states}\label{app:code}

\begin{lstlisting}
ClearAll[m];
m[x_, y_] := (x + y - Abs[x - y])/2;

ClearAll[a, b, c, d, e];
vars = {a, b, c, d, e};

(*Constraints*)
cons = And @@ Thread[vars > 0] && a + b + c + d + e == 1;

ab = a + b;
cd = c + d;
de = d + e;
ce = c + e;
be = b + e;
ae = a + e;

abe = a + b + e;
ace = a + c + e;
bde = b + d + e;
abce = a + b + c + e;
bcde = b + c + d + e;
acde = a + c + d + e;
abde = a + b + d + e;
cde = c + d + e;

(*J_W assumptions*)

A1 = m[ab, cde] - m[ab, e] >= m[cd, abe] - m[cd, e];
A2 = m[a, bcde] - m[a, bde] >= m[c, abde] - m[c, bde];
A3 = m[d, abce] - m[d, ace] >= m[b, acde] - m[b, ace];

assumptionsAll = cons && A1 && A2 && A3;

(*Inequality*)

goal = m[ab, cde] + m[a, bde] + m[d, ace] >= 
   m[a, bcde] + m[d, abce] + m[ab, e];

(*Formal proof*)
proof1 = Resolve[ForAll[vars, Implies[assumptionsAll, goal]], Reals];

(*Show NO counterexample: if False empty set of counter examples*)
noCounterEx = Reduce[assumptionsAll && Not[goal], vars, Reals]
\end{lstlisting}

\section{Counterexample to monotonicity of reflected discord in measured party}\label{app: Counter example to monotonicity of reflected discord in the measured party}
We consider the Werner state
\begin{equation}
    \rho_{AC} = p \dyad{\Phi^+}_{AC} + (1-p)\frac{I_{AC}}{4},
\end{equation}
and show
\begin{equation}\label{eq: app to vio}
    I(A:C) \stackrel{?}{\ge} \frac{S_R(A:C)}{2}.
\end{equation}
is not true for all $0\le p\le 1$, thus showing the reflected discord is not monotone with respect to its measured party (see Section \ref{sss: Monotonicity for the measured party}). The mutual information is given by
\begin{equation}
I(A:C) = \frac{1}{4}\left(-3(-1 + p)\log(1 - p) + (1 + 3p)\log(1 + 3p)\right).
\end{equation}
To find $S_R(A:C)$ we first canonically purify to
\begin{equation}
    \ket{\rho_{ACA^*C^*}} = \sqrt{\frac{1-p}{4}}\Big(\ket{\Phi^-\Phi^-} + \ket{0101} + \ket{1010}\Big) + \sqrt{\frac{1+3p}{4}}\ket{\Phi^+\Phi^+},
\end{equation}
with qubit order $ACA^*C^*$ and $\ket{\Phi^\pm\Phi^\pm}=\ket{\Phi^\pm}\otimes\ket{\Phi^\pm}$. Tracing out $CC^*$ we find
\begin{align}
\rho_{AA^*}
&= \frac{3 - p + \sqrt{(1-p)(1+3p)}}{8}\,\big( \dyad{00} + \dyad{11} \big)
\\&+ \frac{1 + p - \sqrt{(1-p)(1+3p)}}{8}\,\big( \dyad{01} + \dyad{10} \big)
\\&+ \frac{1 - p + \sqrt{(1-p)(1+3p)}}{4}\,\big( \ketbra{00}{11} + \ketbra{11}{00} \big).
\end{align}
It is block-diagonal with respect to the basis $\{\ket{00},\ket{11}\}\oplus\{\ket{01},\ket{10}\}$ and so we find
\[
S_R(A:C)
= - \lambda_{1} \log \lambda_{1}
  - 3\,\lambda_{2} \log \lambda_{2},
\]
with
\[
\lambda_{1} =
\frac{5 - 3p + 3\sqrt{(1-p)(1+3p)}}{8},
\qquad
\lambda_{2} =
\frac{1 + p - \sqrt{(1-p)(1+3p)}}{8}.
\]
The original inequality \eqref{eq: app to vio} is false for $0\leq p\leq p_\text{max}$ with $p_\text{max}\approx 0.41598$.

\section{Additional tools for proofs}
\label{s: complete proofs}

For checking EWCS admissibility of a surface, it is more constructive to consider the behavior of paths between the arguments of the EWCS and whether they must cross said surface. We formalize this as follows.

\begin{definition}[Simple path]\label{def:C1-curves} 
We define a simple path on $\Sigma$ as an injective continuous map
\[
\ell\colon [0,1] \to \Sigma.
\]
This means there are no self-intersections but the path can generally be jagged, have non-smooth corners and the like.
\label{simple path}
\end{definition}

\begin{definition}[Path adjancy to a subsystem]
Let $\Sigma$ be the bulk Cauchy slice with regulated asymptotic boundary
$\partial_\infty\Sigma$. Let $A$ be a boundary subsystem ($A\subset\partial_\infty\Sigma$) and set
$A^\circ:=\mathrm{int}_{\partial_\infty\Sigma}(A):=A\setminus\partial A$.
Fix a collar embedding $c:\partial_\infty\Sigma\times[0,\varepsilon_0)\hookrightarrow\Sigma$
with $c(p,0)=p \in \Sigma$ where $\varepsilon_0 >0$ is some small number. For $0<\varepsilon<\varepsilon_0$ define
\[
\mathrm{Adj}_\varepsilon(A):=c\!\left(A^\circ\times(0,\varepsilon)\right)\subset\Sigma.
\]
We say $x\in\Sigma$ is adjacent to $A$ if $x\in\mathrm{Adj}_\varepsilon(A)$.
A simple path $\ell:[0,1]\to\Sigma$ starts (ends) adjacent to $A$ if
$\ell(0)\in\mathrm{Adj}_\varepsilon(A)$ (resp.\ $\ell(1)\in\mathrm{Adj}_\varepsilon(A)$).
\end{definition}

\begin{remark}[Codimension on $\Sigma$.]
We count codimension with respect to the full spacetime. 
Since all surfaces considered here lie on the Cauchy slice $\Sigma$ (see Assumption~\ref{assump:sym}), 
any spacetime codimension-two surface $\gamma\subset\Sigma$ is a codimension-one hypersurface when considered in $\Sigma$ 
and may therefore separate $\Sigma$ and intersect $\Sigma$-paths.
\end{remark}

\begin{remark}[Path--separation formulation]\label{rem:path-separation}
Condition \emph{(ii)} in Definition~\ref{def: EWCS admissable} can be equivalently phrased
as a path--separation property in terms of simple paths. Namely, let $\Gamma\in\mathfrak S_{A:B}$ be a properly embedded, piecewise smooth codimension-$2$ hypersurface
with $\partial_\Sigma\Gamma\subset\gamma_{AB}$.
Then condition \emph{(ii)} is equivalent to the requirement that
every simple path
\[
\ell \subset \mathcal E(AB)
\]
whose endpoints lie on opposite sides of $\Gamma$
(for instance, one endpoint in a neighborhood of $A$ and the other in a neighborhood of $B$ on the boundary)
must intersect $\Gamma$ at least once.\footnote{These paths are similar to bit threads, discussed in a slightly different context~\cite{Freedman:2016zud}. The path-separating surface there is interpreted as a RT surface after optimization.}

Conversely, if a smooth, embedded hypersurface
$\Gamma\subset\mathcal E(AB)$ with bulk anchoring
$\partial_\Sigma\Gamma\subset\gamma_{AB}$ has the property that every such simple path from the $A$-side to the $B$-side intersects $\Gamma$, then the complement
$\mathcal E(AB)\setminus\Gamma$ decomposes as
\begin{equation}
\mathcal E(AB)\setminus\Gamma = U_A \sqcup U_B
\end{equation}
with $\partial_\infty U_A=A$ and $U_B\partial_\infty =B$. Then $\Gamma$ is EWCS admissible and in particular
\begin{equation}
    \frac{\Area(\Gamma)}{4G_N} \geq E_W(A:B),
\end{equation}
by minimality. Thus checking the intersection properties of specific simple paths are sufficient to prove EWCS admissibility. 
\end{remark}

\begin{figure}[ht]
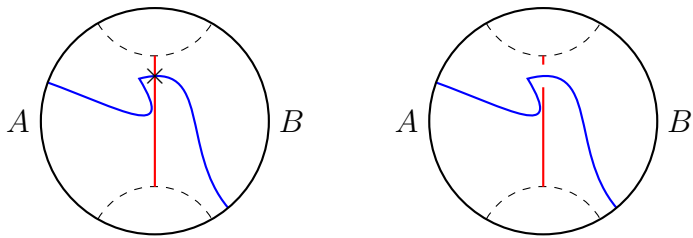

    \centering
\scalebox{1}{
\input{Sketch_Figures/connectedness/LHS}
\hspace{0.5cm}
\input{Sketch_Figures/connectedness/RHS}
}
    \caption{Dashed lines are $\gamma_{AB}$, the red line $\Gamma$ is some surface we are checking for $(A:B)$ EWCS admissibility, and the blue some simple path $\ell$. On the left, $\Gamma$ is EWCS admissible as all choices of $\ell$ intersect $\Gamma$ when traveling from $A$ to $B$ within $\mathcal{E}(AB)$. On the right, it is not however as $\ell$ can start adjacent to $A$ and reach $B$ without crossing $\Gamma$.} 
    \label{fig: crossing of C1 curves}
\end{figure}

Our proofs below rely on comparing various surfaces on the cauchy slice. In particular the proof of equation \eqref{eq: simple inequality to prove}, worked well as given the constraint of a bridged wedge when we defined $\mathbf{L}$ in \eqref{eq: first mathbfl definition} we had that $\gamma_{AB}\cap\Gamma^W_{A:B}=0$ (except points) hence
\[
\frac{\Area(\mathbf{L})}{4G_N} = \frac{\Area(\gamma_{AB}\cup\Gamma^W_{A:B})}{4G_N} = \frac{\Area(\gamma_{AB})}{4G_N} + \frac{\Area(\Gamma^W_{A:B})}{4G_N} = S_{AB} + E_W(A:B).
\]
But in doing this we have naturally embedded both $\gamma_{AB}$ defining $S_{AB}$, and $\Gamma^W_{A:B}$ defining $E_W(A:B)$ onto the same Cauchy slice and computed areas from there. Of course if they have non-zero intersect then formally union removes multiplicity. As our proofs are strict numerical inequalities we therefore need to keep track of the multiplicity of surfaces when translating from the numerical measures to their topological description. This motivates the following definition.

\begin{definition}[Sheeted slice, multisurfaces, and multiplicity bookkeeping]
\label{def:multisurface_sheeted_simple}

Fix the bulk slice $\Sigma$. For each $N\in\mathbb{Z}^+$ let
\[
\widehat{\Sigma}_N := \bigsqcup_{n=1}^N \Sigma^{(n)}
\]
be the disjoint union of $N$ isometric copies (“sheets”) of $\Sigma$.
Write $\iota_n:\Sigma\hookrightarrow \widehat{\Sigma}_N$ for the canonical inclusion into sheet $n$ and
$\pi:\widehat{\Sigma}_N\to\Sigma$ for the projection forgetting the sheet label.

\begin{figure}[ht]
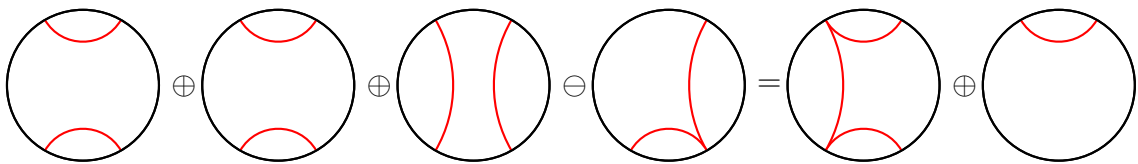

    \centering
\scalebox{1}{
\input{Sketch_Figures/multisurface/fig1}
\hspace{-0.4cm}
\begin{tikzpicture}
   \node at (0,0.87) {$\oplus$};
   \node at (0,0) {};
\end{tikzpicture}
\hspace{-0.35cm}
\input{Sketch_Figures/multisurface/fig2}
\hspace{-0.4cm}
\begin{tikzpicture}
   \node at (0,0.87) {$\oplus$};
     \node at (0,0) {};
\end{tikzpicture}
\hspace{-0.35cm}
\input{Sketch_Figures/multisurface/fig3}
\hspace{-0.4cm}
\begin{tikzpicture}
   \node at (0,0.87) {$\ominus$};
     \node at (0,0) {};
\end{tikzpicture}
\hspace{-0.35cm}
\input{Sketch_Figures/multisurface/fig4}
\hspace{-0.4cm}
\begin{tikzpicture}
   \node at (0,0.87) {$=$};
   \node at (0,0) {};
\end{tikzpicture}
\hspace{-0.35cm}
\input{Sketch_Figures/multisurface/fig5}
\hspace{-0.4cm}
\begin{tikzpicture}
   \node at (0,0.87) {$\oplus$};
   \node at (0,0) {};
\end{tikzpicture}
\hspace{-0.35cm}
\input{Sketch_Figures/multisurface/fig6}
}
    \caption{Example of multisurface addition and subtractions for some a representatives of four multisurfaces.}
    \label{fig: multisurface definition}
\end{figure}

\noindent
Throughout, “surface” means a \emph{connected} embedded codimension-two hypersurface\footnote{Recall we use codimension in terms of the full spacetime, not the Cauchy slice.} in $\Sigma$ of finite area
(we identify surfaces that differ only on sets of codimension-two measure zero).
A disconnected surface is always understood as the finite disjoint union of its connected components.

\smallskip
\noindent\textbf{Multisurface (representative).}
A \emph{representative} of a multisurface is a finite embedded union
\[
X \;=\; \bigsqcup_{j=1}^k \iota_{s(j)}(\gamma_j)\ \subset\ \widehat{\Sigma}_N,
\]
where each $\gamma_j\subset\Sigma$ is a connected surface and $s(j)\in\{1,\dots,N\}$.
(Thus multiple copies of the same $\gamma$ are represented by placing them on different sheets, possibly with other
surfaces on the same sheet.)

\smallskip
\noindent\textbf{Equivalence.}
Two representatives $X\subset\widehat{\Sigma}_N$ and $X'\subset\widehat{\Sigma}_{N'}$ are \emph{equivalent} if,
after possibly adding empty sheets to either side, one can relabel sheets to match them.
Concretely: there exists $L\ge\max\{N,N'\}$ and a permutation $\sigma\in S_L$ such that, viewing both as subsets of
$\widehat{\Sigma}_L$ (by adding empty sheets),
\[
X' \;=\; \widehat{\sigma}(X),
\]
where $\widehat{\sigma}$ is the induced isometry of $\widehat{\Sigma}_L$ permuting sheets by $\sigma$.
A \emph{multisurface} $\mathbf{X}$ is an equivalence class of such representatives.

\smallskip
\noindent\textbf{Support.}
For any representative $X\subset\widehat{\Sigma}_N$ of $\mathbf{X}$ define the support in the physical slice by
\[
|\mathbf{X}| \;:=\; \pi(X)\subset\Sigma.
\]
This is independent of the representative (sheet relabeling does not change $\pi$).

\smallskip
\noindent\textbf{Multi-union (stacking).}
Given multisurfaces $\mathbf{X},\mathbf{Y}$ with representatives
$X\subset\widehat{\Sigma}_N$ and $Y\subset\widehat{\Sigma}_M$, define $\mathbf{X}\oplus\mathbf{Y}$ to be the class of
\[
X \sqcup Y \;\subset\; \widehat{\Sigma}_{N+M},
\]
where $X$ is placed on the first $N$ sheets and $Y$ on the last $M$ sheets. This is well-defined up to equivalence. If $\gamma$ is some surface on $\Sigma$ we define $\mathbf{X}\oplus\gamma$ as above taking $\gamma$ to be a multisurface of multiplicity one.

\smallskip
\noindent\textbf{Difference (removing copies).}
We say $\mathbf{Y}$ is \emph{removable from} $\mathbf{X}$, and write $\mathbf{Y}\preceq\mathbf{X}$, if there exist
representatives $X,Y\subset\widehat{\Sigma}_L$ on a common sheet number $L$ such that $Y\subset X$ as subsets.
In that case define
\[
\mathbf{X}\ominus\mathbf{Y}
\]
to be the multisurface represented by the set-theoretic difference $X\setminus Y\subset\widehat{\Sigma}_L$.
This is well-defined: different choices of representatives with $Y\subset X$ differ only by adding empty sheets and
permuting sheets, which preserves the equivalence class of $X\setminus Y$.

\smallskip
\noindent\textbf{Area with multiplicity.}
For a representative $X\subset\widehat{\Sigma}_N$ define
\[
\Area^\oplus(\mathbf{X}) \;:=\; \Area(X) \;=\; \sum_{j=1}^k \Area(\gamma_j),
\]
i.e.\ the ordinary area computed in the disjoint union $\widehat{\Sigma}_N$.
Then $\Area^\oplus(\mathbf{X}\oplus\mathbf{Y})=\Area^\oplus(\mathbf{X})+\Area^\oplus(\mathbf{Y})$, and if
$\mathbf{Y}\preceq\mathbf{X}$ then $\Area^\oplus(\mathbf{X}\ominus\mathbf{Y})
=\Area^\oplus(\mathbf{X})-\Area^\oplus(\mathbf{Y})$.
\end{definition}

\section{Proof: Barrier theorem}
\label{Barrier proof}
\barriertheorem*
Here we show the barrier theorem via two methods: reflected entropy (in general dimensions), and via cycles on a two-dimensional slice.

\subsection{Proof by reflected entropy}

\begin{proof}[Proof of Theorem~\ref{thm:barrier} (via reflected entropy)] The simplest proof is seen by EWN and the relation in \eqref{eq: ew = 2sr}. We will show that $\Gamma^W_{A:B}\cap\operatorname{Int}\mathcal{E}(A)=\varnothing$ and then the $B$ statement follows by symmetry.

\medskip\noindent
\textbf{Step 1: Reduction to a single bridged component.} If $A$ and $B$ have no $(A{:}B)$-bridged entanglement wedge, then $E_W(A{:}B)=0$ and we may take $\Gamma^W_{A:B}=\varnothing$, so the claim is trivial. Hence assume $A$ and $B$ have at least one bridged component.
Distinct bridged components of $\mathcal{E}(AB)$ are disjoint and the EWCS minimizer decomposes additively across them, so it suffices to work in a single connected bridged wedge, which we continue to denote by $\mathcal{E}(AB)$.

\medskip\noindent
\textbf{Step 2: Canonical purification and the doubled wedge geometry.}
Let $\rho_{AB}$ be the boundary state on $AB$. Consider its canonical purification
$\ket{\sqrt{\rho_{AB}}}\in\mathcal{H}_{AB}\otimes\mathcal{H}_{A'B'}$,
where $A',B'$ are isomorphic purifying systems. In the holographic, time-reflection-symmetric setup, we model the bulk
Cauchy slice dual to $\ket{\sqrt{\rho_{AB}}}$ by the \emph{double} of $\mathcal{E}(AB)$ along its RT surface:
\begin{equation}\label{eq:doubled-slice}
\widetilde{\Sigma}
\;:=\;
\mathcal{E}(AB)\ \cup_{\ \gamma_{AB}\ }\ \mathcal{E}(AB)'\,,
\end{equation}
where $\mathcal{E}(AB)'$ is an isometric copy of $\mathcal{E}(AB)$, and the two copies are glued along their common
codimension-$1$ boundary
\[
\gamma_{AB}=\partial_\Sigma\mathcal{E}(AB)
\]
(i.e.\ the RT surface of $AB$ on $\Sigma$). The resulting $\widetilde{\Sigma}$ carries a $\mathbb{Z}_2$ involutive
isometry $\mathcal{R}$ exchanging the two copies (primed $\leftrightarrow$ unprimed) and fixing $\gamma_{AB}$ pointwise. We regard the unprimed copy as an isometric
embedded submanifold of $\widetilde{\Sigma}$, so any subset of $\mathcal{E}(AB)$ is canonically identified with its image in
$\widetilde{\Sigma}$.

Now let $\Gamma\in\mathfrak{S}_{A:B}$ be any EWCS-admissible surface for $(A{:}B)$ inside the unprimed copy. Define its $\mathbb{Z}_2$-doubling by
\begin{equation}\label{eq:Gamma-tilde-def}
\widetilde{\Gamma}\;:=\;\Gamma\ \cup\ \mathcal{R}(\Gamma)\ \subset\ \widetilde{\Sigma}.
\end{equation}
Since $\partial_\Sigma\Gamma\subset \gamma_{AB}$, the doubled surface $\widetilde{\Gamma}$ is a closed embedded
codimension-$2$ hypersurface in $\widetilde{\Sigma}$ (any junction along $\gamma_{AB}$ is a set of codimension-$2$ measure
zero and does not affect area or admissibility). Moreover, since $\Gamma$ separates $\mathcal{E}(AB)$ into an $A$-side and
a $B$-side, the doubled surface $\widetilde{\Gamma}$ separates $\widetilde{\Sigma}$ into an $AA'$-side and a $BB'$-side.
Equivalently, $\widetilde{\Gamma}$ is RT-admissible (i.e.\ a valid homology competitor) for both boundary regions
$AA'$ and $BB'$ in the doubled geometry.

\medskip\noindent
\textbf{Step 3: Specialize to the EWCS minimizer and identify an RT surface.}
Now take $\Gamma=\Gamma^W_{A:B}$, the area-minimizing member of $\mathfrak{S}_{A:B}$. In the canonical purification,
the reflected entropy is defined as
\[
S_R(A{:}B)\;=\;S_{\sqrt{\rho_{AB}}}(AA') \;=\; S_{\sqrt{\rho_{AB}}}(BB'),
\]
which in the static holographic setting gives
\[
S_{\sqrt{\rho_{AB}}}(AA')=\frac{\Area(\gamma_{AA'})}{4G_N},
\]
where $\gamma_{AA'}\subset \widetilde{\Sigma}$ is the RT surface for $AA'$.
By the leading-order holographic relation in \eqref{eq: ew = 2sr} we have that
\begin{equation}\label{eq:area-identity-AAp}
\Area(\gamma_{AA'})
\;=\;
2\,\Area(\Gamma^W_{A:B})
\;=\;
\Area\!\bigl(\widetilde{\Gamma^W_{A:B}}\bigr),
\end{equation}
where $\widetilde{\Gamma^W_{A:B}}:=\Gamma^W_{A:B}\cup \mathcal{R}(\Gamma^W_{A:B})$.
But Step~2 showed $\widetilde{\Gamma^W_{A:B}}$ is RT-admissible for $AA'$, hence by RT minimality,
$\widetilde{\Gamma^W_{A:B}}$ is itself an RT minimizer for $AA'$ (possibly among several, if the RT surface is non-unique), and so we can identify
\begin{equation}\label{eq:choose-RT}
\gamma_{AA'}=\widetilde{\Gamma^W_{A:B}}.
\end{equation}
This is seen in the central panel of Figure \ref{Fig: barrier by SR}.

\begin{figure}[ht]
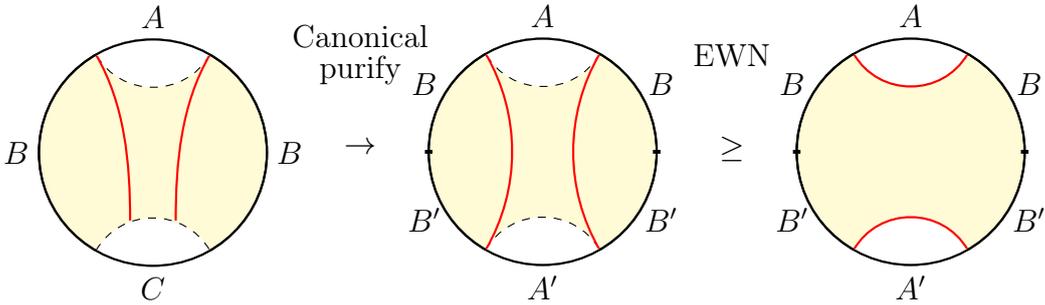

\centering
\scalebox{1}{
\input{Sketch_Figures/barrierNew/fig1}
\begin{tikzpicture}
  \node at (0,-0.5) {$\rightarrow$};
  \node[overlay, anchor=south] at (0,0.7) {Canonical};
  \node[overlay, anchor=south] at (0,0.2) {purify};
  \node at (0,-2.5) { };
\end{tikzpicture}
\input{Sketch_Figures/barrierNew/fig2}
\begin{tikzpicture}
  \node at (0,-0.5) {$\geq$};
  \node[overlay, anchor=south] at (0,0.45) {EWN};
  \node at (0,-2.5) { };
\end{tikzpicture}
\input{Sketch_Figures/barrierNew/fig3}
}
\caption{Demonstration of barrier theorem by reflected entropy and EWN. In the first figure we have $\Gamma^W_{A:B}$ naively traverses the yellow region $\mathcal{E}(B)$. Canonically purifying to $\rho_{ABA'B'}$ this becomes the reflected slice in central figure. This new red curve has length $2E_W(A:B) = S_R(A:B) = S_{BB'}$, by equation \eqref{eq: ew = 2sr}. However this violates EWN on this slice and so the original choice of $\Gamma^W_{A:B}$ was non-extremal.}
\label{Fig: barrier by SR}
\end{figure}

\medskip\noindent
\textbf{Step 4: Apply EWN in the doubled geometry.}
Since $\gamma_{AA'}$ is an RT surface for $AA'$ on $\widetilde{\Sigma}$, it lies on the boundary of the entanglement wedge
$\mathcal{E}_{\widetilde{\Sigma}}(AA')$ and therefore does not intersect its interior:
\begin{equation}\label{eq:RT-outside-interior}
\gamma_{AA'}\cap \Int\mathcal{E}_{\widetilde{\Sigma}}(AA')=\varnothing.
\end{equation}
However by EWN and our embedding
\begin{equation}
    \mathcal{E}_\Sigma(A) = \mathcal{E}_{\widetilde{\Sigma}}(A) \subseteq \mathcal{E}_{\widetilde{\Sigma}}(AA'),
\end{equation}
with equality iff the state on $A$ is pure, and so
\begin{equation}
    \widetilde{\Gamma^W_{A:B}} \cap \operatorname{Int}\mathcal{E}_{\widetilde{\Sigma}}(A) = \varnothing, \quad\Leftrightarrow \quad\Gamma^W_{A:B} \cap \operatorname{Int}\mathcal{E}_\Sigma(A) = \varnothing,
\end{equation}
proving the claim. The $B$ statement holds by symmetry and this concludes the proof.
\end{proof}

\subsection{Proof by cycles}\label{app:proof}

Throughout this subsection we work on a two-dimensional Cauchy slice $\Sigma$ and assume (as in the main text)
that $\mathcal{E}(AB)$ is a single connected bridged wedge. Namely, we decompose the EWCS minimizer into a disjoint union of its connected components
\begin{equation}
\Gamma^W_{A:B}=\bigsqcup_{k} \sigma_k ,
\end{equation}
where each $\sigma_k$ is a smooth embedded geodesic segment in $\Sigma$. Moreover if $A$ is disjoint on the boundary then its RT surface decomposes as 
\begin{equation}
    \gamma_A = \bigsqcup_i \gamma^A_i,
\end{equation}
with each $\gamma_i^A$ a geodesic.

Moreover, we define an \emph{excursion} into $A$ as follows. Fix a component $\sigma$ and parameterize it as an injective continuous map $\sigma:[0,1]\to\Sigma$
with $\sigma((0,1))$ in the bulk. Set
\[
T:=\{t\in(0,1):\sigma(t)\in \Int\mathcal{E}(A)\},
\]
such that if $T\neq\varnothing$, pick a connected component $(\tau_1,\tau_2)\subset T$ and set
\[
\alpha:=\sigma([\tau_1,\tau_2])\subset \mathcal{E}(A), \qquad
\alpha^\circ:=\sigma((\tau_1,\tau_2))\subset \Int\mathcal{E}(A).
\]
By maximality of $(\tau_1,\tau_2)$, the endpoints lie on $\partial\mathcal{E}(A)\cap\Sigma=\gamma_A$:
\begin{equation}\label{eq:excursion-endpoints-clean}
x:=\sigma(\tau_1)\in\gamma_i^A,\qquad y:=\sigma(\tau_2)\in\gamma_j^A,
\end{equation}
for some indices $i,j$ (possibly $i=j$). We term $\alpha^\circ$ the excursion into $A$.

We will prove that no component of $\Gamma^W_{A:B}$ enters $\Int \mathcal{E}(A)$.
The only nontrivial case is when a component enters $\Int\mathcal{E}(A)$ and exits through \emph{different}
components of $\gamma_A$; this will be ruled out by a cycle-and-surgery argument which is summarized in Figure \ref{fig: example i is not j}. By symmetry of $A\leftrightarrow B$ we hence show Theorem \ref{thm:barrier}.

Some important notations defined in this subsection are summarized and visualized in Figure~\ref{Fig: barrier by cycle}.

\begin{lemma}[Even-intersection lemma]\label{lem:even-intersection-app}
Fix a connected component $\gamma_i^A$ of $\gamma_A$. Then the number of bulk intersection points
$\gamma_i^A\cap \Gamma^W_{A:B}$ is even.
\end{lemma}

\begin{proof}
Because $\Gamma^W_{A:B}$ is EWCS-admissible for $(A{:}B)$, it separates $\mathcal{E}(AB)$ into
two regions $U_A$ and $U_B$ with $\partial_\infty U_A=A$ and $\partial_\infty U_B=B$.
Along $\gamma_i^A$, define a ``side label'' by declaring a nearby point to lie on the $A$-side or $B$-side
according to whether it lies in $U_A$ or $U_B$. This label can change only when $\gamma_i^A$ crosses
$\Gamma^W_{A:B}$. At both UV ends of $\gamma_i^A$ the label is the same (namely the $A$-side),
so the label must flip an even number of times. Hence $\#(\gamma_i^A\cap \Gamma^W_{A:B})$ is even.
\end{proof}

\begin{figure}[ht]
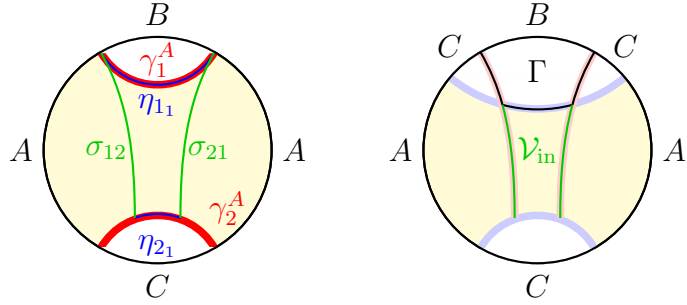

\centering
\scalebox{1}{
\input{Sketch_Figures/barrierCycle/fig1}
\hspace{0.4cm}
\input{Sketch_Figures/barrierCycle/fig2}
}
\caption{Notation used in proof of the barrier theorem by cycles. In both panels we have $\mathcal{E}(A)$ shaded in yellow. In the first panel the green curves are the conjectured $\Gamma^W_{A:B}$ and the blue curves are the resulting EWCS $\Gamma$ by \eqref{eq:Gammahat-def-clean}. By Lemma \ref{lem:strict-area-fixed} the area of the blue curves is strictly less than the area of the green curves. In the second panel, unlike the first, we have $\mathcal{V}_{\text{in}}\neq \Gamma^W_{A:B}$. The resulting EWCS by \eqref{eq:Gammahat-def-clean} is in black and has strictly less area than the original EWCS surface in salmon.}
\label{Fig: barrier by cycle}
\end{figure}

\begin{lemma}[Cycle extraction from a cross-component excursion]\label{lem:cycle-extraction-clean}
Assume there exists an excursion $\alpha^\circ\subset\Int\mathcal{E}(A)$ with endpoints
$x\in\gamma_i^A$ and $y\in\gamma_j^A$ for $i\neq j$. Then there exists a finite cyclic chain of EWCS components
\begin{equation}\label{eq:cycle-V-clean}
\mathcal{V}:=\{\sigma_{i_1 i_2},\,\sigma_{i_2 i_3},\,\dots,\,\sigma_{i_n i_1}\}\subset \Gamma^W_{A:B},
\qquad n\ge 2,
\end{equation}
such that each $\sigma_{i_k i_{k+1}}$ has a nontrivial excursion into $\Int\mathcal{E}(A)$ with endpoints on
$\gamma_{i_k}^A$ and $\gamma_{i_{k+1}}^A$ (and $i_k\neq i_{k+1}$ for all $k$).
Let $\mathcal{V}_{\mathrm{in}} := \mathcal{V}\cap\Int\mathcal{E}(A)$ denote the portions of these components lying in $\Int\mathcal{E}(A)$; then
$\mathcal{V}_{\mathrm{in}}\neq\varnothing$.
\end{lemma}

\begin{proof}
Build a finite multigraph $G$ whose vertices are the indices $i$ labeling components $\gamma_i^A$,
and whose edges are the EWCS components $\sigma_k$ that admit an excursion into $\Int\mathcal{E}(A)$:
an edge connects $i$ to $j$ if some excursion on $\sigma_k$ has endpoints on $\gamma_i^A$ and $\gamma_j^A$.

By assumption $G$ contains at least one edge between distinct vertices $i\neq j$.
Moreover, by Lemma~\ref{lem:even-intersection-app}, every vertex in $G$ has even degree
(indeed, excursions contribute bulk intersections on $\gamma_i^A$, and the parity constraint forces them to pair).
Hence, starting from the edge $i\!-\!j$ and walking along adjacent edges, one can never get stuck at a vertex,
and since $G$ is finite the walk must eventually repeat a vertex, producing a cycle of edges.
Translating back to EWCS components gives \eqref{eq:cycle-V-clean}. At least one edge of the cycle exists by assumption,
so $\mathcal{V}_{\mathrm{in}}\neq\varnothing$.
\end{proof}

\begin{lemma}[Closing the cycle on $\gamma_A$]\label{lem:C-simple-closed}
Let $\mathcal{V}$ be a cycle as in \eqref{eq:cycle-V-clean}, and let $\mathcal{V}_{\mathrm{in}}$ be its portions in
$\Int\mathcal{E}(A)$. For each vertex $i_k$ in the cycle, let
\[
p_{k,-}:=\gamma_{i_k}^A\cap \sigma_{i_{k-1}i_k},\qquad
p_{k,+}:=\gamma_{i_k}^A\cap \sigma_{i_ki_{k+1}}
\]
be the two bulk intersection points contributed by the cycle.
Let $\eta_{i_k}\subset\gamma_{i_k}^A$ be the (unique) sub-arc connecting $p_{k,-}$ to $p_{k,+}$, and set
\begin{equation}\label{eq:P-def-clean}
\mathcal{P}:=\bigcup_{k=1}^n \eta_{i_k}.
\end{equation}
Then
\begin{equation}\label{eq:C-def-clean}
\mathcal{C}:=\mathcal{V}_{\mathrm{in}}\ \cup\ \mathcal{P}
\end{equation}
is a simple closed curve contained in $\mathcal{E}(A)$, and hence bounds an open region
$U\subset\Int\mathcal{E}(A)$ with $\partial_\infty U=\varnothing$ (except points) and $\partial_\Sigma U=\mathcal{C}$.
\end{lemma}

\begin{proof}
By construction, $\mathcal{C}$ is closed: each geodesic segment in $\mathcal{V}_{\mathrm{in}}$ has endpoints on
$\gamma_A$, and the arcs $\eta_{i_k}$ join precisely the endpoints of adjacent segments in the cyclic order.

To see $\mathcal{C}$ is simple, note that distinct EWCS components are disjoint in the bulk
(otherwise one can cut-and-reglue to reduce total length, contradicting minimality of $\Gamma^W_{A:B}$),
and the arcs $\eta_{i_k}$ lie on distinct components of $\gamma_A$. Hence the only intersections occur at the shared
endpoints $p_{k,\pm}$.

Since $\mathcal{C}\subset \mathcal{E}(A)$ and does not reach $\partial_\infty\Sigma$, the Jordan curve theorem on the
regulated slice implies $\mathcal{C}$ bounds an open region $U$ with $\partial_\infty U=\varnothing$.
Moreover $U\subset \Int\mathcal{E}(A)$ because $\mathcal{C}$ meets $\partial\mathcal{E}(A)$ only along $\gamma_A$.
\end{proof}

\begin{lemma}[Surgery and EWCS-admissibility]\label{lem:Gammahat-admissible}
Let $\mathcal{C}$ and $U$ be as in Lemma~\ref{lem:C-simple-closed}. Define the surgically modified surface
\begin{equation}\label{eq:Gammahat-def-clean}
\Gamma
\;:=\;
\bigl(\Gamma^W_{A:B}\setminus \mathcal{V}_{\mathrm{in}}\bigr)\ \cup\ \mathcal{P}.
\end{equation}
Then $\Gamma$ is EWCS-admissible for $(A{:}B)$, i.e.\ $\widehat{\Gamma}\in\mathfrak{S}_{A:B}$.
\end{lemma}

\begin{proof}
Let $U_A,U_B$ be the complementary regions for $\Gamma^W_{A:B}$:
\[
\mathcal{E}(AB)\setminus \Gamma^W_{A:B}=U_A\sqcup U_B,\qquad \partial_\infty U_A=A,\ \partial_\infty U_B=B.
\]
By Lemma~\ref{lem:C-simple-closed}, the loop $\mathcal{C}$ bounds a bubble $U\subset\Int\mathcal{E}(A)$ with
$\partial_\infty U=\varnothing$. Replacing $\mathcal{V}_{\mathrm{in}}$ by $\mathcal{P}\subset\partial U$
shifts the separating surface across the bubble. Define
\[
\widehat{U}_A:=U_A\cup U,\qquad \widehat{U}_B:=U_B\setminus U.
\]
Since $\partial_\infty U=\varnothing$, we still have $\partial_\infty \widehat{U}_A=A$ and
$\partial_\infty \widehat{U}_B=B$. By construction, the common bulk boundary between $\widehat{U}_A$ and $\widehat{U}_B$
inside $\mathcal{E}(AB)$ is exactly $\Gamma$, hence $\Gamma$ separates $\mathcal{E}(AB)$ into an
$A$-side and a $B$-side with the correct asymptotic boundaries. Therefore $\Gamma$ is EWCS-admissible.
\end{proof}

\begin{lemma}[Strict area improvement]\label{lem:strict-area-fixed}
With $\Gamma$ as in \eqref{eq:Gammahat-def-clean}, one has
\[
\Area(\Gamma)\ <\ \Area(\Gamma^W_{A:B}).
\]
\end{lemma}

\begin{proof}
Define the surfaces
\[
X:=\Gamma^W_{A:B}\cup \mathcal{P},
\qquad
Y:=\Gamma\cup \mathcal{V}_{\mathrm{in}}.
\]
By construction, $X$ and $Y$ have the same support as sets (we have swapped $\mathcal{V}_{\mathrm{in}}$ and $\mathcal{P}$),
so
\[
X=Y.
\]
Taking areas and using that overlaps occur only at finitely many junction points (measure zero), we obtain
\begin{equation}\label{eq:area-swap-identity-clean}
\Area(\Gamma^W_{A:B}) + \Area(\mathcal{P})
\;=\;
\Area(\Gamma) + \Area(\mathcal{V}_{\mathrm{in}}).
\end{equation}
Thus, it suffices to show
\begin{equation}\label{eq:P<Vin-clean}
\Area(\mathcal{P})\ <\ \Area(\mathcal{V}_{\mathrm{in}}).
\end{equation}

To prove \eqref{eq:P<Vin-clean}, use RT minimality of $\gamma_A$ via a cut-and-reglue argument.
The loop $\mathcal{C}=\mathcal{V}_{\mathrm{in}}\cup\mathcal{P}$ bounds a bubble $U\subset\Int\mathcal{E}(A)$ with
$\partial_\infty U=\varnothing$ (Lemma~\ref{lem:C-simple-closed}). Excising $U$ from $\mathcal{E}(A)$ produces a new region
$\mathcal{E}(A)'=\mathcal{E}(A)\setminus U$ with the same asymptotic boundary $A$ but with bulk boundary
\[
\partial_\Sigma \mathcal{E}(A)'
=
(\gamma_A \setminus \mathcal{P})\ \cup\ \mathcal{V}_{\mathrm{in}}.
\]
Hence $(\gamma_A\setminus \mathcal{P})\cup\mathcal{V}_{\mathrm{in}}$ is a valid RT competitor for $A$, so
\begin{align}
\Area(\gamma_A)
&\le
\Area\!\Bigl((\gamma_A\setminus \mathcal{P})\ \cup\ \mathcal{V}_{\mathrm{in}}\Bigr)\notag\\
&=
\Area(\gamma_A) - \Area(\mathcal{P}) + \Area(\mathcal{V}_{\mathrm{in}}),
\end{align}
which implies $\Area(\mathcal{P})\le \Area(\mathcal{V}_{\mathrm{in}})$.

The inequality is strict: if $\Area(\mathcal{P})=\Area(\mathcal{V}_{\mathrm{in}})$ then the competitor would also be
RT-minimizing for $A$. Under the assumed uniqueness/no-multiple-intersections of minimizing geodesics on the slice
(see Assumption~\ref{thm: No multiple intersections of geodesics on an AdS Cauchy slice}),
this forces $\mathcal{V}_{\mathrm{in}}\subset\gamma_A$, contradicting $\mathcal{V}_{\mathrm{in}}\subset\Int\mathcal{E}(A)$.
Thus, \eqref{eq:P<Vin-clean} holds. Substituting \eqref{eq:P<Vin-clean} into \eqref{eq:area-swap-identity-clean} gives
$\Area(\Gamma)<\Area(\Gamma^W_{A:B})$ as claimed.
\end{proof}

\begin{corollary}[No cross-component excursions]\label{cor:no-ij-neq-clean}
There is no excursion $\alpha^\circ\subset\Int\mathcal{E}(A)$ with endpoints on two distinct components
$\gamma_i^A$ and $\gamma_j^A$ with $i\neq j$.
\end{corollary}

\begin{proof}
Assume such an excursion exists. Then Lemma~\ref{lem:cycle-extraction-clean} produces a cycle $\mathcal{V}$,
Lemma~\ref{lem:C-simple-closed} produces a bubble $U$ bounded by $\mathcal{C}$, Lemma~\ref{lem:Gammahat-admissible}
produces an EWCS-admissible $\Gamma$ by surgery, and Lemma~\ref{lem:strict-area-fixed} gives
$\Area(\Gamma)<\Area(\Gamma^W_{A:B})$, contradicting the minimality of $\Gamma^W_{A:B}$.
\end{proof}

\begin{proof}[Proof of Theorem \ref{thm:barrier} by cycles (2D)]
We show $\Gamma^W_{A:B}\cap \Int\mathcal{E}(A)=\varnothing$.
Suppose not. Then some component $\sigma\subset\Gamma^W_{A:B}$ admits a nontrivial excursion
$\alpha^\circ\subset\Int\mathcal{E}(A)$ with endpoints as in \eqref{eq:excursion-endpoints-clean}.

\smallskip
\noindent\textbf{Case 1: $i=j$.}
Then $\alpha$ and $\gamma_i^A$ are length-minimizing geodesics intersecting at two distinct bulk points.
By Assumption~\ref{thm: No multiple intersections of geodesics on an AdS Cauchy slice}, they must coincide between those points.
But $\gamma_i^A\subset\partial\mathcal{E}(A)$ contains no point of $\Int\mathcal{E}(A)$, contradicting
$\alpha^\circ\subset\Int\mathcal{E}(A)$.

\smallskip
\noindent\textbf{Case 2: $i\neq j$.}
This is excluded by Corollary~\ref{cor:no-ij-neq-clean}.

\smallskip
Hence no component of $\Gamma^W_{A:B}$ enters $\Int\mathcal{E}(A)$.
By symmetry under $A\leftrightarrow B$, the same holds with $A$ and $B$ exchanged, and the barrier theorem follows.
\end{proof}

\section{Proof: Monotonicity in unmeasured party}
\label{Holographic unmeasured monotonicity}
\dwmonotone*

For a pure state on $ABCD$ after purification, by expanding the target inequality in Theorem \ref{thm:dwmonotone} in terms of HEE and EWCS
(and relabeling subsystems), it is equivalent to
\begin{equation}\label{eq:unmeasured-mono-target}
S_{BC} + E_W(AB{:}C)\ \geq\ S_C + E_W(A{:}BC).
\end{equation}

\begin{lemma}[Discarding $A$-only components]\label{lem:discard-Aonly}
Decompose $A$ (with respect to the state on $ABC$) into $A$-only components $\mathcal{F}$ and $A$-bridged
components $\mathcal{A}$, so that $A=\mathcal{F}\cup\mathcal{A}$ and $\mathcal{F}\cap\mathcal{A}=\varnothing$.
Then, as in Remark~\ref{remark: bridged wedges},
\begin{equation}\label{eq:calA-reduction}
E_W(AB{:}C)=E_W(\mathcal{A}B{:}C),
\qquad
E_W(A{:}BC)=E_W(\mathcal{A}{:}BC).
\end{equation}
In particular, if $\mathcal{A}=\varnothing$ then \eqref{eq:unmeasured-mono-target} reduces to
\eqref{eq: simple inequality to prove}. Otherwise, one may replace $A$ by $\mathcal{A}$ and absorb
$\mathcal{F}$ into the purifier $D$ without changing \eqref{eq:unmeasured-mono-target}.
\end{lemma}

\begin{proof}
If $\mathcal{A}=\varnothing$, both EWCS terms in \eqref{eq:unmeasured-mono-target} are unchanged upon removing $A$,
and the inequality reduces to \eqref{eq: simple inequality to prove}.
If $\mathcal{A}\neq\varnothing$, then \eqref{eq:calA-reduction} shows that the EWCS terms depend only on $\mathcal{A}$;
thus $\mathcal{F}$ decouples and can be absorbed into the purifier $D$.
\end{proof}

Hereafter we implicitly use Lemma \ref{lem:discard-Aonly} assuming $A$ has no $A-$only components in $ABC$.

\begin{lemma}[Existence of an RT admissible surface for $C$ from $\gamma_{BC}$ and $\Gamma^W_{AB:C}$]\label{lem:UC-C-competitor}
Let $\Gamma^W_{AB:C}$ be the EWCS minimizer for $(AB{:}C)$, and let $\gamma_{BC}$ be the RT surface for $BC$.
Define the restriction of the EWCS to the $BC$ wedge by
\begin{equation}\label{eq:Gamma1Gamma2}
\Gamma_1 := \Gamma^W_{AB:C}\cap \mathcal{E}(BC),
\qquad
\Gamma_2 := \Gamma^W_{AB:C}\setminus \Gamma_1.
\end{equation}
Then $\Gamma_1$ separates $\mathcal{E}(BC)$ into two regions
\[
\mathcal{E}(BC)\setminus \Gamma_1 \;=\; U_B \sqcup U_C,
\qquad \partial_\infty U_B=B,\ \partial_\infty U_C=C.
\]
Let $U^\mathrm{cl}_C := U_C\cup \Gamma_1$ be the closure obtained by adjoining the open face along $\Gamma_1$.
Then $U^\mathrm{cl}_C$ is $C$-homologous in $\mathcal{E}(ABC)$, and its boundary
\begin{equation}\label{eq:E-def}
\mathbf{E}:=\partial U^\mathrm{cl}_C
\end{equation}
is a valid RT competitor for $C$. In particular,
\begin{equation}\label{eq:E-bounds-SC}
\frac{\Area(\mathbf{E})}{4G_N}\ \ge\ S_C.
\end{equation}
\end{lemma}

\begin{proof}
Since $\Gamma^W_{AB:C}$ is EWCS-admissible for $(AB{:}C)$, its restriction $\Gamma_1$ to $\mathcal{E}(BC)$
separates $\mathcal{E}(BC)$ into a $B$-side and a $C$-side region, giving $U_B,U_C$ as stated.
The set $U^\mathrm{cl}_C=U_C\cup\Gamma_1$ is $C$-homologous because it is bounded away from $AD$ by the inherited
portion of $\gamma_{BC}$ and bounded away from $B$ inside $\mathcal{E}(BC)$ by $\Gamma_1$.
Thus $\mathbf{E}=\partial U^\mathrm{cl}_C$ is a closed codimension-2 surface in $\Sigma$ homologous to $C$,
so is RT admissible.
\end{proof}

In the middle of Figure \ref{fig: example decomp monotinicity} we have $U^\mathrm{cl}_C$ is the region bounded by the red curves. 

\begin{lemma}[Bookkeeping decomposition of the LHS multisurface]\label{lem:LHS-bookkeeping}
Let $\gamma_{BC}$ be the RT surface for $BC$ and $\Gamma^W_{AB:C}$ the EWCS minimizer for $(AB{:}C)$.
Define the LHS multisurface
\begin{equation}\label{eq:L-def}
\mathbf{L} := \gamma_{BC}\ \oplus\ \Gamma^W_{AB:C},
\end{equation}
so that
\[
\frac{\Area^\oplus(\mathbf{L})}{4G_N} = S_{BC}+E_W(AB{:}C).
\]
Let $\mathbf{E}=\partial U^\mathrm{cl}_C$ be as in Lemma~\ref{lem:UC-C-competitor} and set
\begin{equation}\label{eq:Gamma-remnant}
\Gamma := \abs{\mathbf{L}\ominus \mathbf{E}}.
\end{equation}
Then $\mathbf{E}$ is supported on $\mathbf{L}$ and the areas split as
\begin{equation}\label{eq:area-split}
\Area^\oplus(\mathbf{L}) \;=\; \Area(\Gamma) + \Area(\mathbf{E}).
\end{equation}
Moreover, if $\gamma_{ABC}$ denotes the RT surface for $ABC$, the surface
\begin{equation}\label{eq:tildeGamma-def}
\tilde{\Gamma}
\;:=\;
\mathrm{cl}\left\{\Gamma \cap \Int\mathcal{E}(ABC)\right\},
\end{equation}
where $\mathrm{cl}$ is the standard topological closure,
is a codimension-2 surface in $\mathcal{E}(ABC)$ with $\partial \tilde{\Gamma}\subset \gamma_{ABC}$ and
\begin{equation}\label{eq:Gamma-to-tildeGamma}
\Area(\Gamma)\ \ge\ \Area(\tilde{\Gamma}).
\end{equation}
\end{lemma}

\begin{proof}
By construction, $\mathbf{E}$ consists of a sub-arc of $\gamma_{BC}$ together with $\Gamma_1\subset \Gamma^W_{AB:C}$,
hence $\mathbf{E}$ lies in the support of $\mathbf{L}=\gamma_{BC}\oplus\Gamma^W_{AB:C}$ and the multiset difference
$\mathbf{L}\ominus \mathbf{E}$ is well-defined.
Since overlaps occur only along codimension-2 junction sets (measure zero), areas satisfy \eqref{eq:area-split}.
Finally, $\tilde{\Gamma}$ is obtained by restricting $\Gamma$ to $\Int\mathcal{E}(ABC)$ and taking the closure in
$\mathcal{E}(ABC)$, which is just reattaching the codimension three boundaries of $\Gamma$ which we remove by taking the intersect with $\Int\mathcal{E}(ABC)$.  This intersect can only decrease area, giving \eqref{eq:Gamma-to-tildeGamma}.
\end{proof}

In the central panel of Figure \ref{fig: example decomp monotinicity} we had $\Gamma$ as the union of the green and blue curves; $\Tilde{\Gamma}$ is now just the blue curve as the green lies along $\partial\mathcal{E}(ABC)=\gamma_{ABC}$. The purpose of taking the closure here is to enforce $\partial\Tilde{\Gamma}\subset\gamma_{ABC}$ instead of $\varnothing$.

\begin{lemma}[$\tilde{\Gamma}$ is EWCS admissible for $(A:BC)$]\label{lem:path-sep-tildeGamma}
Assume $A$ has no $A$-only components in $ABC$.
Let $\tilde{\Gamma}$ be defined by \eqref{eq:tildeGamma-def}. Then, every simple path in $\mathcal{E}(ABC)$
joining a point adjacent to $A$ to a point adjacent to $BC$ intersects $\tilde{\Gamma}$.
Consequently, by Remark~\ref{rem:path-separation}, $\tilde{\Gamma}$ is EWCS-admissible for $(A{:}BC)$.
\end{lemma}

\begin{figure}[ht]
    \centering
\scalebox{1}{
\input{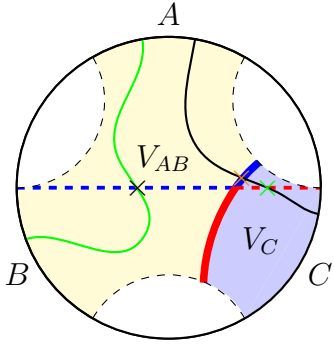}
}
    \caption{Continuation of case in Figure \ref{fig: example decomp monotinicity}. Taking $\mathcal{E}(ABC)\setminus\Gamma^W_{AB:C}$ yields two regions: $V_{AB}$ in yellow, and $V_C$ in blue. We have $U_C\subset V_C$ as the region bounded by the red curves. \emph{Case 1} shown by green path, and \emph{Case 2} by the black curve with their intersections point $x$ with $\gamma_{BC}$ marked by black and green $\times$. Intersection of black path and $\Gamma^W_{AB:C}$ as the orange $\times$. In either case the path crosses the blue curve $\Tilde{\Gamma}$.}
    \label{fig: example decomp monotonicity contd}
\end{figure}

\begin{proof}
Decompose $\tilde{\Gamma}$ into its portion lying on $\gamma_{BC}$ and its complement:
\[
\Gamma_{\mathrm{in}} := \tilde{\Gamma}\cap \gamma_{BC},
\qquad
\Gamma_{\mathrm{out}} := \tilde{\Gamma}\setminus \Gamma_{\mathrm{in}}.
\]
In the middle panel of Figure \ref{fig: example decomp monotinicity} we have $\Gamma_{\mathrm{in}}$ the thin blue curve and $\Gamma_{\mathrm{out}}$ the thick blue curve. By construction of $\mathbf{E}=\partial U^\mathrm{cl}_C$, the excised surface $\mathbf{E}$ consists of
(i) an arc $\eta_C\subset\gamma_{BC}$ and (ii) the segment $\Gamma_1=\Gamma^W_{AB:C}\cap \mathcal{E}(BC)$.
Hence the remaining part of $\Gamma^W_{AB:C} = \Gamma_1 \cup \Gamma_2$ inside $\mathcal{E}(ABC)$ but outside $\mathcal{E}(BC)$ is precisely
\begin{equation}\label{eq:Gammaout=Gamma2}
\Gamma_2 = \Gamma^W_{AB:C}\setminus \Gamma_1 = \Gamma_{\text{out}}.
\end{equation}
This is clearly seen for the case in Figure \ref{fig: example decomp monotinicity} with $\Gamma_{\mathrm{out}}=\Gamma_2$ the thick blue curve.

Now let $\ell$ be any simple path in $\mathcal{E}(ABC)$ from a point adjacent to $A$ to a point adjacent to $BC$.
Because $A$ has no $A$-only components in $ABC$, $\ell$ must cross $\gamma_{BC}$ in order to enter the $BC$-side.
Let $x$ be the \emph{first} intersection of $\ell$ with $\gamma_{BC}$ along the parameterization of $\ell$.

\smallskip
\noindent\emph{Case 1: $x\in \Gamma_{\mathrm{in}}$.}
Then $x\in\tilde{\Gamma}$ and this concludes the proof.

\smallskip
\noindent\emph{Case 2: $x\notin \Gamma_{\mathrm{in}}$.}
Then $x$ lies on the arc $\eta_C\subset\gamma_{BC}$ that was excised as part of $\mathbf{E}$, i.e.\ on the
$\gamma_{BC}$-portion of $\partial U^\mathrm{cl}_C$. Therefore immediately after crossing $x$, the path $\ell$
enters the region $U_C$ inside $\mathcal{E}(BC)$.

On the other hand, since $\Gamma^W_{AB:C}$ is EWCS-admissible for $(AB{:}C)$, it partitions $\mathcal{E}(ABC)$ into
two regions $V_{AB}$ and $V_C$ with $\partial_\infty V_{AB}=AB$ and $\partial_\infty V_C=C$.
By construction $U_C\subset V_C$ (it is the part of $V_C$ lying inside $\mathcal{E}(BC)$), while a neighborhood of $A$
lies in $V_{AB}$. Thus, as $\ell$ runs from $A$ towards $BC$, it must leave $V_{AB}$ and enter $V_C$ at some first
parameter value. Let $t_*$ be the first such parameter and set $y=\ell(t_*)$. Then $y\in \Gamma^W_{AB:C}$.

We claim $y\notin \Gamma_1$. Indeed, if $y\in\Gamma_1\subset\mathcal{E}(BC)$, then $y$ lies in $\mathcal{E}(BC)$.
But we already noted that immediately after the \emph{earlier} point $x$ (the first intersection with
$\gamma_{BC}$), the path lies in $U_C\subset V_C$. Hence $\ell$ must have entered $V_C$ no later than the parameter
of $x$, contradicting the choice of $t_*$ as the \emph{first} entry into $V_C$.
Therefore $y\in \Gamma^W_{AB:C}\setminus\Gamma_1=\Gamma_2$. Using \eqref{eq:Gammaout=Gamma2}, we conclude $y\in\Gamma_{\mathrm{out}}\subset \tilde{\Gamma}$, so $\ell$ intersects
$\tilde{\Gamma}$ also in Case 2. In all cases $\ell$ intersects $\tilde{\Gamma}$. The admissibility statement then follows from
Remark~\ref{rem:path-separation}.
\end{proof}

\begin{proof}[Proof of equation \eqref{eq:unmeasured-mono-target}]

By Lemma~\ref{lem:discard-Aonly} we may assume $A$ has no $A$-only components in $ABC$.
Let $\mathbf{L}$ be the LHS multisurface \eqref{eq:L-def}. Then
\[
\frac{\Area^\oplus(\mathbf{L})}{4G_N} = S_{BC}+E_W(AB{:}C).
\]

Let $\mathbf{E}=\partial U^\mathrm{cl}_C$ be the $C$-competitor from Lemma~\ref{lem:UC-C-competitor}, and define
$\Gamma$ and $\tilde{\Gamma}$ as in Lemma~\ref{lem:LHS-bookkeeping}. Using the bookkeeping identities
\eqref{eq:area-split} and \eqref{eq:Gamma-to-tildeGamma}, we obtain
\begin{equation}\label{eq:LHS-lower}
S_{BC}+E_W(AB{:}C)
=\frac{\Area^\oplus(\mathbf{L})}{4G_N}
=\frac{\Area(\Gamma)}{4G_N}+\frac{\Area(\mathbf{E})}{4G_N}
\ge \frac{\Area(\tilde{\Gamma})}{4G_N}+\frac{\Area(\mathbf{E})}{4G_N}.
\end{equation}

By Lemma~\ref{lem:UC-C-competitor}, $\Area(\mathbf{E})/(4G_N)\ge S_C$.
By Lemma~\ref{lem:path-sep-tildeGamma}, $\tilde{\Gamma}$ is EWCS-admissible for $(A{:}BC)$, hence by EWCS minimality
\begin{equation}\label{eq:tildeGamma-bounds-EW}
\frac{\Area(\tilde{\Gamma})}{4G_N}\ \ge\ E_W(A{:}BC).
\end{equation}
Substituting these bounds into \eqref{eq:LHS-lower} yields
\[
S_{BC}+E_W(AB{:}C)\ \ge\ E_W(A{:}BC)+S_C,
\]
which is exactly \eqref{eq:unmeasured-mono-target} hence proving Theorem \ref{thm:dwmonotone}.
\end{proof}

\section{Proof: Monogamy of classical correlations: Measured party}
\label{monogamy measured party}
\jwmonogamousmeasured*

\begin{proof}
Using the holographic expressions for $J_W$ in terms of HEE and EWCS and purifying $ABC$ by $D$,
the claim is equivalent to asking
\begin{equation}
E_W(A\!:\!BD)\;+\;E_W(A\!:\!CD)\ \ge\ S_A\;+\;E_W(A\!:\!D).
\end{equation}
Let $\Gamma^W_{A:BD}$ and
$\Gamma^W_{A:CD}$ denote EWCS minimizers for $(A:BD)$ and $(A:CD)$. Then introduce the \emph{LHS multisurface}
\[
\mathbf L\;:=\;\Gamma^W_{A:BD}\ \oplus\ \Gamma^W_{A:CD},
\qquad
\frac{\Area^\oplus(\mathbf L)}{4G_N}
\;=\;E_W(A\!:\!BD)+E_W(A\!:\!CD).
\]
\paragraph{Step 1: Carving an $\mathbf{A}$-homologous region $\mathbf{r_A}$.} By EWCS admissibility, each cross section splits its wedge into an $A$-side and a complementary side:
\begin{align*}
    \operatorname{Int}\bigl(\mathcal{E}(ABD)\bigr) \setminus \Gamma^W_{A:BD} 
    &= U_A^{(BD)}\;\sqcup\;U_{BD}^{(A)},
    &\partial_\infty U_A^{(BD)}&=A,\quad \partial_\infty U_{BD}^{(A)}=BD,\\[0.3em]
    \operatorname{Int}\bigl(\mathcal{E}(ACD)\bigr) \setminus \Gamma^W_{A:CD} 
    &= U_A^{(CD)}\;\sqcup\;U_{CD}^{(A)},
    &\partial_\infty U_A^{(CD)}&=A,\quad \partial_\infty U_{CD}^{(A)}=CD.
\end{align*}
We work with the open interiors so that $U_A^{(BD)}$ and $U_A^{(CD)}$ carry no boundary arcs along $\gamma_{ABD}$ and $\gamma_{ACD}$:
\begin{equation}
\partial_\Sigma U_A^{(BD)}=\partial_\Sigma U_A^{(CD)}=\varnothing.
\end{equation}
We then \emph{close} both $A$-sides in their respective wedges by adjoining the corresponding EWCS surfaces:
\begin{equation}
    \widehat{U}_A^{(BD)} := U_A^{(BD)} \cup \Gamma^W_{A:BD}, 
    \qquad
    \widehat{U}_A^{(CD)} := U_A^{(CD)} \cup \Gamma^W_{A:CD}.
\end{equation}
They are of course not generally closed in $\Sigma$ but by construction satisfy $\widehat{U}_A^{(BD)}\sim_{ABD}A$ and $\widehat{U}_A^{(CD)}\sim_{ACD}A$.

Now define
\begin{equation}
    r_A \;:=\; \widehat{U}_A^{(BD)} \cap \widehat{U}_A^{(CD)},
    \qquad
    U_A \;:=\; \widehat{U}_A^{(BD)} \cup \widehat{U}_A^{(CD)}.
\end{equation}
Thus $r_A\subseteq U_A$, and by the barrier property of the EWCS (Theorem~\ref{thm:barrier}) we have
\begin{equation}\label{eq: barrier in monogamy}
    \Gamma^W_{A:BD}\cap \operatorname{Int}\mathcal E(A)=\varnothing,
    \qquad
    \Gamma^W_{A:CD}\cap \operatorname{Int}\mathcal E(A)=\varnothing,
\end{equation}
so that the $A$-wedge lies on the $A$-side of both cross sections. In particular,
\begin{equation}\label{eq: ea in ra}
    \mathcal E(A)\;\subseteq\; r_A\;\subseteq\; U_A,
\end{equation}
with equality in the first inclusion only if one of the EWCS surfaces coincides with $\gamma_A$, and equality in the second one only if both do.

\begin{figure}[ht]
    \centering
\scalebox{1}{
\input{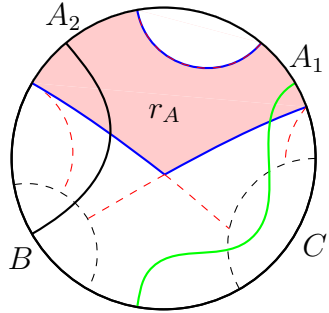}
}
    \caption{Continuation of case in Figure \ref{Monogamy classical unmeasured party} to show $r_A\sim A$, with $\partial r_A$ in blue. Green path shows $\ell$ intersects $\partial r_A$ when traveling from $A\to D$, and black path shows $\ell$ intersects $\partial r_A$ when traveling to $B$.}
    \label{fig: monogam unmeasured cont}
\end{figure}

\paragraph{Step 2: Showing $\mathbf{r_A\sim A}$.}

As we know $\partial_\infty r_A = A$, all that remains to show is that $\partial r_A$ is homologous to $A$. Equivalently any simple path
\[
    \ell\colon [0,1]\to\Sigma
\]
with $\ell(0)\in\operatorname{Int}\mathcal{E}(A)$ and $\ell(1)$ in 
$\operatorname{Int}\mathcal{E}(B)\cup\operatorname{Int}\mathcal{E}(C)\cup\operatorname{Int}\mathcal{E}(D)$
must intersect $\partial r_A$.

Define
\begin{equation}\label{eq: mathcal L definition}
\mathcal{L} \;:=\; \abs{\mathbf{L}}=\Gamma^W_{A:BD}\cup \Gamma^W_{A:CD}.
\end{equation}
Let us first assume for contradiction that there exists a simple path $\ell$ with 
\begin{equation}\label{eq: simple path conditions}
\ell(0)\in\operatorname{Int}\mathcal{E}(A)\subset r_A,\qquad
\ell(1)\in\operatorname{Int}\mathcal{E}(X),
\end{equation}
for some $X\in\{B,C,D\}$, such that $\ell$ does \emph{not} intersect $\mathcal{L}$. We will show this is a contradiction and so all such simple paths satisfying \eqref{eq: simple path conditions}  must intersect $\mathcal{L}$; we will then demonstrate that all first intersection points of $\ell$ and $\mathcal{L}$ lie on $\partial r_A$ hence showing $r_A \sim A$.

\medskip
\noindent\textbf{Paths from $\mathbf{A}$ to $\mathbf{D}$.}
First consider $\ell(1)\in\operatorname{Int}\mathcal{E}(D)$. Assume for contradiction $\ell$ misses $\Gamma^W_{A:BD}$. Then $\ell$ stays in $U_A^{(BD)}$ and can only exit this region along its open boundary on $\gamma_C$ into $\operatorname{Int}\mathcal{E}(C)$. This is the green path in Figure \ref{fig: monogam unmeasured cont}. If it does not have an open boundary on $\gamma_C$ then $\Gamma^W_{A:BD}$ coincides with $\gamma_A$, and we are done; thus we can assume it is open along this arc. As $\operatorname{Int}\mathcal{E}(C)\cap U_A^{(CD)}=\varnothing$ via the barrier theorem, in evolving from $\ell(0)\in \operatorname{Int}r_A$, it remains in $U_A^{(BD)}$ but must exit $U_A^{(CD)}$ such that it may reach $\operatorname{Int}\mathcal{E}(C)$. However, to exit $U_A^{(CD)}$ without crossing $\Gamma^W_{A:CD}$, $\ell$ can only achieve this by exiting through $U_A^{(CD)}$'s open arc along $\gamma_B$ into $\operatorname{Int}\mathcal{E}(B)$. 
Likewise if this region has no open arc along $\gamma_B$ then $\Gamma^W_{A:CD}$ coincides with $\gamma_A$, and we are also done; we thus continue assuming it has this open surface. However the barrier theorem tells us that $\operatorname{Int}\mathcal{E}(B)\cap U_A^{(BD)}=\varnothing$ so $\ell$ cannot remain in $U_A^{(BD)}$ whilst exiting $U_A^{(CD)}$ and so this is a contradiction. We argue identically for if $\ell$ misses $\Gamma^W_{A:CD}$ thus $\ell\cap \mathcal{L}\neq\varnothing$.

\medskip
\noindent\textbf{Paths from $\mathbf{A}$ to $\mathbf{B}$.}
The argument for $\ell(1)\in\operatorname{Int}\mathcal{E}(B)$ is analogous and is seen by the black path in Figure \ref{fig: monogam unmeasured cont}. Assume $\ell$ misses $\Gamma^W_{A:CD}$, and so remains in $U_A^{(CD)}$ until it crosses its open boundary on $\gamma_B$ directly into $\operatorname{Int}\mathcal{E}(B)$. However by assumption, it does not cross $\Gamma^W_{A:BD}$ either, and as $\operatorname{Int}\mathcal{E}(B)\cap U_A^{(BD)}=\varnothing$, it cannot reach $\operatorname{Int}\mathcal{E}(B)$ without first exiting $U_A^{(BD)}$ which it can only do along the open arc $\gamma_C$. But by assumption it cannot do this whilst remaining in $U_A^{(CD)}$, and hence not missing $\Gamma^W_{A:CD}$, and so we are done in this case. Assuming instead $\ell$ misses $\Gamma^W_{A:BD}$ we apply similar logic and find a likewise contradiction.

\medskip
\noindent\textbf{Paths from $A$ to $C$.} This is identical to the previous case with appropriate relabeling of $B\leftrightarrow C$.

\medskip
We have shown any simple path $\ell$ with $\ell(0)\in\operatorname{Int}\mathcal{E}(A)$ and $\ell(1)\in\operatorname{Int}\mathcal{E}(X)$, with $X\in\{B,C,D\}$ intersects $\mathcal{L}$. We will now show the first intersection point lies on $\partial r_A$. Let $t_1$ be the first parameter in $[0,1]$ such that $\ell(t_1)\in \mathcal{L}$. Without loss of generality, assume $\ell(t_1)\in\Gamma^W_{A:BD}$ (the case of $\Gamma^W_{A:CD}$ is identical). There are two possibilities:
\begin{itemize}
    \item If $\ell(t_1)\in \Gamma^W_{A:BD}\cap \widehat{U}_A^{(CD)}$, as this is a component of $\partial r_A$, we are done.
    \item If $\ell(t_1)$ lies on the remainder of $\Gamma^W_{A:BD}\setminus(\Gamma^W_{A:BD}\cap \widehat{U}_A^{(CD)})$, then $\ell$ must have exited $U_A^{(CD)}$. But this implies $\ell$ must have already crossed $\Gamma^W_{A:CD}$: if it had not then $\ell(t_1 - \epsilon)\in U_A^{(BD)}$ (with $\epsilon$ a small positive parameter), has not yet crossed $\Gamma^W_{A:BD}$ and so we could deform the path to instead travel to $\operatorname{Int}\mathcal{E}(C)$. However this is impossible without having crossed $\Gamma^W_{A:CD}$ by our previous proof. Thus $\ell$ crosses $\Gamma^W_{A:CD}$ before crossing $\Gamma^W_{A:BD}\setminus(\Gamma^W_{A:BD}\cap \widehat{U}_A^{(CD)})$ which is a contradiction to $t_1$ being the first such time that $\ell$ intersects $\mathcal{L}$.
\end{itemize}
We apply the same argument to assuming $ \ell(t_1)\in\Gamma^W_{A:CD}$ and have thus shown $\ell(t_1)\in\partial r_A$. Consequently no simple path from $\operatorname{Int}\mathcal{E}(A)$ to $\operatorname{Int}\mathcal{E}(X)$ can avoid $\partial r_A$ and so $r_A\sim_\Sigma A$. Hence by RT minimality we must have that
\begin{equation}\label{eq:SA-lower-fixed}
    \frac{\Area(\partial r_A)}{4G_N} \;\ge\; S_A.
\end{equation}

\paragraph{Step 3: Defining a candidate for $\mathbf{E_W(A:D)}$.}
Define the \emph{restricted remainder surface}\footnote{In fact, taking the support in the definition of $\Gamma$ is unnecessary since the resulting surface has multiplicity one everywhere; however, we will not use this fact in the proof. Hence, we will just take the support to ensure multiplicity one.}
\begin{equation}\label{eq:Gamma-def}
\Gamma:=\abs{(\ \mathbf L\ \ominus\ \partial r_A)}\cap\mathcal{E}(AD),
\end{equation}
and the \emph{restricted excised surface}
\begin{equation}\label{eq: restrictied excised surface}
    \mathbf{E} = \partial r_A\cap\mathcal{E}(AD).
\end{equation}

As $\Gamma$ is a restriction of the remainder multisurface we must have that
\begin{equation}\label{eq:area-split-inside-AD}
\frac{\Area(\Gamma)}{4G_N}+\frac{\Area(\partial r_A)}{4G_N}
\;\leq\;E_W(A\!:\!BD)+E_W(A\!:\!CD).
\end{equation}
We now verify that $\Gamma$ is EWCS admissible for $(A:D)$.

\begin{figure}[ht]
    \centering
\scalebox{1}{
\input{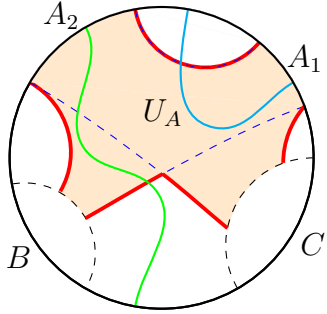}
}
    \caption{Continuation of case in Figure \ref{Monogamy classical unmeasured party} to show $\Gamma$ in red is EWCS admissible for $(A:D)$. Blue dashed curves show excise surfaced $\partial r_A$, and the union of red and blue dashed curves is $\mathcal{L}_{AD}$. Any simple path $\ell$ in $\mathcal{E}(AD)$ starting adjacent to $A$ and ending on $D$ intersects $\Gamma$.}
    \label{fig: monogam unmeasured contt}
\end{figure}

\paragraph{Step 4: Showing $\mathbf{\Gamma}$ is EWCS admissible for $\mathbf{(A:D)}$.}
To be admissible we need to show the following three properties: 
\begin{enumerate}[label=\emph{(\alph*)}, leftmargin=2em]
\item \emph{(Localization)} $\Gamma\subset\mathcal{E}(AD)$,
\item \emph{(Bulk anchoring)} $\partial\Gamma\subset \gamma_{AD}$,
\item \emph{(Separation)} $\mathcal E(AD)\setminus \Gamma\;=\; U_A\,\sqcup\,U_D$ \text{with} $\partial_\infty U_A=A,\ \ \partial_\infty U_D= D$.
\end{enumerate}
Generally, the localization condition is not required for admissibility (in the separation sense). However, a surface that satisfies the two latter conditions is never the global minimal surface: we could always amputate the said surface within $\mathcal{E}(AD)$ to produce a smaller surface still satisfying the other conditions. We do exactly this by taking the intersection in equation \eqref{eq: restrictied excised surface}.

\smallskip\noindent
\emph{(a) (Localization)} Trivial.

\smallskip\noindent
\emph{(b) (Bulk anchoring)} Inside \(\mathcal E(AD)\), each connected component of $\Gamma^W_{A:CD}\cap\;\mathcal{E}(AD)$ and $\Gamma^W_{A:BD}\cap\;\mathcal{E}(AD)$ is a smooth codimension-two surface whose bulk endpoints lie on \(\gamma_{AD}\). They of course cannot have a boundary in $\operatorname{Int}\mathcal{E}(AD)$ as they either anchor on $\gamma_B$ or $\gamma_C$, which both have zero intersect with $\operatorname{Int}\mathcal{E}(AD)$. Moreover, the set \(\mathbf E\;\cap\;\mathcal{E}(AD)\) is a union of sub-arcs of these components; each such sub-arc has endpoints either on \(\gamma_{AD}\) or at an intersection point of $\Gamma^W_{A:CD}$ and $\Gamma^W_{A:BD}$. These are the dashed blue curves in Figure \ref{fig: monogam unmeasured contt}. Excising \(\mathbf E\) from \(\mathbf L\cap\mathcal E(AD)\) thus removes only these sub-arcs, and any remaining endpoints lie on \(\gamma_{AD}\). Thus \(\partial\Gamma\subset \gamma_{AD}\). 

\smallskip\noindent
\emph{(c) (Separation)}
Now set 
\begin{equation}
    \mathcal{L}_{AD}:=\mathcal{L}\cap \mathcal E(AD),
\end{equation}
with $\mathcal{L}$ defined in equation \eqref{eq: mathcal L definition}. This just gives the original $\mathcal{L}$ if $BC$ is unbridged, as in Figure \ref{fig: monogam unmeasured contt}, but will generally be different if they are bridged.

We show that every simple path
\begin{equation}\label{eq: ell defini first hit}
    \ell: [0,1] \to \operatorname{Int}\mathcal E(AD), \qquad \ell(0)\in\operatorname{Int}\mathcal{E}(A),\qquad
\ell(1)\in\operatorname{Int}\mathcal{E}(D), 
\end{equation}
intersects $\mathcal{L}_{AD}$. We then show that any such $\ell$ intersects $\Gamma\subset\mathcal{L}_{AD}$ too. 

\emph{Claim 1.} Any such \(\ell\) intersects \(\mathcal{L}_{AD}\).
Indeed, if \(\ell\) missed \(\Gamma^W_{A:BD}\) it would lie entirely in \(U^{(BD)}_A\supseteq\operatorname{Int}\mathcal{E}(A)\); if it missed \(\Gamma^W_{A:CD}\) it would lie entirely in \(U^{(CD)}_A\).
Missing both forces \(\ell \subset U^{(BD)}_A\cap U^{(CD)}_A \subset r_A\). This cannot reach a point in $\operatorname{Int}\mathcal{E}(D)$ as $r_A\sim A$ by Step 2. Hence \(\ell\cap \mathcal{L}_{AD}\neq\varnothing\).

\emph{Claim 2 (first-hit upgrade).}
Let \(t_1\) be the first time \(\ell\) meets \(\mathcal{L}_{AD}\). If \(\ell(t_1)\notin \mathbf E\), then \(\ell(t_1)\in \mathcal{L}_{AD}\setminus \mathbf E\subset \Gamma\) and we are done. Moreover, if $\mathbf{E}$ and $\mathcal{L}_{AD}$ coincide then it is possible that  \(\ell(t_1)\in \mathbf E\cap\mathcal{L}_{AD}\). Regardless it intersects $\Gamma$ and we are done. This is the case of the cyan curve in Figure \ref{fig: monogam unmeasured contt}. If instead \(\ell(t_1)\in \mathbf E\) let us assume without loss of generality \(\ell(t_1)\in\Gamma^W_{A:BD}\) and  \(\ell(t_1)\notin\Gamma^W_{A:CD}\). This is the statement that the green path in Figure \ref{fig: monogam unmeasured contt} intersects $\mathcal{L}_{AD}$ first on the blue dashed line which is a segment of $\Gamma^W_{A:BD}$. Immediately after \(t_1\), \(\ell\) lies on the \(BD\)-side of \(\Gamma^W_{A:BD}\) but still on the \(A\)-side of \(\Gamma^W_{A:CD}\) (since \(t_1\) was the first contact with $\mathcal{L}_{AD}$). To reach a point adjacent to \(D\), \(\ell\) must cross \(\Gamma^W_{A:CD}\) at some later time \(t_2>t_1\).
At \(t_2\) it cannot be on \(\partial r_A\) (crossing \(\partial r_A\) would return into \(r_A\), contradicting that we remain on the \(BD\)-side of \(\Gamma^W_{A:BD}\)).
Therefore \(\ell(t_2)\in \Gamma^W_{A:CD}\setminus \mathbf E\subset \Gamma\). The same argument applies if instead $\ell(t_1)\in\Gamma^W_{A:CD}$ and  \(\ell(t_1)\notin\Gamma^W_{A:BD}\). If instead $\ell(t_1)\in\Gamma^W_{A:CD}\cap\Gamma^W_{A:BD}\in\partial r_A$, then due to multiplicity, this intersect is also in $\Gamma$ and we are done. Thus every simple path satisfying equation \eqref{eq: ell defini first hit} meets \(\Gamma\) and hence separation follows.

As we have shown \(\Gamma\) is EWCS-admissible for \((A{:}D)\), we can now state
\begin{equation}\label{eq:EW-lower}
    \frac{\Area(\Gamma)}{4G_N} \ \ge\ E_W(A{:}D).
\end{equation}

\paragraph{Step 5: Area bookkeeping.}
From \eqref{eq:area-split-inside-AD}, \eqref{eq:SA-lower-fixed}, and \eqref{eq:EW-lower},
\[
\frac{\Area^\oplus(\mathbf L)}{4G_N}
\ \ge\ \frac{\Area(\Gamma)}{4G_N} \ + \ \frac{\Area(\partial r_A)}{4G_N}
\ \ge\ E_W(A{:}D) \ + \ S_A,
\]
which is precisely our original inequality.
\end{proof}

\section{Proof: Monogamy of classical correlations: Unmeasured party}
\label{monogamy unmeasured party proof}
\jwmonogamousunmeasured*

\begin{proof}[Proof of Theorem \ref{thm:mono-unmeasured}]
Using the holographic expressions for $J_W$ in terms of HEE and EWCS and purifying $ABC$ by $D$,
Theorem \ref{thm:mono-unmeasured} is equivalent to
\begin{equation}\label{eq:target-ineq}
S_{AC}\;+\;E_W(A:CD)\;+\;E_W(C:AD)\ \ge\ S_A\;+\;S_C\;+\;E_W(AC:D).
\end{equation}
Throughout this proof we define $\gamma_{AC}$ as the RT surface for $AC$ and let $\Gamma^W_{A:CD}$ and $\Gamma^W_{C:AD}$ be the EWCS minimizers for $(A:CD)$ and $(C:AD)$.

\paragraph{Step 1: Bridged and unbridged components.} Let us first split $A$ and $C$ into $A$-only and $C$-only components in $AC$, along with their bridged components. Write $A=\mathcal{A}\cup\mathcal{F}_A$, where $\mathcal{A}$ are $A$-only components in $AC$ and $\mathcal{F}_A$ are $A$-bridged components in $AC$. Define $C=\mathcal{C}\cup\mathcal{F}_C$ similarly, and denote the union of the bridged components by
\begin{equation*}
    \mathcal{F}=\mathcal{F}_A\cup\mathcal{F}_C.
\end{equation*}
Then the HEE splits additively
\begin{equation}
    S_{AC} = S_\mathcal{A} + S_\mathcal{C} + S_\mathcal{F}, \qquad S_A = S_\mathcal{A} + S_{\mathcal{F}_A}, \qquad S_C = S_\mathcal{C} + S_{\mathcal{F}_C}.
\end{equation}
Hence we can cancel unbridged contributions leading to
\begin{equation}\label{eq: mono measured alt}
S_{\mathcal{F}} + E_W(A:CD) + E_W(C:AD) \stackrel{?}{\geq} S_{\mathcal{F}_A} + S_{\mathcal{F}_C} + E_W(AC:D).
\end{equation}
Note that we cannot replace $A$ and $C$ by $\mathcal{F}_A$ and $\mathcal{F}_B$ in $E_W$ terms as we know nothing about $A$- and $C$-only components in $ACD$. Now our goal is to prove \eqref{eq: mono measured alt} and hence the theorem.

\paragraph{Step 2: Carving out $\mathbf{\mathcal{F}_A}$ and $\mathbf{\mathcal{F}_C}$  homologous regions.} Let us first introduce the \emph{LHS multisurface}
\begin{equation}
\mathbf L\;:=\;\gamma_{\mathcal{F}}\ \oplus\ \Gamma^W_{A:CD}\ \oplus\ \Gamma^W_{C:AD},
\end{equation}
such that 
\begin{equation}
\frac{\Area^\oplus(\mathbf L)}{4G_N} = S_{\mathcal{F}} + E_W(A:CD) + E_W(C:AD),
\end{equation} 
and consider the restriction 
\begin{equation}
\Gamma_\mathcal{F} := (\Gamma^W_{A:CD}\cup \Gamma^W_{C:AD})\cap\mathcal E(\mathcal F).
\end{equation}
We then have
\begin{equation}
    \mathcal{E(F)} \setminus \Gamma_\mathcal{F} = U_{\mathcal{F}_A}\sqcup U_{\mathcal{F}_C} \sqcup U_{\text{extra}}, \qquad  \partial_\infty U_{\mathcal{F}_A} = \mathcal{F}_A, \quad \text{and} \quad \partial_\infty U_{\mathcal{F}_C} = \mathcal{F}_C,
\end{equation}
as the restriction converts the homology data on $A$ and $C$ to $\mathcal{F}_A$ and $\mathcal{F}_C$. $U_{\text{extra}}$ is the left over component of $\mathcal{E}(\mathcal{F})$ after this operation; this is the region bounded on four sides by the red, green, orange and blue curves in Figure \ref{Monogamy classical measured party}.

We can define their bulk closures as
\begin{equation}\label{eq: gluing to close}
    U^\mathrm{cl}_{\mathcal{F}_A} = U_{\mathcal{F}_A} \cup (\Gamma^W_{A:CD}\cap\mathcal{E(F)}), \quad \text{and} \quad U^\mathrm{cl}_{\mathcal{F}_C} = U_{\mathcal{F}_C} \cup (\Gamma^W_{C:AD}\cap\mathcal{E(F)}).
\end{equation}
This is well defined and produces codimension-one closed bulk subregions as by the barrier theorem $U_{\mathcal{F}_A}$ is only open along $\Gamma^W_{A:CD}\cap\mathcal{E(F)}$ and $U_{\mathcal{F}_C}$ is only open along $\Gamma^W_{C:AD}\cap\mathcal{E(F)}$; in particular $U_{\mathcal{F}_A}$ has no open arc\footnote{If however $\Gamma^W_{A:CD}\cap\Gamma^W_{C:AD}\neq\varnothing$ then the gluing is still well defined and still only uses a subset of $\mathbf{L}$ by multiplicity.} along $\Gamma^W_{C:AD}\cap\mathcal{E(F)}$ and so the gluing in equation \eqref{eq: gluing to close} does generate bounded regions.

Then as $U^\mathrm{cl}_{\mathcal{F}_A}$ is separated from $ABCD\setminus\mathcal{F}$ by $\gamma_\mathcal{F}$ and further separated from $\mathcal{F}_C$ by $\Gamma^W_{A:CD}\cap\mathcal{E}(\mathcal{F})$ then $U^\mathrm{cl}_{\mathcal{F}_A} \sim \mathcal{F_A}$. 
The same argument applies to $U^\mathrm{cl}_{\mathcal{F}_C}$ and so by RT minimality,
\begin{equation}\label{eq:SA_SC_upper}
\frac{\Area(\partial U^\mathrm{cl}_{\mathcal{F}_A})}{4G_N}\ \ge\ S_{\mathcal{F}_A},
\qquad
\frac{\Area(\partial U^\mathrm{cl}_{\mathcal{F}_C})}{4G_N}\ \ge\ S_{\mathcal{F}_C}.
\end{equation}
This is seen in Figure \ref{Monogamy classical measured party} for example.

\paragraph{Step 3: Building an admissible competitor $\mathbf{\Gamma}$ for $\mathbf{E_W(AC{:}D)}$.} 
Set the \emph{excised multisurface}
\[
\mathbf E\;:=\;\partial U^\mathrm{cl}_{\mathcal{F}_A}\ \oplus\ \partial U^\mathrm{cl}_{\mathcal{F}_C},
\]
which is just the portion of $\mathbf{L}$ we have used to define $\mathcal{F}_A$ and $\mathcal{F}_C$ homologous regions. 
Then decompose the remainder of $\mathbf L$ into an ``outer'' and an ``inner'' part:
\begin{align}
\Gamma_{\mathrm{out}}&:=\abs{\ \bigl(\Gamma^W_{A:CD}\ \oplus\ \Gamma^W_{C:AD}\bigr)\ \cap\ \bigl(\mathcal E(ACD)\setminus \mathcal{E(F)}\bigr)},\label{eq:Gamma-out}\\
\Gamma_{\mathrm{in}}&:=\ \abs{(\mathbf L\ominus \mathbf E)\ \cap\ \gamma_{\mathcal{F}}}\ ,\label{eq:Gamma-in}
\end{align}
and define the candidate
\begin{equation}\label{eq:Gamma-def-final}
\Gamma\ :=\ \Gamma_{\mathrm{out}}\ \cup\ \Gamma_{\mathrm{in}}.
\end{equation}
In the central panel of Figure \ref{Monogamy classical measured party} we have $\Gamma_{\text{in}}$ as the thick curves and $\Gamma_{\text{out}}$ the non-thick red and orange curves.

By construction,
\begin{equation}\label{eq:Gamma-support-boundary}
\Gamma\subset \mathcal E(ACD),\qquad
\Gamma_{\mathrm{out}}\cap\Gamma_{\mathrm{in}}=\varnothing,
\qquad
\text{and}\quad \partial_{\Sigma}\Gamma\subset \gamma_{ACD}=\gamma_B.
\end{equation}
The latter holds because the pieces on $\gamma_{\mathcal{F}}$ are glued to the outer legs at their endpoints, so $\gamma_{\mathcal{F}}$-contacts are \emph{interior} contacts of $\Gamma$, not part of $\partial\Gamma$. Such gluing points exist due to the multiplicity: for example if $\Gamma^W_{A:CD}$ intersects $\gamma_{\mathcal{F}}$ at point $x$ then, generally, one copy of $x$ is included for $\partial U^\mathrm{cl}_{\mathcal{F}_A}$ and the other in $\Gamma$, and so smooth gluing is permitted and no extra (albeit zero measure) sets are introduced in the measured surfaces. Hence the only bulk anchoring of $\Gamma$ lies on $\gamma_B$. We also have $\Gamma \oplus \mathbf{E} \;=\; \mathbf{L}$: $\partial U^\mathrm{cl}_{\mathcal{F}_A} \oplus \partial U^\mathrm{cl}_{\mathcal{F}_C}\subset \mathcal{E(F)}$ and so $\Gamma_{\text{out}}$ has zero intersect with $\mathbf{E}$, and $\Gamma_{\text{in}}$ also has zero intersect by definition. Thus we have
\begin{equation}\label{eq:additivity-L}
\frac{\Area^\oplus(\mathbf L)}{4G_N} = \frac{\Area(\Gamma)}{4G_N}+\frac{\Area^\oplus(\mathbf E)}{4G_N}
\;=\; S_{\mathcal{F}}+E_W(A:CD)+E_W(C:AD).
\end{equation}

We claim $\Gamma$ is EWCS admissible for $(AC:D)$. Condition (i) of Definition \ref{def: EWCS admissable} is satisfied by equation \eqref{eq:Gamma-support-boundary} and so it remains to check condition (ii).

\paragraph{Step 4: Showing $\mathbf{\Gamma}$ is an admissible competitor for $\mathbf{(AC{:}D)}$.}  We proceed using the standard path argument: every simple path in $\mathcal E(ACD)$ that joins a point adjacent to $AC$
to a point adjacent to $D$ intersects $\Gamma$. Consequently,
\begin{equation}\label{eq:Gamma-separates}
\mathcal E(ACD)\setminus \Gamma \;=\; U_{AC}\,\sqcup\,U_D,\qquad
\partial_\infty U_{AC}= AC,\ \ \partial_\infty U_D= D.
\end{equation}
Let $\ell$ be such a simple path with
$\ell(0)$ adjacent to $AC$ and $\ell(1)$ adjacent to $D$. We consider three cases according to where $\ell$ starts on the boundary.

\begin{figure}[ht]
    \centering
\scalebox{1}{
\input{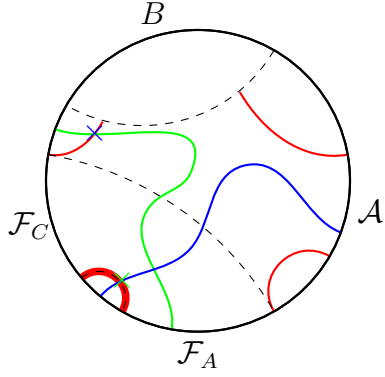}
}
    \caption{Continuation of case in Figure \ref{Monogamy classical measured party}.  \emph{Case 1} shown by green path, and \emph{Case 2} by the blue curve with their intersections with $\Gamma$ marked by $\times$. In either case the path crosses the red curve $\Gamma$ rendering $\Gamma$ EWCS admissible for $(AC:D)$.}
    \label{fig: monogam classical cont}
\end{figure}

\medskip
\noindent\emph{Case 1: $\ell(0)$ is adjacent to $\mathcal F$.} If $\ell$ begins at some point on $\mathcal{F}$ then it must exit the bridged wedge $\mathcal{E(F)}$ through $\gamma_{\mathcal{F}}$ to reach $D$. Let $x$ be this first contact point.

If $x\in \Gamma_{\mathrm{in}}$, then $x\in\Gamma$ and we are done. Otherwise
$x\notin\Gamma_{\mathrm{in}}$, so $x$ lies on an open arc
\[
I\subset\gamma_{\mathcal F}\setminus\Gamma_{\mathrm{in}},
\]
whose endpoints lie on $\Gamma^W_{A:CD}$ or $\Gamma^W_{C:AD}$.
This follows from the definition of $\mathbf E$: we have excised precisely those arcs of
$\gamma_{\mathcal F}$ that bound $U^\mathrm{cl}_{\mathcal F_A}$ and $U^\mathrm{cl}_{\mathcal F_C}$,
so each remaining arc $I$ is bounded by gluing points where outer legs of
$\Gamma^W_{A:CD}$ or $\Gamma^W_{C:AD}$ attach (see the green path Figure~\ref{fig: monogam classical cont}). But since $\ell$ starts on the $\mathcal{F}$-side and must reach the $D$-side, it must cross one of these outer legs.
However these legs are precisely $\Gamma_{\mathrm{out}}\subset\Gamma$, so $\ell$ intersects $\Gamma$ in either case. 

\medskip\noindent\emph{Case 2: $\ell(0)$ is adjacent to $\mathcal A$.}
If instead $\ell$ begins on $\mathcal{A}$, then there are two cases. If the components of $\Gamma^W_{A:CD}$ that generates an EWCS admissible component for $\mathcal{A}$ has zero intersect with $\mathcal{E(F)}$ then necessarily this surface is unaffected by the excision above and so remains in tact in $\Gamma$ (specifically it is contained within $\Gamma_{\mathrm{out}}$): thus $\ell$ must intersect this to reach $D$ by definition of EWCS. If instead this adjacent portion of $\Gamma^W_{A:CD}$ intersects $\gamma_{\mathcal{F}}$ then there is a path from $x$ to a point in $\mathcal{F}$ that nowhere crosses $\Gamma$. But we have proved above all $\ell$ beginning on $\mathcal{F}$ and ending on $D$ must cross $\Gamma$ and so paths beginning on $\mathcal{A}$ must also intersect $\Gamma$ to reach $D$.

\medskip\noindent\emph{Case 3: $\ell(0)$ is adjacent to $\mathcal C$.} Identical to case 2 with $\Gamma^W_{A:CD}\to\Gamma^W_{C:AD}$ and $\mathcal{A}\to\mathcal{C}$.

\noindent In all cases we have shown the intersection of such an $\ell$ proving the lemma.

From \eqref{eq:Gamma-support-boundary} and \eqref{eq:Gamma-separates}, $\Gamma$ satisfies the EWCS
admissibility for $(AC{:}D)$ (bulk anchoring on $\gamma_B$ and separation into regions homologous to $AC$ and $D$).
Therefore
\begin{equation}\label{eq:EW_ACD_upper}
\frac{\Area(\Gamma)}{4G_N}\ \ge\ E_W(AC:D).
\end{equation}

\paragraph{Step 5: Area bookkeeping and conclusion.}
By \eqref{eq:additivity-L}, \eqref{eq:SA_SC_upper}, and \eqref{eq:EW_ACD_upper},
\[
\frac{\Area^{\oplus}(\mathbf E)}{4G_N}
=\frac{\Area(\partial r_A)}{4G_N}+\frac{\Area(\partial r_C)}{4G_N}
\ \ge\ S_{\mathcal{F}_A}+S_{\mathcal{F}_C},
\qquad
\frac{\Area(\Gamma)}{4G_N}\ \ge\ E_W(AC:D),
\]
whence \eqref{eq:target-ineq} and the theorem follows.
\end{proof}

\section{Proof: One-way strong superadditivity of classical correlations}
\label{SSA formal proof}
\jwssa*

This proof is brief in explanation as it is very similar to the proof of Theorem \ref{thm:mono-unmeasured} (see Appendix \ref{monogamy unmeasured party proof}).

\begin{proof}
For a mixed state on $ABCD$, we expand~\eqref{eq: one way classical ssa} in terms of HEE and EWCS to obtain
\[
S_{AB} + E_W(A:BDE) + E_W(B:ACE) \stackrel{?}{\geq} S_A + S_B + E_W(AB:E).
\]
Let $\gamma_{AB}$ be the RT surface for $AB$ and let
$\Gamma^W_{A:BDE}$ and $\Gamma^W_{B:ACE}$ be the EWCS minimizers for $(A:BDE)$ and $(B:ACE)$.

\paragraph{Step 1: Bridged and Unbridged Components.} 
Let $\mathcal{A}$ and $\mathcal{B}$ be the components of $A$ and $B$ that are not bridged in $AB$, and let $\mathcal{F}_A = A\setminus\mathcal{A}$ and $\mathcal{F}_B = B\setminus\mathcal{B}$ be the bridged components with $\mathcal{F} = \mathcal{F}_A\cup\mathcal{F}_B$. Consequently, the entropy splits additively: $S_{AB} = S_{\mathcal{A}}+S_{\mathcal{B}}+S_{\mathcal{F}}$. Moreover this implies we have $S_A = S_\mathcal{A} + S_{\mathcal{F}_A}$ and similarly for $B$ and so we can cancel unbridged entropic terms leading to
\begin{equation}
S_{\mathcal{F}} + E_W(A:BDE) + E_W(B:ACE) \stackrel{?}{\geq} S_{\mathcal{F}_A} + S_{\mathcal{F}_B} + E_W(AB:E),
\end{equation}
whose proof implies our theorem.

\begin{figure}[ht]
    \centering
\scalebox{1}{
\input{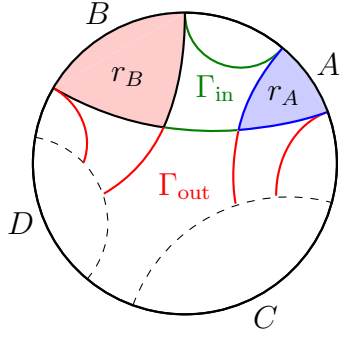}
}
    \caption{Continuation of case in Figure \ref{Strong superadditivtg figure} showing regions and surfaces defined in the proof. We have $\Gamma_{\text{in}}$ as the green curves, and $\Gamma_{\text{out}}$ as the red curves. Homologous regions to $A$ and $B$ shaded in blue and red respectively.}
    \label{fig: ssa labeled}
\end{figure}

\paragraph{Step 2: Carving out $\mathbf{\mathcal{F}_A}$ and $\mathbf{\mathcal{F}_B}$  homologous regions.} Restrict $\Gamma^W_{A:BDE}\cup\Gamma^W_{B:ACE}$ to $\mathcal E(\mathcal F)$. Its complement in $\mathcal{E(F)}$ has (at least two) components whose bulk
boundaries are homologous (within $\mathcal{E(F)}$) to $\mathcal{F}_A$ and to $\mathcal{F}_B$; denote the bulk closure of these regions by adjoining the relevant EWCS by $r_A$ and $r_B$.\footnote{If $\Gamma^W_{A:BDE}$ and $\Gamma^W_{B:ACE}$ overlap tangentially along an arc, use the two copies with their natural multiplicities to contribute one to $\partial r_A$
and the other to $\partial r_B$.}
Then $r_A\sim \mathcal{F}_A$ and $r_B\sim\mathcal{F}_B$ and so by RT minimality,
\begin{equation}\label{eq:SA_SB_upper_SSA}
\frac{\Area(\partial r_A)}{4G_N}\ \ge\ S_{\mathcal{F}_A},
\qquad
\frac{\Area(\partial r_B)}{4G_N}\ \ge\ S_{\mathcal{F}_B}.
\end{equation}

\paragraph{Step 3: Building an admissible competitor $\mathbf{\Gamma}$ for $\mathbf{E_W(AB:D)}$.} 
Introduce the \emph{LHS multisurface}
\[
\mathbf L\;:=\;\gamma_{\mathcal{F}}\ \oplus\ \Gamma^W_{A:BDE}\ \oplus\ \Gamma^W_{B:ACE},
\]
and set the \emph{excised multisurface}
\[
\mathbf E\;:=\;\partial r_A\ \oplus\ \partial r_B,
\]
Decompose the remainder of $\mathbf L$ into an ``outer'' and an ``inner'' part:
\begin{align}
\Gamma_{\mathrm{out}}&:=\ \abs{\bigl(\Gamma^W_{A:BDE}\ \oplus\ \Gamma^W_{B:ACE}\bigr)}\ \cap\ \bigl(\mathcal E(ABE)\setminus \mathcal{E(F)}\bigr),\\
\Gamma_{\mathrm{in}}&:=\ \abs{(\mathbf L\ominus \mathbf E)}\ \cap\ \gamma_{\mathcal{F}}\ ,
\end{align}
and define the candidate
\begin{equation}
\Gamma\ :=\ \Gamma_{\mathrm{out}}\ \cup\ \Gamma_{\mathrm{in}}.
\end{equation}
which by construction satisfies,
\begin{equation}
\Gamma\subset \mathcal E(ABE),\qquad
\Gamma_{\mathrm{out}}\cap\Gamma_{\mathrm{in}}=\varnothing,
\qquad
\text{and}\quad \partial_{\Sigma}\Gamma\subset \gamma_C\cup\gamma_D.
\end{equation}
Identically to Theorem \ref{thm:mono-unmeasured} it follows that $\Gamma \oplus \mathbf{E} = \mathbf{L}$. Moreover, by invoking the same separation lemma in the proof of the said theorem with appropriate replacements of system definitions it follows that $\Gamma$ satisfies the EWCS
admissibility for $(AB:E)$. Therefore
\begin{equation}\label{eq:EW_ACD_upper_SSA}
\frac{\Area(\Gamma)}{4G_N}\ \ge\ E_W(AB:E),
\end{equation}
and the theorem follows.
\end{proof}

\bibliographystyle{JHEP}
\bibliography{ref}

\end{document}